\DeclareFontFamily{U}{tipa}{}
\DeclareFontShape{U}{tipa}{m}{n}{<->tipa10}{}
\newcommand{\arc@char}{{\usefont{U}{tipa}{m}{n}\symbol{62}}}%
\newcommand{\arc}[1]{\mathpalette\arc@arc{#1}}
\newcommand{\arc@arc}[2]{%
  \sbox0{$\m@th#1#2$}%
  \vbox{
    \hbox{\resizebox{\wd0}{\height}{\arc@char}}
    \nointerlineskip
    \box0
  }%
}
\newcommand{\De}{\Delta}
\newcommand{\vp}{\varphi}
\newcommand{\om}{\omega}
\newcommand{\tss}{\hspace{1pt}}
\newcommand{\CC}{\mathbb{C}\tss}
\newcommand{\ZZ}{\mathbb{Z}\tss}
\renewcommand{\theequation}{\arabic{section}.\arabic{equation}}
\newtheorem{thm}{Theorem}[section]
\newtheorem{lemma}[thm]{Lemma}
\newtheorem{prop}[thm]{Proposition}
\newtheorem{cor}[thm]{Corollary}
\newtheorem{conj}[thm]{Conjecture}
\newtheorem{defprop}[thm]{Definition-Proposition}
\newtheorem{defthm}[thm]{Definition-Theorem}
\theoremstyle{definition}
\newtheorem{definition}[thm]{Definition}
\newtheorem{remark}[thm]{Remark}
\newtheorem{example}[thm]{Example}
\newcommand{\bth}{\begin{thm}}
\renewcommand{\eth}{\end{thm}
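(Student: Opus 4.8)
The ``statement'' to be established is not a mathematical assertion but the pair of macro assignments \verb|\newcommand{\bth}{\begin{thm}}| and \verb|\renewcommand{\eth}{\end{thm}}|. Read as a claim, it asserts that these two abbreviations are \emph{well formed} under LaTeX2e's definition rules and are \emph{behaviourally equivalent} to the underlying environment delimiters, so that every later occurrence of the pattern \verb|\bth| $\dots$ \verb|\eth| typesets exactly as \verb|\begin{thm}| $\dots$ \verb|\end{thm}|. The plan is to argue entirely at the level of TeX's token stream, in three stages: the definability preconditions, the expansion equivalence, and the control of global side effects.

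First I would verify the definability preconditions imposed by the two defining primitives. The command \verb|\newcommand| refuses to overwrite an existing control sequence, so its use for \verb|\bth| is licit precisely because \verb|\bth| is undefined up to this point; this is checked by inspecting every preceding \verb|\newcommand| and \verb|\DeclareFontShape| line, none of which introduces that name. Dually, \verb|\renewcommand| \emph{requires} its target to be already defined, and here this is forced by the earlier \verb|\usepackage{amssymb}|, which supplies \verb|\eth| as a math symbol. Thus the asymmetry between \verb|\newcommand| and \verb|\renewcommand| is not cosmetic but is dictated by the two existence states of the names, and interchanging them would abort compilation.

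Next I would establish the expansion equivalence. Since each definition produces a parameterless macro whose replacement text is spliced verbatim into the input before the environment machinery runs, the scanner that later meets \verb|\bth| replaces it token-for-token by \verb|\begin{thm}|, and \verb|\eth| by \verb|\end{thm}|. Because the \verb|thm| environment was legitimately created earlier by \verb|\newtheorem{thm}{Theorem}[section]|, the expanded tokens invoke a genuine counter-bearing environment, so the abbreviated form inherits all of the numbering, cross-referencing and \verb|\theoremstyle| behaviour of the full form; in particular the \verb|\@currenvir| check performed by \verb|\end{thm}| is satisfied, since \verb|\begin{thm}| set it. No grouping beyond that of the environment itself is introduced by the abbreviations, so the brace and group balance seen by \verb|\begin| and \verb|\end| is identical in the two presentations.

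The hard part is not the expansion argument, which is routine, but the \emph{global side effect} of the redefinition: after this line the eth symbol is no longer reachable under the name \verb|\eth|, so one must confirm that no later mathematical passage relies on it in its original meaning (a \verb|\eth| in math mode would now attempt to close a theorem environment, which would fail catastrophically). I would discharge this obstacle by a global search of the document body, and, should any genuine need for the symbol survive, isolate the clash by saving it under a fresh name with \verb|\let| immediately before the \verb|\renewcommand|, rather than by abandoning the abbreviation. With that verification in place, the claimed equivalence holds throughout the document.
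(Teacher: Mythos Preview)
Your analysis is technically sound, but there is nothing to compare it against: the ``statement'' is a pair of preamble macro definitions, not a theorem, and the paper contains no proof of it whatsoever. You have correctly identified this and produced a careful meta-level argument about \TeX\ expansion and definability preconditions; that argument is accurate in every detail I can check (the asymmetry between \texttt{\textbackslash newcommand} and \texttt{\textbackslash renewcommand} is indeed forced by the prior existence of \texttt{\textbackslash eth} from \texttt{amssymb}, and the expansion equivalence is immediate).

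One observation you could add to sharpen the side-effect discussion: a scan of the document body shows that neither \texttt{\textbackslash bth} nor \texttt{\textbackslash eth} is ever actually invoked---every theorem in the paper is opened and closed with the full \texttt{\textbackslash begin\{thm\}} and \texttt{\textbackslash end\{thm\}}. So the abbreviations are dead code, and the clobbering of the eth symbol is harmless not because the symbol is unneeded (which you verify) but, more strongly, because the redefinition is never exercised at all. This renders your proposed \texttt{\textbackslash let}-based mitigation unnecessary in this particular document, though it remains the right general remedy.

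In short: your proposal is correct as a piece of \TeX nical reasoning, but the comparison the task requests is vacuous because the paper offers no corresponding argument.
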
}
\newcommand{\bpr}{\begin{prop}}
\newcommand{\epr}{\end{prop}}
\newcommand{\ble}{\begin{lem}}
\newcommand{\ele}{\end{lem}}
\newcommand{\bco}{\begin{cor}}
\newcommand{\eco}{\end{cor}}
\newcommand{\bde}{\begin{defin}}
\newcommand{\ede}{\end{defin}}
\newcommand{\bex}{\begin{example}}
\newcommand{\eex}{\end{example}}
\newcommand{\bre}{\begin{remark}}
\newcommand{\ere}{\end{remark}}
\newcommand{\bcj}{\begin{conj}}
\newcommand{\ecj}{\end{conj}}
\newcommand{\bal}{\begin{aligned}}
\newcommand{\eal}{\end{aligned}}
\newcommand{\beq}{\begin{equation}}
\newcommand{\eeq}{\end{equation}}
\newcommand{\ben}{\begin{equation*}}
\newcommand{\een}{\end{equation*}}
\newcommand{\bpf}{\begin{proof}}
\newcommand{\epf}{\end{proof}}
\def\beql#1{\begin{equation}\label{#1}}
\newcommand{\RR}{\mathbb{R}}
\newcommand{\nln}{\nonumber\\}
\newcommand{\ms}{\medskip}
\newcommand{\al}{\alpha}
\newcommand{\be}{\beta}
\newcommand{\ga}{\gamma}
\newcommand{\ep}{\epsilon}
\newcommand{\de}{\delta}
\newcommand{\ka}{\kappa}
\newcommand{\la}{\lambda}
\newcommand{\dis}[1]{$\displaystyle{#1}$}
\newcommand{\ket}[1]{ | {#1} \rangle}
\newcommand{\vev}[1]{ \langle {#1} \rangle}
\newcommand{\bg}{\bm{g}}
\newcommand{\ee}{\bm{e}}
\newcommand{\bnu}{\bm{\nu}}
\newcommand{\bA}{\bm{A}}
\newcommand{\secref}[1]{\S\,\ref{#1}}
\newcommand{\appref}[1]{App.\,\ref{#1}}
\newcommand{\ssecref}[1]{\S\,\ref{#1}}
\newcommand{\figref}[1]{Fig.\,\ref{#1}}
\newcommand{\tabref}[1]{Tab.\,\ref{#1}}
\newcommand{\thmref}[1]{Thm.\,\ref{#1}}
\newcommand{\propref}[1]{Prop.\,\ref{#1}}
\newcommand{\corref}[1]{Cor.\,\ref{#1}}
\newcommand{\defref}[1]{Def.\,\ref{#1}}
\newcommand{\remref}[1]{Rem.\,\ref{#1}}
\newcommand{\lemref}[1]{Lem.\,\ref{#1}}
\newcommand{\dpropref}[1]{Def.-Prop.\,\ref{#1}}
\newcommand{\dthmref}[1]{Def.-Thm.\,\ref{#1}}
\title{\Large\bf 
Rotation angles of a rotating disc  \\ 
-- A toy model exhibiting the geometric phase --
} 
\author{Takuya Matsumoto$^\ast$\,,\quad 
Hiroki Takada$^\dagger$\,, \quad 
Osami Yasukura$^\ddagger$}
\date{} 
\begin{document}

\AddToShipoutPictureFG*{%
\put(360,750){%
\parbox[t]{7cm}{%
\raggedleft
\small
Published version:\\
\textit{FORMA} \textbf{41} (2026), 13--30\\
DOI: 10.55653/forma.2026.001.002
}}
}

\maketitle

\noindent
{\it
${}^{\ast, \ddagger}$
Department of Applied Physics, 
Faculty of Engineering, 
University of Fukui, 
$3$-$9$-$1$ Bunkyo, Fukui-shi, Fukui $910$-$8507$, Japan
}
\\
{\it
${}^{\ast}$
Department of 
Fundamental Engineering for Knowledge-Based Society, 
Graduate School of Engineering, University of Fukui, 
$3$-$9$-$1$ Bunkyo, Fukui-shi, Fukui $910$-$8507$, Japan
}
\\
{\it
${}^{\dagger}$
Department of Human and Artificial Intelligent Systems (HART), 
Graduate School of Engineering, University of Fukui, 
$3$-$9$-$1$ Bunkyo, Fukui-shi, Fukui $910$-$8507$, Japan
}
\\
~\\
E-mail: {\tt \{$^\ast$takuyama, $^\dagger$takada, $^\ddagger$yasukura\}@u-fukui.ac.jp} 

\vspace{20mm}

\begin{abstract}
In this paper, we consider a simple kinematic model, which is a rotating disc on the edge of another fixed disc without slipping, and study the rotation angle of the rotating disc. The rotation angle consists of two parts, the dynamical phase $\Delta_d$ and the geometric phase $\Delta_g$. The former is a dynamical rotation of the disc itself, and the geometric motion of the disc characterizes the latter. In fact, $\Delta_g$ is regarded as the geometric phase appearing in several important contexts in physics. The clue to finding the explicit form of $\Delta_g$ is the Baumkuchen lemma, which we called. Due to the Gauss-Bonnet theorem,  in the case that the rotating disc comes back to the initial position, $\Delta_g$ is interpreted as the signed area of a two-sphere enclosed by the trajectory of the Gauss vector, which is a unit normal vector on the moving disc. We also comment on typical models sharing the common underlying structure, which include Foucault's pendulum, Dirac's monopole potentials, and Berry phase. Hence, our model is a very simple but distinguished one in the sense that it embodies the essential concepts in differential geometry and theoretical physics such as the Gauss-Bonnet theorem, the geometric phase, and the fiber bundles. 
\end{abstract}



\setcounter{footnote}{0}
\setcounter{page}{0}
\thispagestyle{empty}

\newpage

\tableofcontents

\newpage

\section{Introduction}
\label{sec:int}
\setcounter{equation}{0}

It is not easy to be aware of something that exists 
around us like the air.  
The {\it geometric phase} is a phase shift 
which an object acquires when it moves in spacetime, 
and it only depends on the geometric property of the orbit.  
Though the existence is ubiquitous in physics, 
it seemed to take a long time for us to recognize its significance. 
Interestingly, the notion has been pointed out 
in pretty different contexts in physics, 
such as classical optics \cite{Pan}, 
molecular physics \cite {Lon}, classical physics \cite{han},  
and quantum mechanics \cite{Berry1,Berry2}. 
For a comprehensive review, see \cite{SW}.  
This reflects that the geometric phase is a universal 
phenomena in physics. 

\ms 

From a mathematical point of view, the geometric phase is 
regarded as the {\it holonomy} of the fiber bundles. 
Concerning the mathematical background of differential geometry 
and topology, see \cite{KN,S, Nak, EGH}. 
Hence, the difficulty in noticing the geometric phase is that 
of {\it seeing} the fiber in the real world. 
The simplest but nontrivial example of the principal fiber 
bundle is the {\it Hopf fibration} \cite{hopf}\,, which is 
$S^1$ fibration over $S^2$ and whose total space is $S^3$\,. 

\ms 

In 1931, the same year when the Hopf fibration was proposed, 
Dirac has considered the {\it magnetic monopole potential,}  
which is the vector potential describing the magnetic nomopole, 
and shown that all electric charges should be quantized 
if the magnetic monopole exists \cite{D}. 
Dirac's quantization condition is explained as the 
topological consistency condition of the {\it gauge transformation} 
of the $U(1)$ principal fiber bundle \cite{WY1, WY2}. 
For a review of Dirac's monopole, see, for instance, \cite{Cole}. 
Furthermore, the monopole potential is entirely interpreted as 
the connection one-form of the Hopf fibration \cite{Min, Ryder}. 

\ms 

In this paper, we consider a simple kinematic model, 
which is a rotating disc on the edge of  another fixed disc 
without slipping, 
and study how the moving disc has rotated during its motion
(see \figref{fig:prob}). 
We shall decompose the total rotation angle into 
two parts, the {\it dynamical phase} $\De_d$ and 
the {\it geometric phase} $\De_g$\,. 
The former is a dynamical rotation of the disc itself, 
and the geometric motion 
of the disc characterizes the latter. 
This factor $\De_g$ is indeed regarded as the geometric phase 
that appears in several essential contexts in physics, 
as mentioned above. 
The clue to finding the explicit form of $\De_g$
(\dthmref{thm:line-int}) is 
\lemref{lem:bk}\,,  
which we call {\it Baumkuchen lemma}.


\ms 

To reveal the geometric meaning of $\De_g$\,, 
we shall introduce the Gauss map (\defref{def:gmap}), 
which defines a unit normal vector on the moving disc.  
Consequently, it is given by the line integral 
of the geodesic curvature
along the curve drawn by the Gauss vector on a two-sphere 
(\thmref{thm:De-twoint}). 
Furthermore, due to the {\it Gauss-Bonnet theorem},  
in the case that the rotating disc comes back to the original position, 
$\De_g$ turns out to be 
the area of a two-sphere enclosed by the trajectory 
of the Gauss vector.  
That is the claim of our main theorem \thmref{thm:main}
in \secref{sec:model}. 

\ms 

We will also comment on typical models sharing 
the common underlying
mathematical structure, which include Foucault's pendulum, 
Dirac's monopole potentials, and Berry phase in \secref{sec:ex}. 
In particular, our approach naturally explains the 
{\it Foucault's sine law} and leads us to propose
its generalization. 
In this sense, our model is a straightforward but outstanding one 
because it embodies the essential concepts in 
differential geometry and theoretical physics
such as the Gauss-Bonnet theorem,  
the geometric phase, and the fiber bundles. 

\ms 

This article is organized as follows.
In \secref{sec:setup}, we set up the model, 
which is a rotating disc around a fixed disc, 
and explain the question that we would like to answer. 
Then we propose our fundamental problem.  
We solve the model in \secref{sec:model}, and give 
the complete answer to the fundamental 
question, which is \thmref{thm:main} in \ssecref{subsec:thm}.   
We examine how our solution works for some concrete motions
of the rotating disc in \secref{sec:ex}. 
In \secref{sec:com}, we explain that 
some essential physical models share the common 
mathematical structure, 
which contain Foucault's pendulum, 
Dirac's monopole potentials, and Berry phase. 
In this sense, our model is {\it not} an isolated one. 
The relation between the Hopf fibration and our model
is mentioned in \secref{sec:concl}.  
\appref{app:alt-proof} is devoted to an alternative proof
of \thmref{thm:main}. 
In \appref{app:baum}, we elucidate how our Baumkuchen lemma
is related to the geodesic curvature.


\section{Set up the model}
\label{sec:setup}
\setcounter{equation}{0}

Let us start to formulate our kinematical model. 
We define the model in \ssecref{subsec:def}. 
In \ssecref{subsec:dyngeo}, we introduce 
the notion of the dynamical phase and 
the geometric phase after some observations. 

\subsection{Definition of the model}
\label{subsec:def}

We consider the following model,\footnote{
This model was initially proposed as one of the
common problems 
for the 32nd Japan Mathematics Contest and 
the 25th Japan Junior Mathematics Contest \cite{JMC}.}
which we refer to as 
{\it ``Rotation angles of a rotating disc.''}

\newpage

\begin{itembox}[l]{\bf \large 
Rotation angles of a rotating disc}
Consider a disc A with the radius $a>0$ on $z=0$ plane 
({\it i.e.} $xy$-plane) in three-dimensional space $\RR^3$
and the center is located at the origin.  
Set another disc B in $\RR^3$ with the radius $b>0$
so that it contacts disc A at 
$(a\cos\theta, a\sin \theta,0)\in \RR^3$ with 
$\theta \in \RR$\,. 
Suppose that the perpendicular line through the center of disc B must be parallel to 
or cross at some point with the $z$-axis. 
Denote the angle between $z=0$ plane and the disc B by $\beta$\,, 
and suppose that $0\leq \beta \leq\pi$\,, 
see Fig.\ref{fig:prob}\,. 
The position of disc B is entirely determined by two parameters $\theta$ and $\be$\,.  

\ms 

Let disc B rotate on the edge of disc A without slipping. 

\ms

Introduce the time parameter $t\in [0,1]\subset \RR$
to describe this motion,
and regard the angle variables 
$\theta\,, \be$ as the continuous functions of $t$\,.  
Then, the position of the rigid body, disc B, 
is determined by the map 
\begin{align}
(\theta, \beta): [0,1]\to \RR \times [0, \pi]\,, 
\quad 
t\mapsto (\theta(t), \beta(t))\,. 
\label{eq:motion}
\tag{M}
\end{align}
Set $\theta(0)=0$\,. 
Starting at $t=0$, disc B returns to the initial position at $t=1$\,. 
That is 
\begin{align}
\theta(1)=2\pi n
\quad \text{with} \quad n\in \ZZ\,, 
\quad \be(1)=\be(0)\,. 
\tag{T}
\label{eq:top}
\end{align}
We shall call the integer $\theta(1)/2\pi=n$ the 
{\it topological number}. 


\end{itembox}

\begin{figure}[htbp]
\centering
\includegraphics[keepaspectratio, scale=0.4]{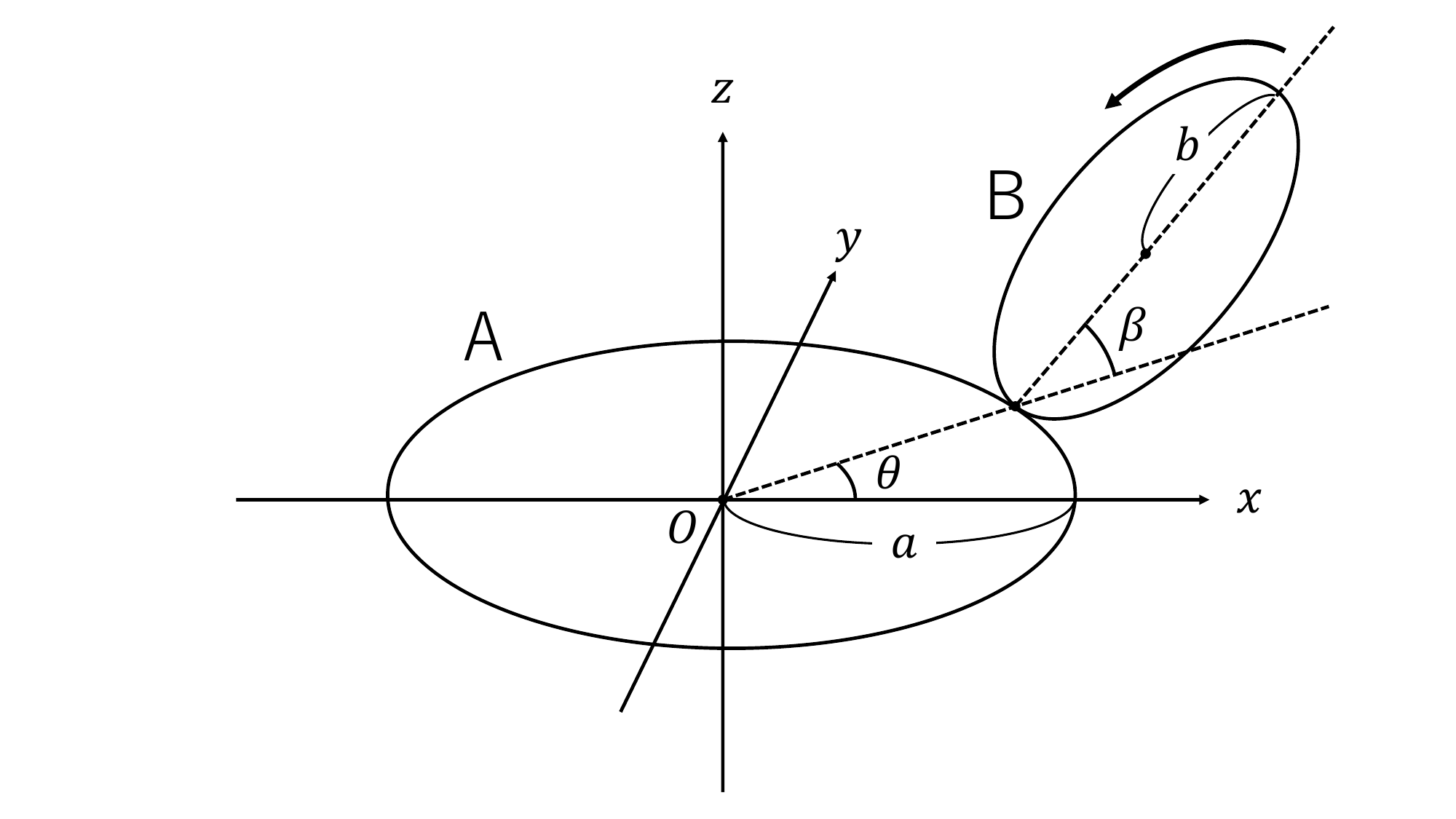}
\caption{Disc A is fixed on $xy$-plane and 
disc B 
rolls on the edge of the disc A 
without slipping. }
\label{fig:prob}
\end{figure}


\begin{remark}
\label{rem:lip}
For the functions $\theta(t)\,, \beta(t)$\,,
it is supposed to meet the following conditions. 
\begin{enumerate}
\item The continuous functions $\theta(t)\,, \beta(t)$  
satisfy the {\it Lipschitz condition}. 
That is, for any $t_1\,, t_2\in [0,1]$\,, there exists a positive 
constant $C_f$ such that 
\begin{align}
\left|f(t_1)-f(t_2)\right|\leq C_f |t_1-t_2| \,,
\end{align}
where $f$ is either $\theta$ or $\be$\,. 

\item Both $\theta(t)$ and $\beta(t)$ are differentiable except 
finite number of points on $(0,1)$.  
The derivatives $\theta'(t)\,, \be'(t)$ are piecewise continuous
on $[0,1]$\,.  


\end{enumerate}  
These two conditions are physically natural. 
Since $t\in [0,1]$ is an auxiliary parameter to describe the motion 
\eqref{eq:motion}\,, 
it is always possible to assume that  $\theta(t)\,, \be(t)$ 
satisfy the above conditions. 
\end{remark}

\ms 

Our interest is the next question. 

\ms 

\begin{itembox}[l]{\bf Question}
How much does the disc B rotate in this motion from 
$t=0$ to $t=1$? 
Define and determine the rotation angle $\De$\,. 
\end{itembox}

\medskip 

\subsection{Dynamical phase and geometric phase}
\label{subsec:dyngeo}

Before considering the generic cases, 
let us begin to examine some elementary cases. 
We will observe that the rotation angle $\De$ 
consists of two parts, {\it i.e.}, 
the {\it dynamical} phases and the {\it geometric} phases.  

\subsubsection{Observations}
\label{subsubsec:obs}

The typical cases are that the angle $\beta$ is constant
in the motion \eqref{eq:motion}\,. 
Here, we shall have a look at three cases,
(i) $\beta=0$\,, (ii) $\be=\pi/2$\,, and (iii) $\be=\pi$\,. 
\begin{itemize}
\item[(i)] In the first case $(\theta, \be)=(2\pi t, 0)$\,, 
two discs A and B are on the same plane, $z=0$\,. 
The contacting point of two discs moves counterclockwise,
starting from $(a,0,0)$ and turning back to the same point. 
The length of the contacting point traveling is $2\pi a$\,. 
Hence, dividing it by the radius of the disc B, 
we get the rotation angle $2\pi b/a$\,. 
However, in addition to this angle, 
disc B turns around disc A by $2\pi$\,,  
which equals the rotation angle of disk B 
if discs A and B are glued at the contacting point and 
they rotate together. 
Thus, the total rotation angle is given by 
\begin{align}
\De=\frac{2\pi a}{b} +2\pi \,.  
\end{align}
In particular, if two radii of discs are equal $a=b$\,, then 
it reduces to 
\begin{align}
\De=2\pi+2\pi=4\pi \,,   
\end{align}
which means that disc B is rotating {\it twice}! 
The readers can confirm this fact experimentally 
by using two coins on a desk. 
Put one coin on a desk and fix it with one hand. 
Then, turn another around the fixed one so that it does not slip on the edge.
You will see that the moving coin rotates twice when it comes back to the original position.

\item[(ii)] The second case $(\theta, \be)=(2\pi t, \pi/2)$ corresponds that disc B is vertical to disc A. 
In this case, the rotation angle of disk B is $0$  
if it is slipping at the contacting point.
Hence, since disk B is not slipping at the contacting point, it totally rotates by
\begin{align}
\De=\frac{2\pi a}{b}\,.  
\end{align}
If $a=b$\,, it simply becomes 
\begin{align}
\De=2\pi\,.  
\end{align}
This means that disc B only rotates just {\it once}. 
It is also possible to see this fact with two coins on a table, 
but it would require more concentration and careful treatment. 

\item[(iii)]
For the third case $(\theta, \be)=(2\pi t, \pi)$\,, 
though the contacting point walks the same length, 
$2\pi a$\,,  as the case (i), disc B rotates inside of disc A. 
This direction of the rotation is opposite to the first case. 
Thus, the total rotation angle is given by  
\begin{align}
\De=\frac{2\pi a}{b} -2\pi \,.  
\end{align}
When $a=b$\,, the rotation angle vanishes 
\begin{align}
\De=2\pi -2\pi=0\,. 
\end{align}
This can be explained by the two-coin experiment as follows. 
In this case, one coin is just upon another. 
Since they are the same size, 
we cannot move the coin above without it slipping. 
\end{itemize}

\subsubsection{Dynamical  phase and geometric phase}

The previous three examples imply that the total rotation angle 
$\De$ of disc B consists of two components.
One component is the factor $2\pi a/b$\,. 
This is the ratio of the length of the curve 
which the contacting point of two discs moves 
to the radius of disc B.  
We call this {\it the dynamical phase} and denote
it by $\De_d$\,. 
The dynamical phase commonly appears in any case. 
In general, it is given by 
\begin{defprop}
\label{prop:dyn}
The dynamical phase for the motion \eqref{eq:motion}
with the topological condition \eqref{eq:top} 
is given by 
\begin{align}
\De_d
=\frac{2\pi n a}{b}\,. 
\end{align}
\end{defprop}

\begin{proof}
For the motion \eqref{eq:motion} satisfying the topological 
condition \eqref{eq:top}\,, 
the signed length of the curve 
which the contacting point of two discs moves is 
\begin{align}
a \int_0^1 \frac{d\theta(t)}{dt}dt
=a (\theta(1)-\theta(0))=2\pi na\,. 
\end{align}
Dividing this by the radius $b$\,, we obtain $\De_d$\,.   
\end{proof}
Since $\theta(t)=2\pi t$ and the topological number is $n=1$
for the above examples (i), (ii), and (iii)\,, the dynamical phase reads 
$\De_d=2\pi a/b$\,. 

\ms 
While, another component is independent of the ratio $a/b$\,, 
and it only depends on how disc B has moved in $\RR^3$\,. 
In other words, this component reflects the geometric aspect of 
the motion \eqref{eq:motion}\,. 
Hence, we refer to this as {\it the geometric phase} and 
express it by $\De_g$\,. 
In summary, the total rotation angle $\De$ of disc B is 
decomposed as  
\begin{align}
\De&=\De_d+\De_g \,, \quad \text{with}\quad 
\nln
\De_d&=
\frac{2\pi n a}{b} 
~\text{\it (dynamical phase)}\,,\qquad 
\De_g=\text{\it (geometric phase)}\,. 
\label{eq:dandg}
\end{align}

\ms 

Therefore, our question essentially reduces to 
the following problem. 

\ms 

\begin{itembox}[l]{\bf Fundamental Problem}
Determine the geometric phase $\De_g$ for arbitrary motion
\eqref{eq:motion} with the topological condition 
\eqref{eq:top}\,. 
\end{itembox}

\section{Solving the model}
\label{sec:model}
\setcounter{equation}{0}

Since we have defined the model in the previous section, 
we shall solve the model. 
In \ssecref{subsec:baum}, we propose our key lemma 
\ref{lem:bk}, {\it Baumkuchen lamma}
and give the explicit formula for the geometric phase 
in \dthmref{thm:line-int}.  
Next, to elucidate the geometric meaning, we introduce 
the notion of the Gauss map (\defref{def:gmap}) and 
the regularized curve  (\defref{def:ga-reg}) in \ssecref{subsec:gmap}. 
In \ssecref{subsec:w12}, we prepare some crucial tools, 
the connections matrix and 
the geodesic curvature.  
With these preparations, in \ssecref{subsec:thm}, 
the main theorem \ref{thm:main} will be proved.

\subsection{Baumkuchen lemma}
\label{subsec:baum}

We shall adopt the following trivial fact as 
a cornerstone of our subsequent arguments, what we call 
{\it Baumkuchen lemma}\,. 

\begin{lemma}[Baumkuchen lemma]
\label{lem:bk}
Consider two fans $OAA'$ and $OBB'$ with the same center $O$\,,
and suppose that $A, B, O$ and $A', B', O$ are on the same lines. 
(See Fig {\rm\ref{fig:bk}}.)
Let $q$ be the length $AB=A'B'$\,, which is the width of
a piece of {\rm Baumkuchen}, and 
denote the lengths of 
the arcs $\arc{AA}$ and $\arc{BB'}$ by 
$L$ and $L_q$\,, respectively. 
Then, the center angle $\theta_B=\angle AOA'=\angle BOB'$
is given by 
\begin{align}
\frac{L-L_q}{q}=\theta_B\,. 
\end{align}
In particular, the {\rm Baumkuchen angle} $\theta_B$ is 
independent of the width $q$\,. 
\begin{figure}[htbp]
\begin{center}
\vspace{-4mm}
\includegraphics[width=100mm]{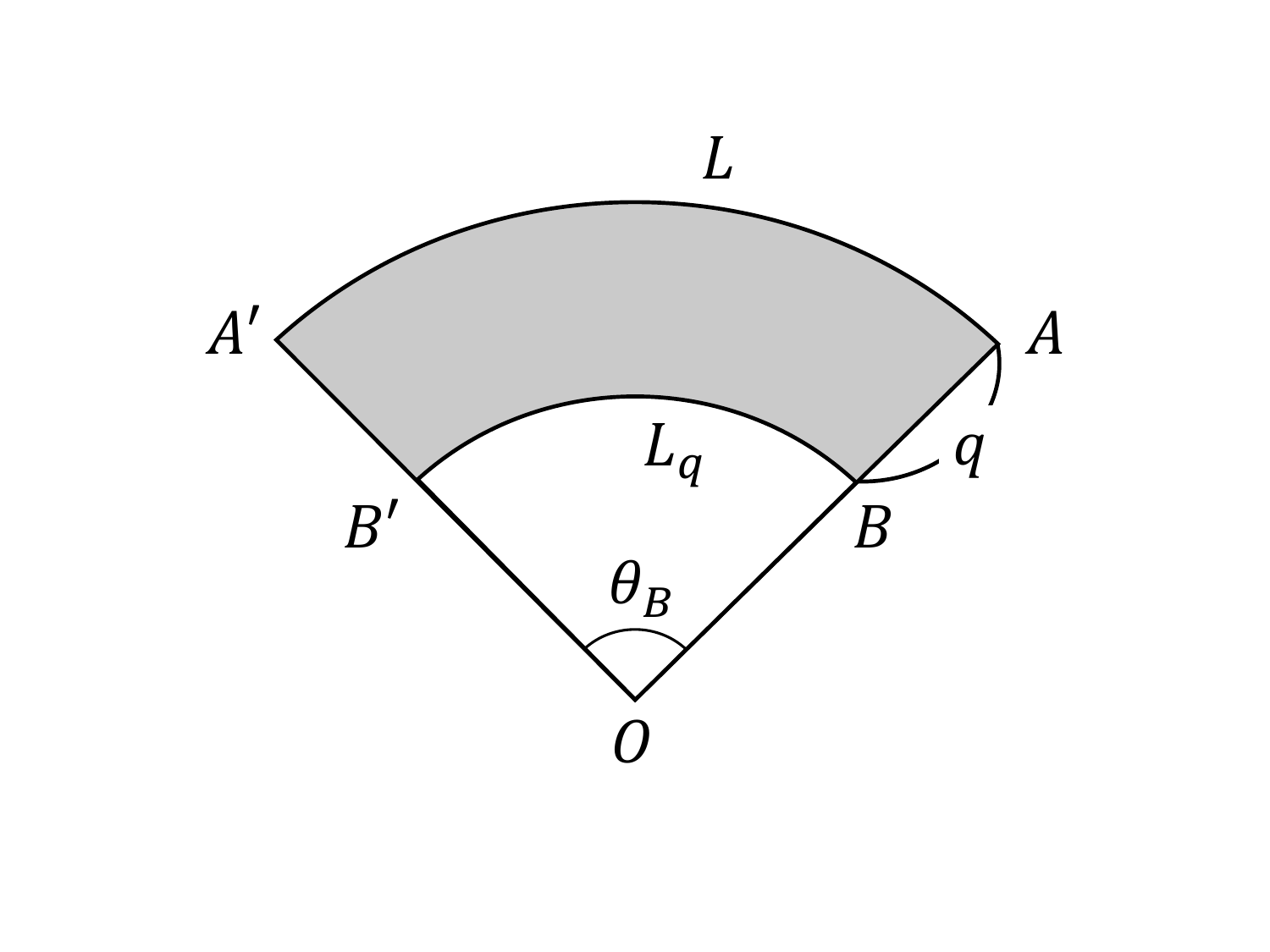}
\vspace{-13mm}
\caption{The angle $\theta_B$ of two edges $AB$ and $A'B'$
of a piece of {\it Baumkuchen} is obtained by 
dividing the difference of the arc lengths $L-L_q$ by the 
width $q$\,. }
\label{fig:bk}
\end{center}
\end{figure}
\end{lemma}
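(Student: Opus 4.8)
The plan is to reduce the statement to the elementary arc-length formula $s=r\vartheta$ for a circular arc of radius $r$ subtending a central angle $\vartheta$ measured in radians. First I would pin down notation: since $OAA'$ is a circular sector (a ``fan''), its two bounding radii are equal in length, say $r:=OA=OA'$; likewise the inner sector $OBB'$ has bounding radii of a common length $\rho:=OB=OB'$. Because $A,B,O$ are collinear with $B$ on the segment $OA$, and similarly $A',B',O$ are collinear with $B'$ on $OA'$, the Baumkuchen width satisfies $q=AB=OA-OB=r-\rho$, so that $\rho=r-q$.

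Next I would apply the arc-length formula to the two arcs $\arc{AA'}$ and $\arc{BB'}$, which by hypothesis subtend the same central angle $\theta_B=\angle AOA'=\angle BOB'$. This gives $L=r\,\theta_B$ for the outer arc and $L_q=\rho\,\theta_B=(r-q)\,\theta_B$ for the inner one. Subtracting, $L-L_q=\big(r-(r-q)\big)\theta_B=q\,\theta_B$, and dividing by $q>0$ yields $\theta_B=(L-L_q)/q$, as claimed. The identity $L-L_q=q\,\theta_B$ also makes the final assertion transparent: the value of the ratio is $\theta_B$ no matter which admissible width $q$ is chosen, since decreasing $q$ decreases $L-L_q$ proportionally. (Equivalently, one may view this as $\int_{r-q}^{r}\theta_B\,d\rho=q\,\theta_B$, i.e.\ as integrating $dL/dr=\theta_B$ over the radial interval of width $q$.)

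There is essentially no obstacle here; the lemma is a bookkeeping identity, and the only points deserving a word of care are: (a) recording that ``fan'' means the two radial edges of each sector have equal length, which is what forces both $\arc{AA'}$ and $\arc{BB'}$ to be concentric circular arcs rather than arbitrary curves; (b) keeping $\theta_B$ in radians so that $s=r\theta$ is valid; and (c) restricting to a genuine slice of Baumkuchen, $0<q<r$, so that $\rho=r-q>0$ and $L_q$ is a nondegenerate arc. I would also note in passing that the computation is stable as $q\to 0^{+}$, which is precisely the regime in which this lemma is later used to relate $\theta_B$ to the geodesic curvature.
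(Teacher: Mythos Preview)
Your proof is correct and follows essentially the same approach as the paper: set $r=OA$, use $L=r\theta_B$ and $L_q=(r-q)\theta_B$, subtract and divide by $q$. The paper's version is just a terser statement of the same computation, without the additional remarks on radians, the restriction $0<q<r$, or the integral interpretation.
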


\begin{proof}
Denote the radius of the fan by $OA=OA'=r$\,. 
Since $L=r\theta_B$ and $L_q=(r-q)\theta_B$\,, 
we get
\begin{align}
\frac{L-L_q}{q}=\frac{r\theta_B-(r-q)\theta_B}{q}=\theta_B\,. 
\end{align}
This is, of course, independent of $q$\,.  
\end{proof}

\ms 

Answer for the Fundamental Problem
is obtained by using \lemref{lem:bk}. 
The purpose here is to prove the following theorem. 

\ms

\begin{defthm}
\label{thm:line-int}
The geometric phase of the motion \eqref{eq:motion} is given by 
\begin{align}
\label{eq:thm-line-int}
\De_g=\int_{0}^{1} \cos\be(t) \frac{d\theta(t)}{dt} dt \,. 
\end{align}
\end{defthm}

\ms 

\begin{proof}
First, we decompose the time interval $[0,1]$ into 
a mesh consisting of $N$ segments such that 
\begin{align}
0=t_0<t_1< \cdots < t_k <\cdots < t_{N-1} < t_N =1  \,. 
\end{align}
The typical choice of a mesh would be given by 
\begin{align}
t_k=\frac{k}{N} \qquad \text{with} \qquad 
k=0\,,1\,, \cdots, N\,. 
\end{align}

\ms 

Second, we apply Baumkuchen lemma \ref{lem:bk}
for each segment $[t_k, t_{k+1}]$ with $0\leq k\leq N-1$\,. 
On the segment $[t_k, t_{k+1}]$\,, we may assume that 
$\be(t)$ is constant for a sufficiently large $N$\,.
Let $B$ be the center of disc B, $P$ the contacting point 
of discs A and B at $t=t_k$\,, and they respectively move to 
$B'$ and $P'$ at $t=t_{k+1}$\,.   
Now, we must notice that there is a Baumkuchen with two arcs
$\arc{BB'}$ and $\arc{PP'}$\,. 
The infinitesimal motion of dics B is illustrated in 
\figref{fig:inf_bk_cf}.  
\begin{figure}
\centering
\begin{subfigure}{0.45\columnwidth}
\centering
\includegraphics[width=1.1\columnwidth]{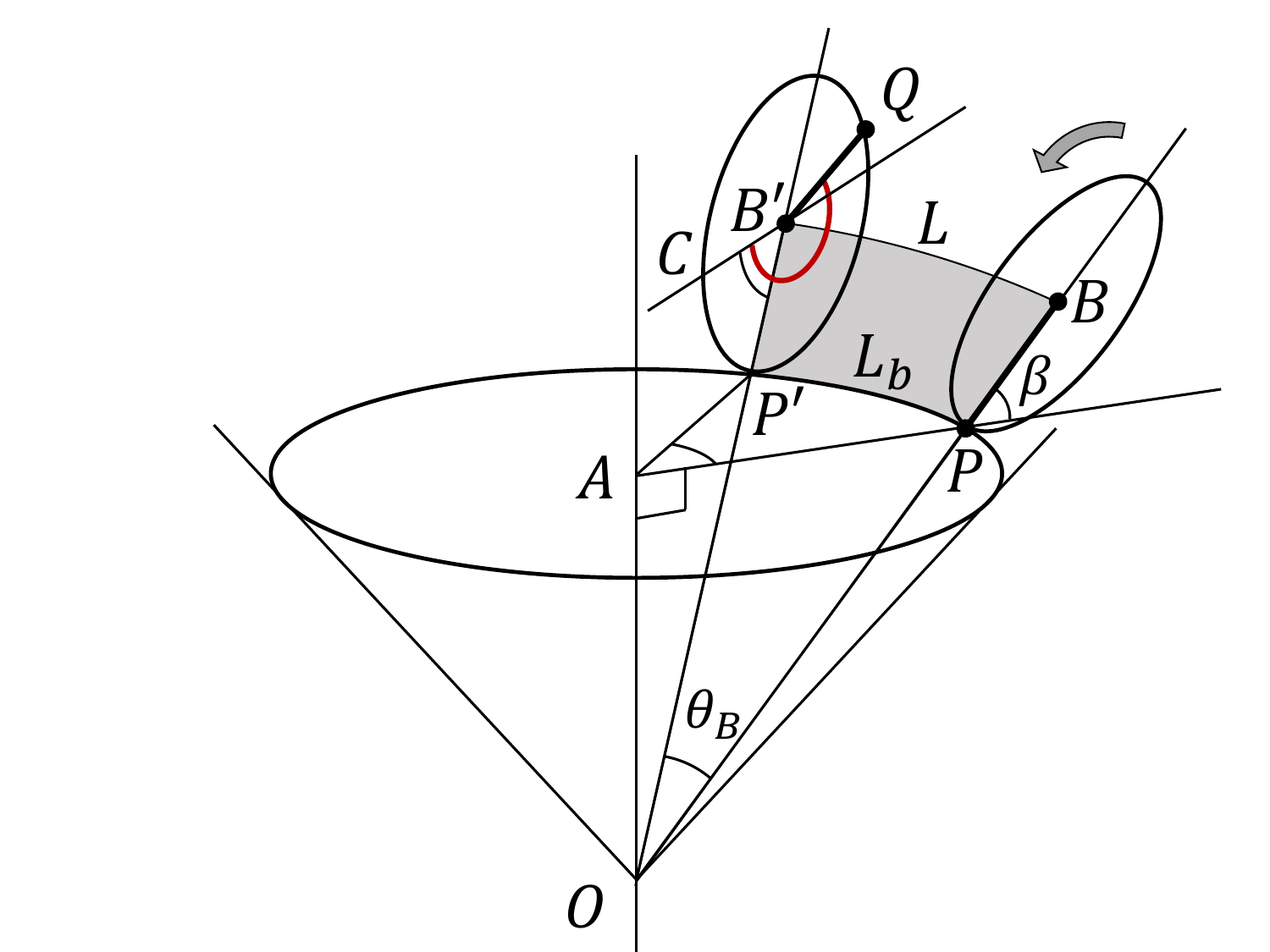}
\caption{Baumkuchen appears as a part of the cone. 
The angle $\angle PAP'$ is $\theta(t_{k+1})-\theta(t_k)$\,. 
\phantom{~~~~~~~~~~~}
}
\label{fig:inf_bk_cone}
\end{subfigure}
\hspace{3mm}
\begin{subfigure}{0.45\columnwidth}
\centering
\includegraphics[width=1.1\columnwidth]{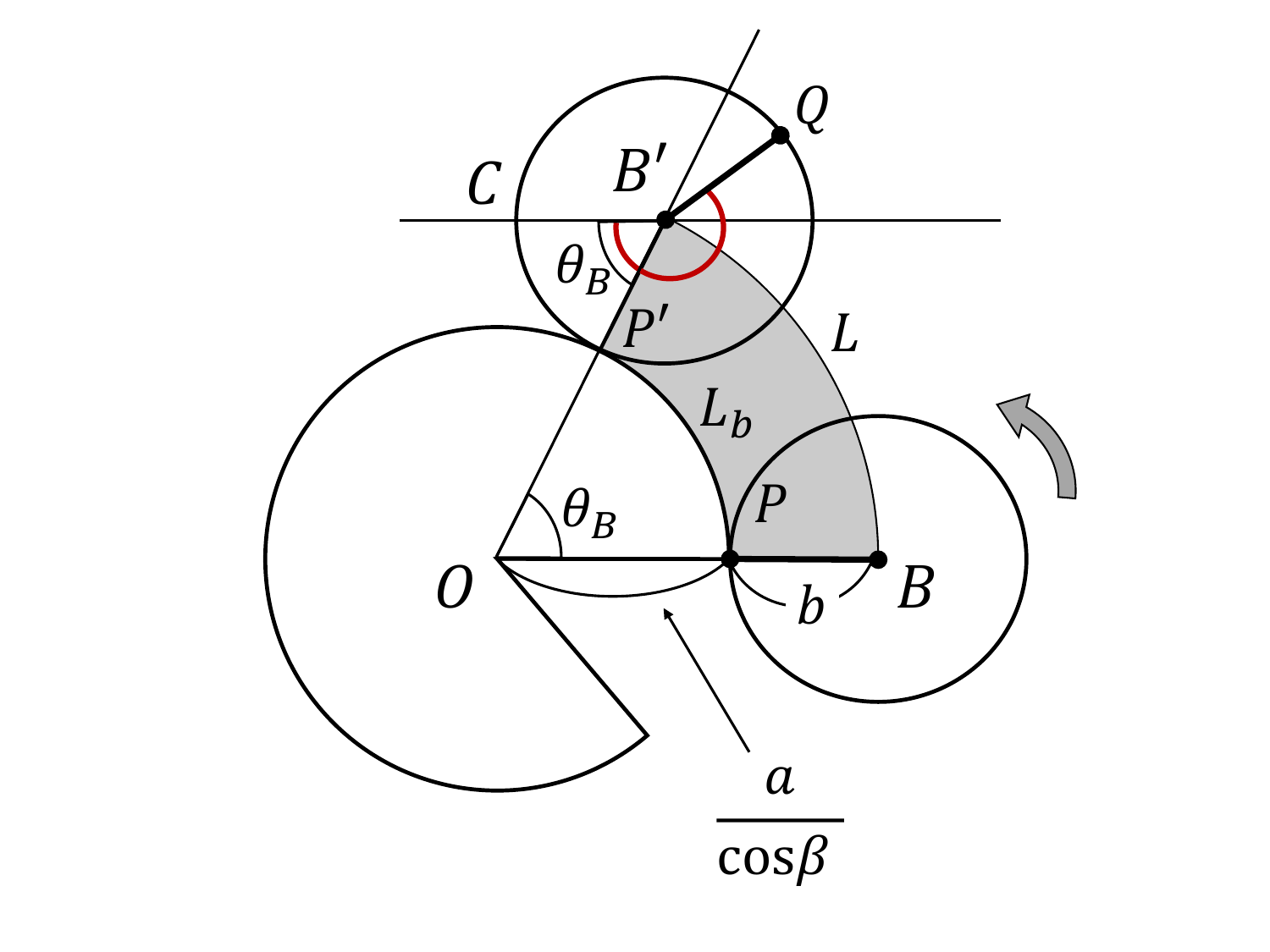}
\caption{
Baumkuchen is in the fan, an open cone in the flat space. 
Here, $CB'$ is parallel to $OB$ and hence 
$\theta_B=\angle POP'=\angle CB'P'$\,. 
}
\label{fig:inf_bk_flat}
\end{subfigure}
\caption{The infinitesimal motion of disc B in $[t_k, t_{k+1}]$\,. 
Cutting the cone in \figref{fig:inf_bk_cone} 
along the line
$OP$\,, we obtain \figref{fig:inf_bk_flat}.  
The Baumkuchen is indicated by grey. 
The total rotation angle $\angle CB'Q$ in red is given by the sum of the Baumkuchen angle 
$\theta_B$ and the dynamical angle $\angle P'B'Q$\,.
The point $P$ at $t_k$ on disc B travels to the point $Q$ at $t_{k+1}$\,.
}
\label{fig:inf_bk_cf}
\end{figure}
The infinitesimal geometric phase of disc B is identified with 
the Baumkuchen angle $\theta_B$\,. 
Since $AP=a\,, PB=b$ in \figref{fig:inf_bk_cone},  
the arc lengths $\arc{BB'}\,, \arc{PP'}$  read respectively as 
\begin{align}
L&=\arc{BB'}=(a+b\cos\be(t_k))(\theta(t_{k+1})-\theta(t_k))\,, 
\nln
L_b&=\arc{PP'}=a(\theta(t_{k+1})-\theta(t_k))\,. 
\end{align}
Here, we have regarded the radius of disc B
as the width of Baumkuchen, namely $q=b$\,.  
Thus, by the Baumkuchen lemma \ref{lem:bk}\,, 
the geometric phase can be calculated as 
\begin{align}
\theta_B=\frac{L-L_b}{b}
=\cos\be(t_k)(\theta(t_{k+1})-\theta(t_k))\,. 
\label{eq:inf-bk}
\end{align}

\ms 

Third, we obtain 
the total geometric phase $\De_g$ by summing up 
these infinitesimal angles 
for all $k$\,. 
To be precise, 
define $\be^{\text{out}}_k\,, \be^{\text{in}}_k$ 
for $0\leq k \leq N-1$ by 
\begin{align}
\cos \be^{\text{out}}_k&=
\begin{cases}
{\rm max}\{ ~\cos\be(t)~|~t_k\leq t \leq t_{k+1}~\}  & \text{if}\quad \theta(t_{k})\leq \theta(t_{k+1})\\
{\rm min}\{ ~\cos\be(t)~|~t_k\leq t \leq t_{k+1}~\}  &  \text{if}\quad\theta(t_{k})> \theta(t_{k+1})
\end{cases}
\,, 
\nln
\cos \be^{\text{in}}_k&=
\begin{cases}
{\rm min}\{ ~\cos\be(t)~|~t_k\leq t \leq t_{k+1}~\}  & \text{if}\quad \theta(t_{k})\leq \theta(t_{k+1})\\
{\rm max}\{ ~\cos\be(t)~|~t_k\leq t \leq t_{k+1}~\}  &  \text{if}\quad\theta(t_{k})> \theta(t_{k+1})
\end{cases}\,.  
\end{align}
Let $\De_g^{(k)}$ be the infinitesimal geometric phase 
for the time segment $[t_k, t_{k+1}]$\,. 
By definition, the following holds, 
\begin{align}
\cos\be^{\text{in}}_k\, (\theta(t_{k+1})-\theta(t_k))
\leq 
\De_g^{(k)}
\leq 
\cos\be^{\text{out}}_k\, (\theta(t_{k+1})-\theta(t_k))\,. 
\end{align}
Summing up for $k$\,, we get 
\begin{align}
\De_{g, N}^{\text{in}} \leq \De_{g} \leq \De_{g, N}^{\text{out}}\,,  
\end{align}
where we have denoted 
\begin{align}
\De_{g, N}^{\text{in}}&=\sum_{k=0}^{N-1}
\cos\be^{\text{in}}_k\, (\theta(t_{k+1})-\theta(t_k)) \,, 
\qquad 
\De_{g}=\sum_{k=0}^{N-1}\De_g^{(k)}\,, 
\nln
\De_{g, N}^{\text{out}}&=\sum_{k=0}^{N-1}
\cos\be_k^{\text{out}}\, (\theta(t_{k+1})-\theta(t_k))\,. 
\end{align}
Both $\De_{g, N}^{\text{in}}$ and $\De_{g, N}^{\text{out}}$ are the bounded monotone
sequences with respect to $N$\,, and they converge to the same
value. 
See \figref{fig:De_inout}\,. 
\begin{figure}
\centering
\begin{subfigure}{0.3\columnwidth}
\centering
\includegraphics[width=\columnwidth]{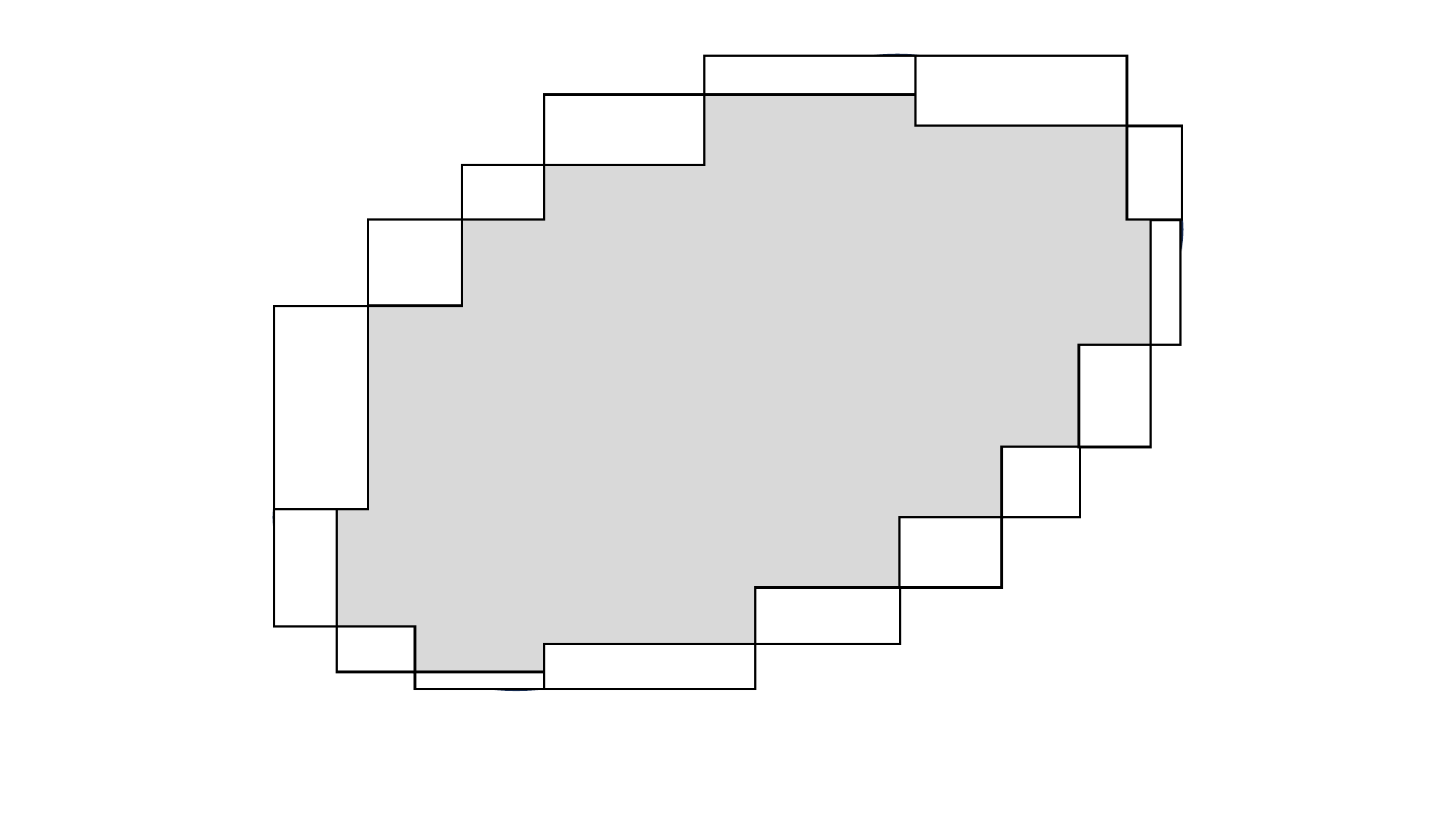}
\caption{Area measured by $\De_{g,N}^{\rm in}$\,.
}
\label{fig:De_int}
\end{subfigure}
\hspace{3mm}
\begin{subfigure}{0.3\columnwidth}
\centering
\includegraphics[width=\columnwidth]{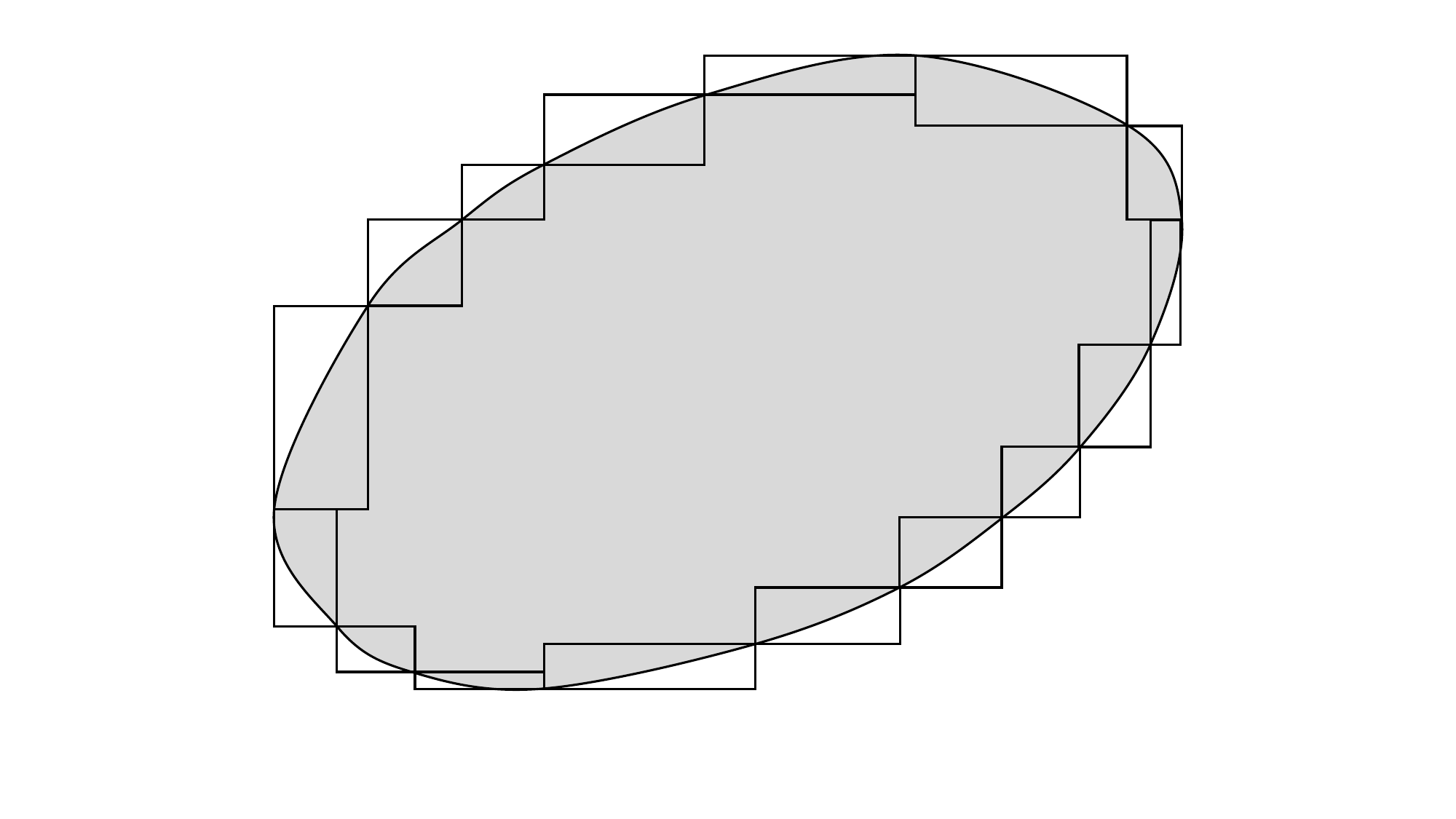}
\caption{Area measured by $\De_{g}$\,.
}
\label{fig:De_ture}
\end{subfigure}
\hspace{3mm}
\begin{subfigure}{0.3\columnwidth}
\centering
\includegraphics[width=\columnwidth]{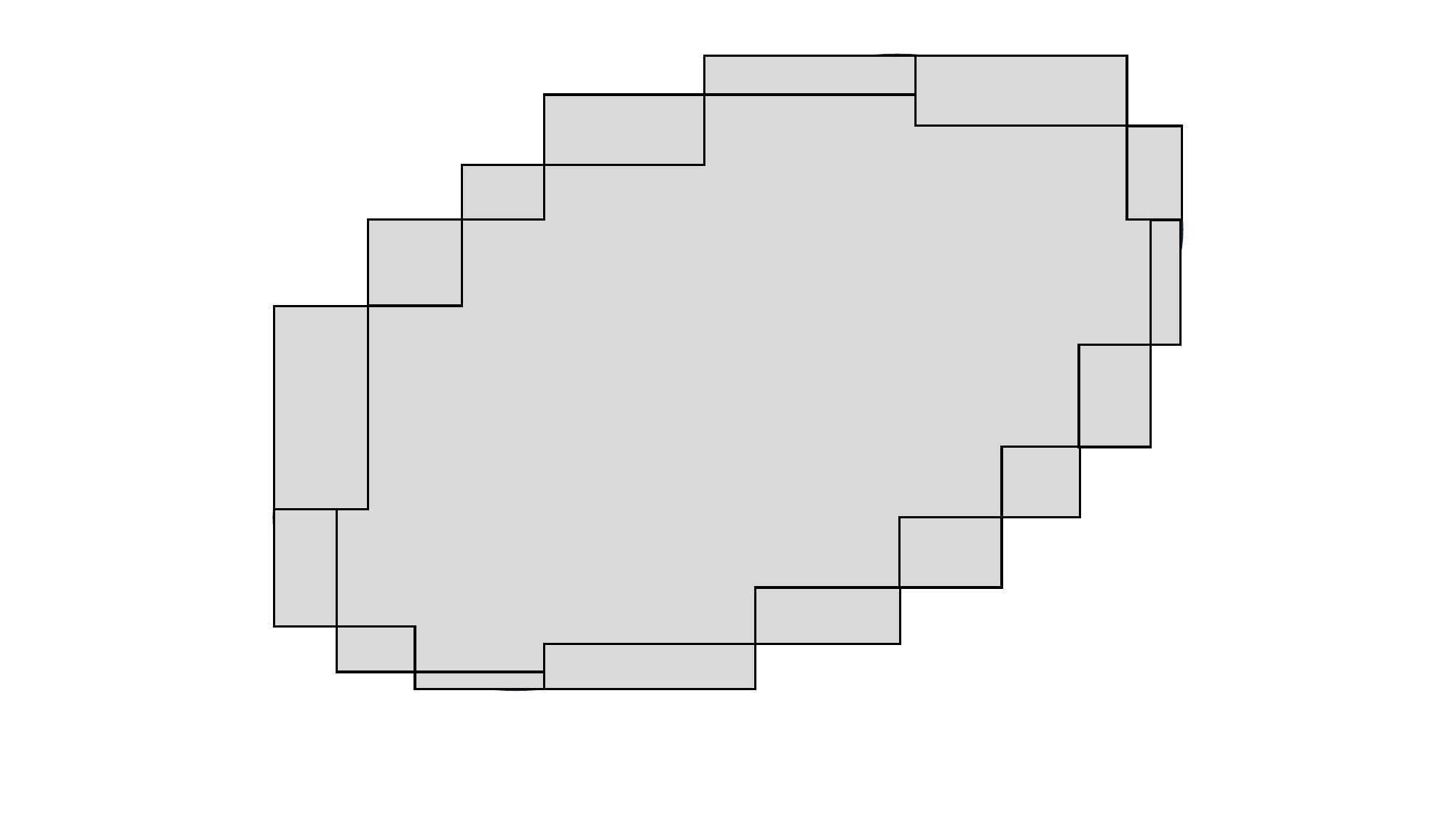}
\caption{Area measured by $\De_{g,N}^{\rm out}$\,.
}
\label{fig:De_out}
\end{subfigure}
\caption{The areas measured by 
$\De_{g,N}^{\rm in}\,, \De_{g}\,,\De_{g,N}^{\rm out}$
are indicated in grey in Figures (a), (b), (c), respectively. 
The horizontal direction is $\theta$ and the vertical is $\cos\be$\,. 
The sign of $\De_g$ is positive if the boundary is oriented clockwise. 
}
\label{fig:De_inout}
\end{figure}
In fact, it is evaluated as 
\begin{align}
\left|\De_{g, N}^{\text{out}}-\De_{g, N}^{\text{in}}\right|
&\leq  
\sum_{k=0}^{N-1}
\bigl|\cos\be_k^{\text{out}}-\cos\be_k^{\text{in}}\bigr|
\bigl|\theta(t_{k+1})-\theta(t_k)\bigr| 
\nln
&\leq 
\sum_{k=0}^{N-1}
\bigl|\be_k^{\text{out}}-\be_k^{\text{in}}\bigr|
\bigl|\theta(t_{k+1})-\theta(t_k)\bigr|\,.  
\end{align}
Since $\be(t)$ is continuous, for any $\ep>0$\,, 
there exists an enough large $N$ such that 
\begin{align}
\bigl|\be_k^{\text{out}}-\be_k^{\text{in}}\bigr|<\ep \,. 
\end{align} 
In addition, 
due to the Lipschitz conditions for $\theta(t)$ 
in \remref{rem:lip}\,, 
there exists a positive constant $C_\theta$ such that 
\begin{align}
\bigl|\theta(t_{k+1})-\theta(t_k)\bigr|\leq 
C_\theta(t_{k+1}-t_k)=\frac{C_\theta}{N}\,. 
\end{align}
Then, we have  
\begin{align}
\left|\De_{g, N}^{\text{out}}-\De_{g, N}^{\text{in}}\right|
\leq \frac{\ep C_\theta}{N}\,,  
\end{align}
which implies that 
\begin{align}
\left|\De_{g, N}^{\text{out}}-\De_{g, N}^{\text{in}}\right|
\to 0 \qquad (N\to \infty)\,. 
\end{align}
Hence, by the squeezing lemma, we get 
\begin{align}
\De_g 
=\lim_{N\to \infty} \De_{g, N}^{\text{out}}
=\lim_{N\to \infty} \De_{g, N}^{\text{in}}
\,. 
\end{align}
This proves that the geometric phase $\De_g$ is 
equal to the line integral in \eqref{eq:thm-line-int}\,. 
\end{proof}

\ms 

Though the above theorem gives us an explicit formula
for the geometric phase, the geometric meaning is opaque. 
We shall elaborate on this point in the following subsections.  

\ms 

\begin{remark}
Fig.\,\ref{fig:inf_bk_cf} 
intuitively explains the reason why the total rotation angle $\De$ of disc B is given by 
the sum of the dynamical phase $\De_d$ and the geometric phase $\De_g$ as 
claimed in \eqref{eq:dandg}\,. 
Opening up the cone along the line $OP$ in Fig.\,\ref{fig:inf_bk_cone}\,, 
we get a fan in Fig.\,\ref{fig:inf_bk_flat}\,. 
In Fig.\,\ref{fig:inf_bk_flat}\,, the total rotation angle can be decomposed into  
\begin{align}
\De=\angle CB'Q=\angle CB'P'+\angle P'B'Q\,. 
\end{align}
Since the two line $BO$ and $CB'$ are parallel, 
the angle $\angle CB'P'$ is identified with 
the geometric phase,  
\begin{align}
\De_g=\angle CB'P'=\angle POP'=\theta_B\,. 
\end{align}
On the other hand,  the angle $\angle P'B'Q$
corresponds to the dynamical phase because of 
\begin{align}
\De_d=\angle P'B'Q =\frac{\arc{P'Q}}{b}
=\frac{\arc{P'P}}{b}\,. 
\end{align}
\end{remark}

\subsection{Gauss map and the regularized curve}
\label{subsec:gmap}
\subsection*{Gauss map}

First, we introduce the unit normal vector of disc B by {\it the Gauss map}, 
which detects the motion of disc B. 
We assume that two-sphere $S^2$ is 
embedded in $\RR^3$ as 
\begin{align}
S^2=\{~x^2+y^2+z^2=1~|~\begin{pmatrix} x\\ y\\ z  \end{pmatrix}
\in \RR^3~\}\subset \RR^3 \,. 
\notag 
\end{align}
\begin{definition}[Gauss map]
\label{def:gmap}
For the two parameters $(\theta, \beta)$ in the motion 
\eqref{eq:motion}\,, 
define {\it the Gauss map} from $\RR \times [0,\pi]$
to two-sphere $S^2$ by\footnote{
The range of the angle variable $\be$ is restricted in $[0,\pi]$ 
in our paper. 
If we relax this condition from $[0,\pi]$ to $[0,2\pi]$\,, 
we may need to consider 
a torus $T^2$ rather than a sphere $S^2$ as the target space of the Gauss map.  
The geometric phases on a torus are investigated in \cite{Gho}.} 
\begin{align}
\RR \times [0,\pi]\to S^2\subset \RR^3
\,, \quad 
(\theta, \beta)
\mapsto 
\bg(\theta, \beta)=\begin{pmatrix}
\sin\be \cos\theta \\ 
\sin\be \sin\theta \\ 
-\cos\be
\end{pmatrix} \,. 
\label{eq:gauss_a}
\end{align}
The vector $\bg(\theta, \beta)$ is called {\it the Gauss vector}. 
\end{definition}

The Gauss vector is geometrically realized both in the rotating model
and two-sphere as in \figref{fig:Gmap-1} and 
\figref{fig:Gmap-2}\,, respectively. 

\ms 

\begin{figure}
\centering
\begin{subfigure}{0.45\columnwidth}
\centering
\includegraphics[width=1.3\columnwidth]{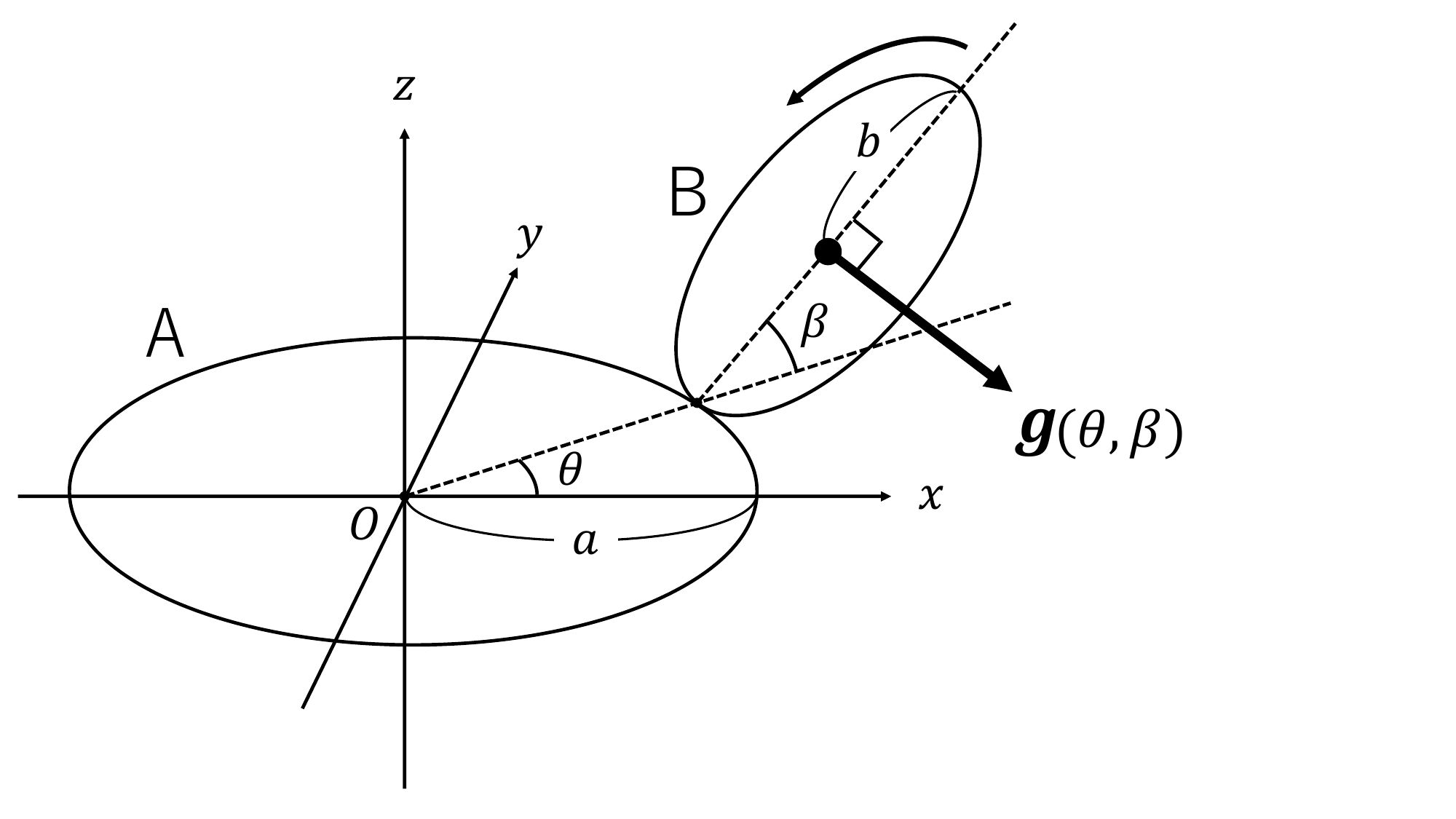}
\caption{The Gauss vector  $\bg$  in the model}
\label{fig:Gmap-1}
\end{subfigure}
\hspace{3mm}
\begin{subfigure}{0.45\columnwidth}
\centering
\includegraphics[width=1.4\columnwidth]{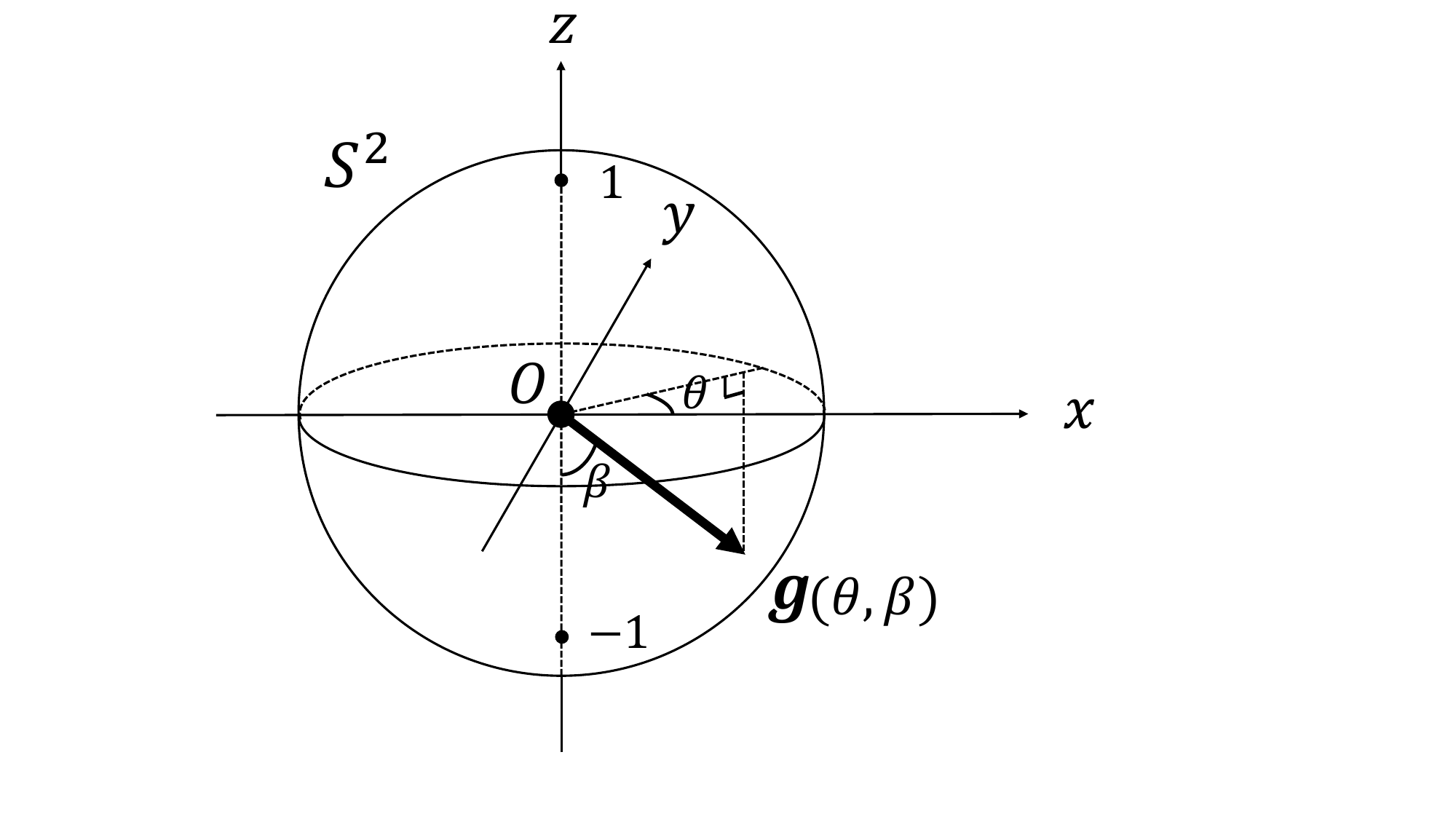}
\caption{The Gauss vector $\bg$ in two-sphere}
\label{fig:Gmap-2}
\end{subfigure}
\caption{The Gauss map $\bg$ from the model (a) 
to two-sphere (b)}
\label{fig:Gmap}
\end{figure}

Note that $\bg(\theta, \beta)$ is a unit vector and 
perpendicular to disc B.    
By composing the motion \eqref{eq:motion} and the Gauss
map \eqref{eq:gauss_a}, the Gauss vector could be regarded
as a function of $t\in [0,1]$\,, 
\begin{align}
\bg : [0,1] \to S^2\,, \quad t\mapsto \bg(t)
=\begin{pmatrix}
\sin\be(t) \cos\theta(t) \\ 
\sin\be(t) \sin\theta(t) \\ 
-\cos\be(t)
\end{pmatrix} \,. 
\label{eq:gauss}
\end{align}
We also refer to the map \eqref{eq:gauss} as the Gauss map. 

\ms 

Because of $\bg(1)=\bg(0)$\,, the Gauss map defines 
the oriented closed curve $\ga$ on $S^2$ 
associated with 
the given motion \eqref{eq:motion} by 
\begin{align}
\ga=\{~\bg(t)\in S^2~|~t\in [0,1]~\}\subset S^2\,.  
\label{eq:ga}
\end{align}
The orientation of $\ga$ is induced from that of 
the parameter $t\in [0,1]$ such as from $t=0$ to $t=1$\,.  
The curve $\ga$ is said to be {\it simple} if it does not intersect with itself.

\subsection*{The regularized curve}

Second, let us introduce the {\it cut-off} with a sufficiently small $\ep$ 
$(0<\ep<\pi/8)$ by 
\begin{align}
&\be_\ep: [0,1] \to [\ep \,, \pi-\ep ] \,, 
\qquad t\mapsto \be_\ep(t) \,, 
\nln 
&\be_\ep(t)=\begin{cases}
\ep & (~0\leq \be(t) \leq \ep~) \\
\be(t) &(~\ep < \be(t) <\pi-\ep~) \\ 
\pi-\ep & (~\pi- \ep \leq \be(t) \leq \pi~) \,. 
\end{cases}
\end{align}
The regularized Gauss vector is then given by replacing $\be(t)$
by $\be_\ep(t)$ as  
\begin{align}
\bg_\ep : [0,1] \to S^2-\{0,0,\pm1\}\,, \quad 
t\mapsto \bg_\ep(t)
=\begin{pmatrix}
\sin\be_\ep(t) \cos\theta(t) \\ 
\sin\be_\ep(t) \sin\theta(t) \\ 
-\cos\be_\ep(t)
\end{pmatrix} \,. 
\label{eq:gauss-ep}
\end{align}

\ms 

\begin{definition}[Regularized curve]
\label{def:ga-reg}
The {\it regularized curve} is defined by the orbit
of the regularized Gauss vector,  
\begin{align}
\ga(\ep)=\{~\bg_\ep(t)\in S^2-\{(0,0,\pm1)\}~
|~t\in [0,1]~\}\subset 
S^2-\{(0,0,\pm1)\}\,.  
\label{eq:ga-reg}
\end{align}
\end{definition}

\ms 

The point is  that the curve $\ga(\ep)$ dodges the poles
$(0,0,\pm1)$\,, though the undeformed curve 
$\ga$ does not in general.  
By taking $\ep \to 0$\,, it reduces to the original path, 
\begin{align}
\ga(\ep) \to \ga\,. 
\end{align}

\ms 

\begin{prop}
\label{prop:gp-reg}
The geometric phase for the motion \eqref{eq:motion}
is given by the limit of the line integral along the regularized curve
$\ga(\ep)$ in \eqref{eq:ga-reg}\,, 
\begin{align}
\De_g
=\lim_{\ep\to0}\int_{\ga(\ep)} \cos\be_\ep d\theta
\label{eq:gp-reg}
\end{align}
\end{prop}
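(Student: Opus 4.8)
The plan is to compare the honest line integral $\int_0^1 \cos\be(t)\,\theta'(t)\,dt$ from \dthmref{thm:line-int} with the regularized line integral $\int_{\ga(\ep)}\cos\be_\ep\,d\theta = \int_0^1 \cos\be_\ep(t)\,\theta'(t)\,dt$ and show the difference tends to $0$ as $\ep\to 0$. Since \dthmref{thm:line-int} already identifies $\De_g$ with the first integral, it suffices to prove
\begin{align}
\lim_{\ep\to0}\int_0^1\bigl(\cos\be_\ep(t)-\cos\be(t)\bigr)\,\theta'(t)\,dt=0\,.
\notag
\end{align}
First I would observe that by construction $\be_\ep(t)$ differs from $\be(t)$ only on the set $E_\ep=\{t\in[0,1]: \be(t)<\ep \text{ or } \be(t)>\pi-\ep\}$, and on that set both $\cos\be_\ep(t)$ and $\cos\be(t)$ lie within $1-\cos\ep$ of $\pm1$, so the integrand is pointwise bounded by $2(1-\cos\ep)$ there and vanishes off $E_\ep$. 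Using the Lipschitz/piecewise-$C^1$ hypotheses of \remref{rem:lip}, $\theta'(t)$ is bounded, say $|\theta'(t)|\le C_\theta$, so
\begin{align}
\left|\int_0^1\bigl(\cos\be_\ep(t)-\cos\be(t)\bigr)\,\theta'(t)\,dt\right|\le 2\,(1-\cos\ep)\,C_\theta\,,
\notag
\end{align}
which goes to $0$ as $\ep\to 0$. That is the whole estimate; it does not even need the measure of $E_\ep$ to be small, only that the integrand is uniformly small on it.

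The one point that deserves care — and which I expect to be the only real obstacle — is justifying that the symbol $\int_{\ga(\ep)}\cos\be_\ep\,d\theta$ genuinely equals $\int_0^1\cos\be_\ep(t)\,\theta'(t)\,dt$. This is a matter of confirming that the $1$-form $\cos\be\,d\theta$ pulled back along the parametrized curve $\bg_\ep:[0,1]\to S^2\setminus\{(0,0,\pm1)\}$ is exactly $\cos\be_\ep(t)\,\theta'(t)\,dt$; here $(\theta,\be_\ep)$ serve as (the image of) coordinates on the cylinder $S^2\setminus\{\text{poles}\}$, and $\be_\ep$ is itself Lipschitz and piecewise $C^1$ (it is a truncation of $\be$), so the pullback and the elementary change-of-variables for line integrals over piecewise-$C^1$ curves apply. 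Once that identification is in place, the squeeze estimate above finishes the proof, and one notes additionally that the curve $\ga(\ep)$ avoids the poles for every $\ep\in(0,\pi/8)$ by the very definition of the cut-off, so each integral in the limit is well defined on $S^2\setminus\{(0,0,\pm1)\}$.

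In writing this up I would state it as: (1) recall $\De_g=\int_0^1\cos\be(t)\,\theta'(t)\,dt$ from \dthmref{thm:line-int}; (2) unwind the line integral over $\ga(\ep)$ into the $t$-integral of $\cos\be_\ep(t)\,\theta'(t)$; (3) bound the difference by $2(1-\cos\ep)C_\theta$ using that $\be_\ep\equiv\be$ except where both are within $\ep$ of $0$ or $\pi$, together with the bound on $\theta'$ from \remref{rem:lip}; (4) let $\ep\to0$ and conclude. No deeper input — no Gauss–Bonnet, no geodesic curvature — is needed at this stage; those enter only later when the geometric meaning of the integral is extracted.
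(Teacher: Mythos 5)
Your proposal is correct and follows essentially the same route as the paper: rewrite $\int_{\ga(\ep)}\cos\be_\ep\, d\theta$ as $\int_0^1\cos\be_\ep(t)\,\theta'(t)\,dt$, pass to the limit $\ep\to0$, and invoke Definition-Theorem \ref{thm:line-int}. The only difference is that you make the limit interchange explicit via the uniform bound $2(1-\cos\ep)\,C_\theta$ on the set where $\be_\ep\neq\be$, a step the paper's proof leaves implicit.
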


\ms 

\begin{proof}
The right hand side is calculated as 
\begin{align}
\lim_{\ep\to0}\int_{\ga(\ep)} \cos\be_\ep d\theta
=\lim_{\ep\to0}\int_{0}^{1} \cos\be_\ep(t)
\frac{d\theta(t)}{dt} dt  
=\int_{0}^{1} \cos\be(t)
\frac{d\theta(t)}{dt} dt\,.   
\end{align}
By \dthmref{thm:line-int}\,, this coincides with the 
geometric phase $\De_g$\,. 
\end{proof}

\begin{remark}
\label{rem:ga12}
The regularized curve  $\ga(\ep)$ always and uniquely  
exists for the motion \eqref{eq:motion}\,.  
Note that, in some cases, 
two mutually different curves $\ga_1(\ep)\,, \ga_2(\ep)$ give rise to 
the same curve $\ga$ in $\ep\to0$ limit
{\it i.e.}\,, 
\begin{align}
\lim_{\ep\to0} \ga_1(\ep)=
\lim_{\ep\to0} \ga_2(\ep)=\ga\,. 
\end{align}
However, the resulting geometric phases do {\it not} coincide in general,  
\begin{align}
\lim_{\ep\to0}\int_{\ga_1(\ep)} \cos\be d\theta
\neq
\lim_{\ep\to0}\int_{\ga_2(\ep)} \cos\be d\theta\,. 
\end{align}
This is because the Gauss vector $\bg$ cannot detect the motion
when it is at the poles $(0,0,\pm1)$\,. 
In this sense, the regularized Gauss vector $\bg_\ep$ in \eqref{eq:gauss-ep}
more faithfully describes the motion of disc B. 
%
In other words, 
the geometric phase $\De_g$ is a functional 
of curve $\ga$\,, but it is {\it not} continuous for the 
infinitesimal variation of $\ga$\,.  
Indeed, the discontinuity by $\pm2\pi n\; (n\in \ZZ)$  
occurs when the curve $\ga$ jumps
over the poles at $(0,0,\pm1)$\,. 
Examples (v) and (iv) in  \secref{subsec:ex} well demonstrate 
what we have mentioned here.  
\end{remark}

\subsection{Connection matrix and the geodesic curvature}
\label{subsec:w12}

Next, we shall elucidate the geometric meaning of the one-form
$\cos\be d\theta$ in \eqref{eq:gp-reg}\,.

\subsection*{Connection matrix}

Using the Gauss vector \eqref{eq:gauss_a}\,, 
we introduce the local orthonormal frame of $\RR^3$
at $\bg(\theta, \be)\in S^2$ by 
\begin{align}
& \ee_1=\frac{\partial \bg}{\partial\theta} \Big/ 
\Big|\frac{\partial \bg}{\partial\theta}\Big|
=\begin{pmatrix} -\sin \theta \\ \cos \theta \\ 0  \end{pmatrix}\,, 
\quad 
\ee_2=\frac{\partial \bg}{\partial\be} \Big/ 
\Big|\frac{\partial \bg}{\partial\be}\Big|
=\begin{pmatrix} 
\cos\be \cos \theta \\ \cos\be \sin \theta \\ \sin\be  \end{pmatrix}\,, 
\notag \\
&
\ee_3=\ee_1\times \ee_2 
=\begin{pmatrix}\sin\be \cos\theta \\ \sin\be \sin\theta \\ -\cos\be \end{pmatrix}\,.  
\label{eq:e123}
\end{align}
It is noticed that $\ee_3=\bg$ and they satisfy 
\begin{align}
&\ee_i \cdot \ee_j=\de_{ij}\qquad (i,j=1,2,3)\,, 
\label{eq:ortho}
\\
&\ee_1 \times \ee_2=-\ee_2 \times \ee_1=\ee_3 \,, 
\quad 
\ee_2 \times \ee_3=-\ee_3 \times \ee_2=\ee_1 \,, 
\quad 
\ee_3 \times \ee_1=-\ee_1 \times \ee_3=\ee_2 \,. 
\notag 
\end{align}
Noting that the vectors $\ee_1$ and $\ee_2$ are always oriented to 
the {\it east} and {\it north}, 
respectively, and $\ee_3$ is to the {\it sky}.  
In particular, $\ee_1$ and $\ee_2$ span the tangent space
$T_{\bg}S^2$\,. 

\ms 

The {\it connection matrix} $(\om_{ij})$ is defined by 
\begin{align}
\begin{pmatrix} d\ee_1 \\ d\ee_2 \\ d\ee_3   \end{pmatrix}
=
\begin{pmatrix}
\om_{11} & \om_{12} & \om_{13} \\ 
\om_{21} & \om_{22} & \om_{23} \\ 
\om_{31} & \om_{32} & \om_{33} 
\end{pmatrix}
\begin{pmatrix} \ee_1 \\ \ee_2 \\ \ee_3   \end{pmatrix}\,. 
\end{align}
By the orthonormal condition \eqref{eq:ortho}\,, it immediately follows that 
$\om$ is anti-symmetric 
\begin{align}
\om_{ij}=-\om_{ji}=d\ee_i\cdot \ee_j
\qquad \text{for}\qquad 
i,j=1,2,3\,. 
\end{align}
It is explicitly calculated as 
\begin{align}
\begin{pmatrix}
\om_{11} & \om_{12} & \om_{13} \\ 
\om_{21} & \om_{22} & \om_{23} \\ 
\om_{31} & \om_{32} & \om_{33} 
\end{pmatrix}
=
\begin{pmatrix}
0 & -\cos\be d\theta & -\sin\be d\theta \\ 
\cos\be d\theta & 0 & -d\be \\ 
\sin\be d\theta & d\be & 0 
\end{pmatrix}\,. 
\label{eq:cmat}
\end{align}
In particular, we see that $\om_{21}=\cos\be d\theta$\,. 
Hence, by \propref{prop:gp-reg}\,, 
we have the following proposition. 
\begin{prop}
It holds that 
\begin{align}
\De_g=\lim_{\ep\to0} \int_{\ga(\ep)} \om_{21}\,. 
\end{align}
\end{prop}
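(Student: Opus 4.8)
The plan is to combine the two results already in hand: \propref{prop:gp-reg}, which expresses the geometric phase as $\De_g=\lim_{\ep\to0}\int_{\ga(\ep)}\cos\be_\ep\,d\theta$, and the explicit computation of the connection matrix \eqref{eq:cmat}, which identifies $\om_{21}=\cos\be\,d\theta$. Since the claim is purely a matter of matching notation, the proof is essentially a one-line citation, and the only real content is making precise that the one-form being integrated along $\ga(\ep)$ is literally the entry $\om_{21}$ of the connection matrix built from the frame \eqref{eq:e123}, now evaluated with the regularized angle $\be_\ep$ in place of $\be$.

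First I would note that along the regularized curve $\ga(\ep)$ the relevant frame is $\ee_1,\ee_2,\ee_3$ with $\be$ replaced by $\be_\ep$ and $\theta$ unchanged; since $\ga(\ep)$ avoids the poles $(0,0,\pm1)$, this frame is well-defined and smooth (away from the finitely many non-differentiable points of $\theta,\be$, where we work piecewise), so the formula $\om_{21}=\cos\be_\ep\,d\theta$ from \eqref{eq:cmat} applies verbatim on $\ga(\ep)$. Hence $\int_{\ga(\ep)}\om_{21}=\int_{\ga(\ep)}\cos\be_\ep\,d\theta$. Taking $\ep\to0$ and invoking \propref{prop:gp-reg} gives $\De_g=\lim_{\ep\to0}\int_{\ga(\ep)}\om_{21}$, which is exactly the assertion.

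There is essentially no obstacle here: the proposition is a bookkeeping corollary of \propref{prop:gp-reg} and \eqref{eq:cmat}. The only point that deserves a sentence of care is the substitution $\be\mapsto\be_\ep$ — one should observe that the connection matrix \eqref{eq:cmat} was computed for a general local frame of the form \eqref{eq:e123}, so its entries are the same rational/trigonometric expressions in whatever angle function parametrizes the curve; applying it with $\be_\ep$ rather than $\be$ is legitimate precisely because $\bg_\ep$ in \eqref{eq:gauss-ep} has the same functional form as $\bg$. With that remark in place the proof is complete.

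\begin{proof}
By \propref{prop:gp-reg},
\begin{align}
\De_g=\lim_{\ep\to0}\int_{\ga(\ep)}\cos\be_\ep\,d\theta\,.
\end{align}
The regularized curve $\ga(\ep)$ lies in $S^2-\{(0,0,\pm1)\}$, so along it the local orthonormal frame \eqref{eq:e123} is well-defined (with $\be$ replaced by $\be_\ep$ and $\theta$ unchanged, and computed piecewise on the finitely many subintervals where $\theta,\be$ are differentiable). Since the connection matrix \eqref{eq:cmat} was obtained from a frame of the form \eqref{eq:e123} purely by the differentiation relations $\om_{ij}=d\ee_i\cdot\ee_j$, the same computation with $\be_\ep$ in place of $\be$ gives $\om_{21}=\cos\be_\ep\,d\theta$ along $\ga(\ep)$. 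Therefore
\begin{align}
\int_{\ga(\ep)}\om_{21}=\int_{\ga(\ep)}\cos\be_\ep\,d\theta\,,
\end{align}
and letting $\ep\to0$ yields $\De_g=\lim_{\ep\to0}\int_{\ga(\ep)}\om_{21}$.
\end{proof}
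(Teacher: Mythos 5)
Your proof is correct and follows exactly the paper's route: the paper also obtains this proposition immediately from the explicit computation $\om_{21}=\cos\be\,d\theta$ in \eqref{eq:cmat} together with \propref{prop:gp-reg}. The extra remark about substituting $\be_\ep$ along the regularized curve is a reasonable bit of care that the paper leaves implicit.
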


\subsection*{Reparameterization from $t$ to $s$}
To describe the motion of disc B, it is convenient to adopt 
the {\it length} of the curve $\ga$ rather than the {\it time} parameter $t\in [0,1]$\,. 
We shall express the length parameter by $s\in [0,L(\ga)]$\,,
where $L(\ga)$ is the length of $\ga\subset S^2$\,, 
and regard $\bg(s)$ as a function of the length $s$ 
rather than the time $t$\,.
The orientation of the curve parametrized by $s$ is induced from 
that of the time parameter $t$ such that    
\begin{align}
\bg(t)\big|_{t=0}=\bg(s)\big|_{s=0}
\qquad \text{and}\qquad 
\bg(t)\big|_{t=1}=\bg(s)\big|_{s=L(\ga)}\,.
\end{align}

\ms 

On an open neighborhood for a fixed $t\in (0,1)$\,, 
where $\bg(t)$ is differentiable and $\bg'(t)\neq0$\,, 
the length parameter $s$ is related to 
the time parameter $t$ via the reparameterization
\begin{align}
s : [0,1] \to [0, L(\ga)]\,, \quad t\mapsto s(t) 
\qquad \text{such that }\qquad 
\left|\bg'(s)\right|= \left|\bg'(t)\right|
\frac{dt}{ds}=1\,.
\label{eq:repara}
\end{align}
By the definition, the speed of $\bg(s)$ is normalized 
as $1$ for the parameter $s$\,. 
We also see that $ds^2$ is the line element of $S^2$. By \eqref{eq:repara} and 
\eqref{eq:cmat}\,, we have 
\begin{align}
ds=\left|\bg'(t)\right| dt\,, 
\qquad 
\bg'(t)=\frac{d\ee_3}{dt}=\sin\be \frac{d\theta}{dt}\ee_1+\frac{d\be}{dt}\ee_2\,.  
\end{align}
Therefore, we obtain
\begin{align}
ds^2=\sin^2\be d\theta^2+d\be^2\,. 
\end{align}
This is nothing but the line element of $S^2$ in terms of the local coordinate $(\theta, \be)$\,. 
In the subsequent argument, we reserve $s$ for the length parameter.

\subsection*{Geodesic curvature}

Due to $|\bg'(s)|=1$ and $\bg'(s)\in T_{\bg(s)}S^2
=\text{span}\{\ee_1\,,\ee_2\}$\,, 
let us express the tangent vector $\bg'(s)$ by using an angle 
$\vp$\,, 
\begin{align}
\bg'(s)=\cos\vp\, \ee_1+\sin\vp\, \ee_2\,, \quad \text{where}\quad 
\cos\vp=\sin \be \frac{d\theta}{ds}\,, \quad 
\sin\vp=\frac{d\be}{ds}\,.  
\label{eq:g-prime}
\end{align}
See \figref{fig:frame}. 
Replacing $\vp$ by $\vp+\pi/2$\,, we have the orthogonal vector defined by 
\begin{align}
\bnu(s)=-\sin\vp\, \ee_1+\cos\vp\, \ee_2\,. 
\label{eq:nu}
\end{align}
Note that they satisfy 
\begin{align}
|\bg'(s)|=|\bnu(s)|=1 \,,\qquad  \bg'(s)\cdot \bnu(s)=0\,. 
\label{eq:gvgv-ortho}
\end{align}
The relation between the local frames $\{\ee_1\,,\ee_2\}$
and $\{\bg'(s)\,, \bnu(s)\}$ is presented in 
\figref{fig:frame}. 
\begin{figure}
\centering
\includegraphics[width=10cm]{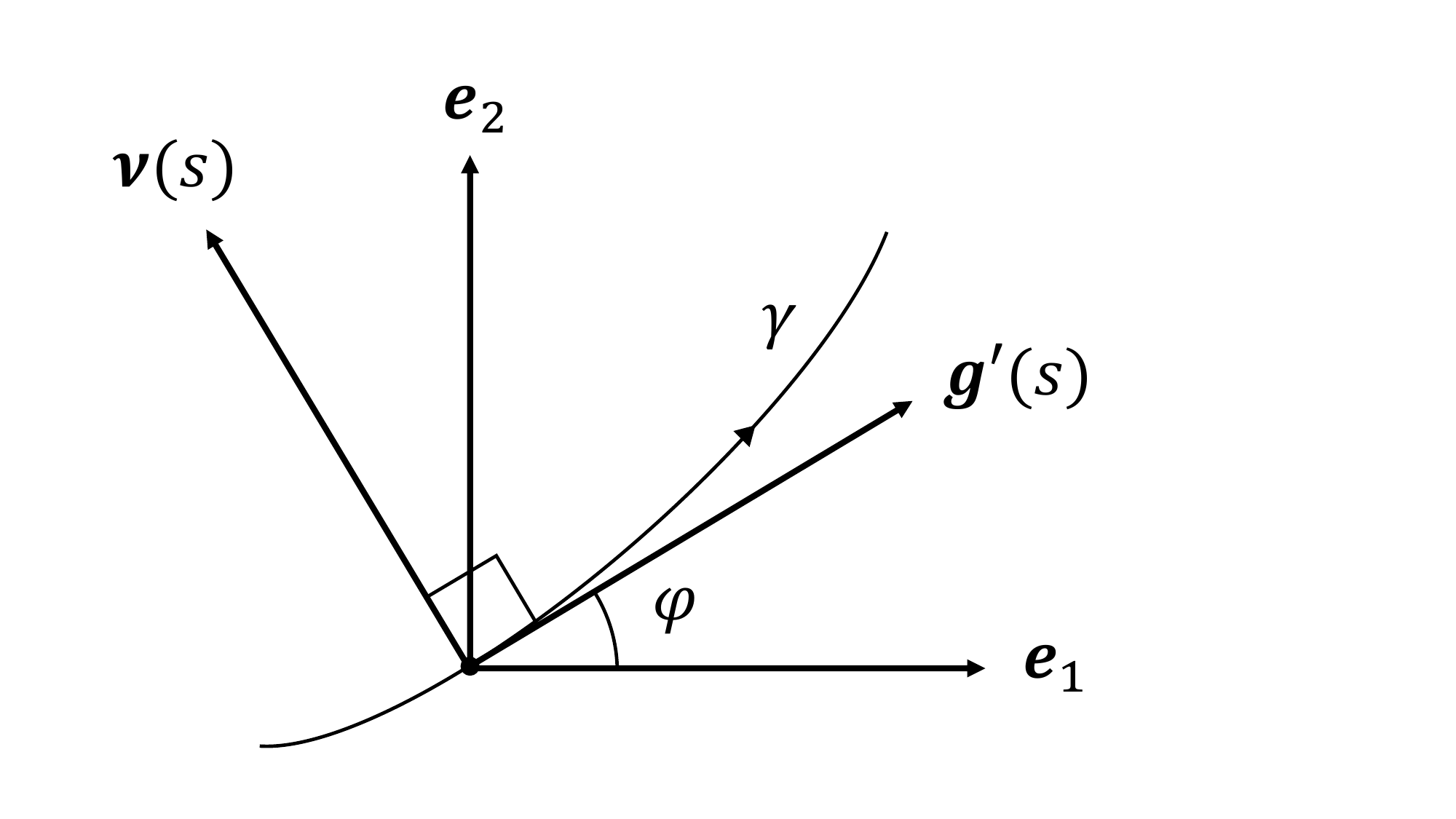}
\caption{The moving frames $\{\bg'(s)\,, \bnu(s)\}$ and 
$\{\ee_1\,,\ee_2\}$ at $\bg(s)\in S^2$\,. 
The vectors $\ee_1$ and $\ee_2$ are oriented to 
the east and north, respectively. 
$\vp$ is the angle between the local frames.}
\label{fig:frame}
\end{figure}
Differentiating the second relation, we get 
\begin{align}
\bg''(s)\cdot \bnu(s)+\bg'(s)\cdot \bnu'(s)=0\,. 
\label{eq:gvgv}
\end{align}

\begin{definition}[Geodesic curvature]
\label{def:curv-def}
The {\it geodesic curvature} of $\ga\subset S^2$ at $\bg(s)\in \ga$ is defined by 
\begin{align}
\ka_g(s)= \bg''(s)\cdot \bnu(s)=-\bg'(s)\cdot \bnu'(s)\,.
\label{eq:curv-def}
\end{align}
\end{definition}

Now, let us calculate the geodesic curvature explicitly. 
Plugging \eqref{eq:g-prime} and 
\begin{align}
\bnu'(s)ds 
&= -\cos \vp ( d\vp + \om_{12} )\, \ee_1 
-\sin \vp ( d\vp + \om_{12} )\, \ee_2 
+(-\sin\vp\, \om_{13}+\cos\vp\, \om_{23})\ee_3\,, 
\end{align}
with the geodesic curvature \eqref{eq:curv-def}\,, 
we obtain 
\begin{align}
\ka_g(s) ds = d\vp + \om_{12} \,. 
\end{align}

The above argument can be summarized in the following proposition. 
\begin{prop}
\label{prop:oneform}
The infinitesimal geometric phase $\cos\be d\theta$ 
in \eqref{eq:gp-reg}\,, 
the element $\om_{21}$ of the connection matrix 
\eqref{eq:cmat}\,, 
the angle $\vp$ in \eqref{eq:g-prime}\,, 
and the geometric curvature $\ka_g(s)ds$ in \eqref{eq:curv-def}
satisfy the relation 
\begin{align}
\cos\be d \theta 
=\om_{21}=d\vp-\ka_g(s) ds\,. 
\end{align}
\end{prop}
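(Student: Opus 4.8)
The plan is to verify the chain of equalities $\cos\be\,d\theta=\om_{21}=d\vp-\ka_g(s)\,ds$ by establishing its three links separately. The first equality $\cos\be\,d\theta=\om_{21}$ is already in hand: it is read off directly from the explicit form of the connection matrix in \eqref{eq:cmat}, where the $(2,1)$-entry is $\om_{21}=\cos\be\,d\theta$. So the real content is the identity $\om_{21}=d\vp-\ka_g(s)\,ds$, equivalently (after the sign flip $\om_{21}=-\om_{12}$) the identity $\ka_g(s)\,ds=d\vp+\om_{12}$, which is exactly the formula already derived just above the proposition statement.

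Thus the proof would simply assemble what has been computed. First I would recall the decomposition $\bg'(s)=\cos\vp\,\ee_1+\sin\vp\,\ee_2$ from \eqref{eq:g-prime} and the orthogonal companion $\bnu(s)=-\sin\vp\,\ee_1+\cos\vp\,\ee_2$ from \eqref{eq:nu}. Differentiating $\bnu$ along the curve and using the connection matrix \eqref{eq:cmat} to express $d\ee_1,d\ee_2$ in the frame $\{\ee_1,\ee_2,\ee_3\}$ gives the stated expansion of $\bnu'(s)\,ds$; the coefficients of $\ee_1$ and $\ee_2$ both carry the factor $d\vp+\om_{12}$. Then, from the definition of the geodesic curvature, $\ka_g(s)\,ds=-\bg'(s)\cdot\bnu'(s)\,ds$, and pairing $\bg'(s)=\cos\vp\,\ee_1+\sin\vp\,\ee_2$ against that expansion, the $\ee_3$-component drops out by orthonormality \eqref{eq:ortho} while the $\ee_1,\ee_2$ components combine to $(\cos^2\vp+\sin^2\vp)(d\vp+\om_{12})=d\vp+\om_{12}$. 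Hence $\ka_g(s)\,ds=d\vp+\om_{12}$, and rearranging with $\om_{21}=-\om_{12}$ yields $\om_{21}=d\vp-\ka_g(s)\,ds$. Combined with the first link this is precisely the claimed relation.

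There is essentially no obstacle here: every ingredient — the frame \eqref{eq:e123}, the connection matrix \eqref{eq:cmat}, the tangent-angle parameterization \eqref{eq:g-prime}, and the geodesic-curvature definition \eqref{eq:curv-def} — has already been set up, and the computation of $\bnu'(s)\,ds$ is the only calculation, itself routine differentiation followed by substitution from \eqref{eq:cmat}. The one point worth stating carefully is the bookkeeping of signs: one should check that the convention $\sin\vp=d\be/ds$, $\cos\vp=\sin\be\,d\theta/ds$ in \eqref{eq:g-prime} is consistent with $\bg'(s)=\sin\be\,\theta'\,\ee_1+\be'\,\ee_2$ (which follows from $ds=|\bg'(t)|\,dt$ and the earlier expression for $\bg'(t)$), so that no spurious sign enters when $d\vp$ is extracted. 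Once that is confirmed, the proposition follows immediately by transitivity of the three equalities.
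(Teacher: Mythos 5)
Your proposal is correct and follows essentially the same route as the paper: the paper also reads $\om_{21}=\cos\be\,d\theta$ off the explicit connection matrix \eqref{eq:cmat}, expands $\bnu'(s)\,ds$ in the frame $\{\ee_1,\ee_2,\ee_3\}$ via \eqref{eq:cmat}, and pairs it with $\bg'(s)$ through $\ka_g(s)\,ds=-\bg'(s)\cdot\bnu'(s)\,ds$ to get $\ka_g(s)\,ds=d\vp+\om_{12}$, whence the claim follows by antisymmetry $\om_{21}=-\om_{12}$. Your sign-consistency check on \eqref{eq:g-prime} is a sensible addition but the argument is the same.
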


\subsection{Main theorem}
\label{subsec:thm}

By \propref{prop:gp-reg}\,, 
the geometric phase is obtained by evaluating the line integral 
along the regularized curve $\ga(\ep)$ and taking the regularization parameter 
$\ep$ to zero. 
It is, however, noticed that the curve $\ga(\ep)$ is piecewise 
smooth in general and could have finite cusp points. 
Let us suppose that the curve $\ga(\ep)$ has 
$n(\ep)\in \ZZ_{\geq0}$ cusp points
and, at the $i$-th cusp point, denote the external angle to the counterclockwise by $\al_i(\ep)$\,.  
Define 
the geodesic curvature $\ka_g^{\ep}(s)$ for
$\ga(\ep)\subset S^2-\{(0,0,\pm1)\}$ at $\bg_\ep(s)\in \ga(\ep)$ by 
\begin{align}
\ka_g^{\ep}(s)= \bg_\ep''(s)\cdot \bnu_\ep(s)=-\bg_\ep'(s)\cdot \bnu_\ep'(s)\,, 
\end{align}
where $\bnu_\ep(s)$ is the normalized orthogonal vector to $\bg_\ep'(s)$ 
as similar in \eqref{eq:nu}\,. 

\ms 

Then, the geometric phase is given by 
\begin{align}
\De_g
=\lim_{\ep\to0} \int_{\ga(\ep)} d\vp 
- \lim_{\ep\to0} \left(\int_{\ga(\ep)}\ka_g^{\ep} ds 
+\sum_{i=1}^{n(\ep)} \al_i(\ep) \right)\,. 
\end{align}
Since the second term reduces to the line integral on the 
undeformed curve $\ga$\,, 
we can express the geometric phase as follows. 
\begin{thm}
\label{thm:De-twoint}
The geometric phase for the motion \eqref{eq:motion}
is evaluated by 
\begin{align}
\De_g
=\lim_{\ep\to0} \int_{\ga(\ep)} d\vp 
- \int_{\ga}\ka_g ds 
- \sum_{i=1}^n \al_i \,. 
\label{eq:De-twoint}
\end{align}
Here, $\ga$ in \eqref{eq:ga}is the trajectory of the Gauss vector,
$\ga(\ep)$ introduced in \eqref{eq:ga-reg} 
is the regularized curve of $\ga$  avoiding the poles $(0,0,\pm1)$\,, 
$\vp$ defined in \eqref{eq:g-prime} 
is the angle of the tangent vector of $\ga$ 
measured from the local orthogonal frame,    
$\ka_g$ defined in \eqref{eq:curv-def} is 
the geodesic curvature of $\ga$\,, 
and $\al_i$ is the external angle to counterclockwise at the $i$-th cusp point of $\ga$\,. 
\end{thm}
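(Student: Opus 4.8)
The plan is to establish Theorem~\ref{thm:De-twoint} by combining Proposition~\ref{prop:oneform} with a careful passage to the $\ep\to0$ limit. By Proposition~\ref{prop:gp-reg} we already know $\De_g=\lim_{\ep\to0}\int_{\ga(\ep)}\cos\be_\ep\,d\theta$, and by the local identity $\cos\be\,d\theta=\om_{21}=d\vp-\ka_g(s)\,ds$ from Proposition~\ref{prop:oneform}, applied to the regularized Gauss vector $\bg_\ep$, we obtain on each smooth arc of $\ga(\ep)$ the relation $\cos\be_\ep\,d\theta=d\vp-\ka_g^\ep(s)\,ds$. Integrating over $\ga(\ep)$ and accounting for the jumps of $\vp$ at the $n(\ep)$ cusp points — where $\vp$ increases by the external angle $\al_i(\ep)$ as one traverses the cusp in the induced orientation — the $d\vp$ contribution along the smooth pieces plus the cusp jumps reconstitutes the total turning; rearranging gives
\begin{align}
\int_{\ga(\ep)}\cos\be_\ep\,d\theta
=\int_{\ga(\ep)}d\vp
-\Big(\int_{\ga(\ep)}\ka_g^\ep\,ds+\sum_{i=1}^{n(\ep)}\al_i(\ep)\Big).
\end{align}
This is the displayed formula immediately preceding the theorem statement; the theorem itself is then obtained by taking $\ep\to0$ term by term.

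The substantive step is controlling the $\ep\to0$ limit of the middle term $\int_{\ga(\ep)}\ka_g^\ep\,ds+\sum_i\al_i(\ep)$ and showing it converges to $\int_{\ga}\ka_g\,ds+\sum_{i=1}^n\al_i$. I would argue as follows. The cut-off $\be_\ep$ modifies $\be(t)$ only on the parameter set $\{t:\be(t)\le\ep\}\cup\{t:\be(t)\ge\pi-\ep\}$, which shrinks to $\{t:\bg(t)\in\{(0,0,\pm1)\}\}$ as $\ep\to0$; away from (a neighborhood of) the poles, $\bg_\ep=\bg$ identically, so on the complement of small polar caps the integrand $\ka_g^\ep\,ds$ agrees with $\ka_g\,ds$ and the cusp data of $\ga(\ep)$ agree with those of $\ga$. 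Hence the only thing to check is that the contribution of $\ga(\ep)$ inside the polar caps $\{z\ge\cos\ep\}\cup\{z\le-\cos\ep\}$ — namely the geodesic-curvature integral over those short arcs together with any cusps created there — tends to zero. On such a cap the curve $\bg_\ep$ lies at latitude exactly $\ep$ (or $\pi-\ep$) while $\theta(t)$ varies; the arc length there is $O(\ep)$ times the total variation of $\theta$ on the relevant parameter set (bounded via the Lipschitz/piecewise-$C^1$ hypotheses of Remark~\ref{rem:lip}), and the geodesic curvature of a circle of latitude $\ep$ is $\cot\ep=O(1/\ep)$, so naively the product is $O(1)$ rather than $o(1)$. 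The resolution — and this is the main obstacle — is that along the latitude circle $\ka_g^\ep\,ds=\cot\ep\cdot\sin\ep\,d\theta=\cos\ep\,d\theta$, so the geodesic-curvature integral over a polar excursion is $\cos\ep$ times the net $\theta$-increment along that excursion, while the $d\vp$ integral over the same excursion contributes $d\theta$ (the tangent to a latitude circle points due east, so $\vp$ tracks $\theta$ up to a constant); these two nearly cancel, their difference being $(1-\cos\ep)$ times a bounded quantity, which is $O(\ep^2)$ per excursion and $O(\ep)$ in total. So rather than bounding the middle term alone, I would bound the combination $\int_{\ga(\ep)}d\vp-\big(\int_{\ga(\ep)}\ka_g^\ep\,ds+\sum\al_i(\ep)\big)$ restricted to the polar caps, showing it equals the corresponding combination for $\ga$ up to an error $o(1)$; equivalently, one simply notes $\int_{\ga(\ep)}\cos\be_\ep\,d\theta\to\int_\ga\cos\be\,d\theta$ is already known from Proposition~\ref{prop:gp-reg}, so it suffices to prove $\lim_{\ep\to0}\int_{\ga(\ep)}d\vp$ exists, after which the middle term converges automatically by subtraction.

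Thus the cleanest route is: (1) quote Proposition~\ref{prop:gp-reg} for the left side; (2) use Proposition~\ref{prop:oneform} arc-by-arc plus the cusp-jump bookkeeping for $\vp$ to derive the displayed pre-theorem identity at fixed $\ep$; (3) prove that $\lim_{\ep\to0}\int_{\ga(\ep)}d\vp$ exists — this is where one works, showing the total turning of the regularized tangent direction stabilizes, using that outside the polar caps $\ga(\ep)=\ga$ and inside them the curve is a latitude arc along which $d\vp=d\theta$ up to bounded terms, with the polar-cap $\theta$-variation bounded by the Lipschitz constant times the measure of the (vanishing) polar parameter set; (4) conclude by subtraction that $\lim_{\ep\to0}\big(\int_{\ga(\ep)}\ka_g^\ep\,ds+\sum_i\al_i(\ep)\big)$ exists and equals $\int_\ga\ka_g\,ds+\sum_{i=1}^n\al_i$, since on the complement of the caps everything is $\ep$-independent and inside the caps the geodesic-curvature-plus-cusp contribution is $O(\ep)$. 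Assembling (1)–(4) gives \eqref{eq:De-twoint}. I would also remark that this is essentially the Gauss–Bonnet bookkeeping for a piecewise-smooth curve, with the regularization playing the role of resolving the coordinate singularity of the frame $\{\ee_1,\ee_2\}$ at the poles — which is exactly why the non-continuity phenomenon flagged in Remark~\ref{rem:ga12} is visible in the $\lim_{\ep\to0}\int_{\ga(\ep)}d\vp$ term and not elsewhere.
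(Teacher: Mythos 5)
Your skeleton is the same as the paper's: \propref{prop:gp-reg} for $\De_g$, the pointwise identity of \propref{prop:oneform} applied on each smooth arc of $\ga(\ep)$, and the cusp-jump bookkeeping for $\vp$ give exactly the displayed formula preceding the theorem, and your convention (the total turning $\int_{\ga(\ep)}d\vp$ includes the jumps) matches the one the paper uses implicitly in \lemref{lem:vp}. The paper then passes to \eqref{eq:De-twoint} with the single remark that the second limit ``reduces to the line integral on the undeformed curve,'' so the only substantive content of a proof is precisely the $\ep\to0$ identification you set out to justify --- and there your argument has two concrete errors. First, along a clamped latitude arc $\be_\ep\equiv\ep$ the unit tangent is exactly $\ee_1$, and since $\vp$ is measured against the moving frame $\{\ee_1,\ee_2\}$ in \eqref{eq:g-prime}, $\vp$ is constant there: $d\vp=0$, not $d\vp\approx d\theta$. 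The correct local relations are $d\vp=0$ and $\ka_g^{\ep}\,ds=-\cos\ep\,d\theta$, so the ``near-cancellation between $\int d\vp$ and $\int \ka_g^{\ep}\,ds$ on a polar excursion'' that you invoke as the resolution does not take place.

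Second, the claim in your step (4) that inside the polar caps the geodesic-curvature-plus-cusp contribution is $O(\ep)$ is false exactly in the situations the regularization exists for. If $\theta$ varies by $\De\theta$ while $\be(t)\le\ep$, the clamped arc contributes $-\cos\ep\,\De\theta$ to $\int\ka_g^{\ep}\,ds$, plus entry/exit corner angles of order one; the total is $O(1)$, not $o(1)$. In example (v) this cap contribution tends to $\pi/2$ and happens to equal the exterior angle of $\ga$ at the pole, but only through a cancellation that needs its own local Gauss--Bonnet-type argument; in examples (i) and (iii), where $\be$ is identically $0$ or $\pi$ while $\theta$ winds, it tends to $\mp2\pi$ although $\ga$ is a single point with no cusps, so the limit of the middle term is \emph{not} the literal $\int_\ga\ka_g\,ds+\sum_i\al_i$ at all --- this is the discontinuity of \remref{rem:ga12}, and it is why the paper's own use of these quantities in \lemref{lem:GB} keeps them as limits of the regularized data. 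So your fixed-$\ep$ identity is fine, but the limit step must be replaced by an analysis of the cap contribution (entry/exit turning plus $-\cos\ep$ times the $\theta$-variation on $\{\be\le\ep\}$, and similarly at $\be\ge\pi-\ep$), together with a statement of the interpretation or hypothesis on the motion under which the identification with the $\ga$-quantities in \eqref{eq:De-twoint} is valid; an $O(\ep)$ smallness bound cannot do it, as your own $\cot\ep\cdot O(\ep)$ estimate already indicated.
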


\ms 

\begin{remark}
Theorem \ref{thm:De-twoint} is valid not only 
for a closed curve $\ga$\,, but also for some open paths
{\it not} satisfying the topological condition \eqref{eq:top}\,. 
However, when $\ga$ is closed, an interesting geometric 
interpretation exists.  
That is the claim of Theorem \ref{thm:main}. 
\end{remark}

\ms 

When the closed curve $\ga(\ep)$ does not intersect with itself, 
{\it i.e.}, simple, 
it defines two region $S_+(\ep)$ and $S_-(\ep)$ on $S^2$ such as 
\begin{align}
S^2=S_+(\ep)\cup S_-(\ep)\,, \quad 
\ga(\ep)=S_+(\ep)\cap S_-(\ep)\,, \quad 
\partial S_+(\ep)=-\partial S_-(\ep)=\ga(\ep)\,. 
\end{align}
Note that $S_+(\ep)$ is the region surrounded by 
the oriented curve $\ga(\ep)$ on the left hand side.  
Let us call $(0,0,1)$ and $(0,0,-1)$ the {\it north} and 
{\it south pole}, respectively.
Since the regularized curve $\ga(\ep)$ does not pass 
through the north and south poles, 
it makes sense to introduce the indices $I_\pm\in \{0,1,2\}$ as 
the number of poles included in $S_\pm(\ep)$\,. 
By definition, it holds 
\begin{align}
I_++I_-=2\,. 
\end{align}

Using the indices $I_\pm$, we can express 
the first term on the right hand side of \eqref{eq:De-twoint} as  
\begin{align}
\int_{\ga(\ep)} d\vp 
=-\pi(I_+-I_-)
=
\begin{cases}
2\pi  & (\,(I_+, I_-)=(0,2)\,) \\ 
0 & (\,(I_+, I_-)=(1,1)\,) \\ 
-2\pi  & (\,(I_+, I_-)=(2,0)\,) 
\end{cases}\,. 
\end{align}
Since the results are stable under the deformation 
of parameter $\ep$\,,  we have 
\begin{lemma}
\label{lem:vp}
\begin{align}
\lim_{\ep\to0} \int_{\ga(\ep)} d\vp 
=-2\pi I_++2\pi=2\pi I_--2\pi
=-\pi(I_+-I_-)\,. 
\end{align}
\end{lemma}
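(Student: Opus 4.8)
The plan is to obtain \lemref{lem:vp} as the (trivial) $\ep\to0$ limit of the identity $\int_{\ga(\ep)}d\vp=-\pi(I_+-I_-)$ displayed just before it, so the first — and essentially the only — observation needed is that this limit is immediate: for every sufficiently small $\ep>0$ the regularized curve $\ga(\ep)$ misses both poles, the integers $I_\pm$ are well defined, and, being isotopy invariants of the simple closed curve $\ga(\ep)\subset S^2\setminus\{(0,0,\pm1)\}$, they do not change as $\ep$ decreases; hence the right-hand side $-\pi(I_+-I_-)=-2\pi I_++2\pi=2\pi I_--2\pi$ (the three forms agreeing because $I_++I_-=2$) is an $\ep$-independent constant, and the limit equals it. Thus the real content to establish is the identity $\int_{\ga(\ep)}d\vp=2\pi-2\pi I_+$ itself.

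\emph{Proof of the identity.} By the regularized version of \propref{prop:oneform} one has $d\vp=-\om_{12}+\ka_g^{\ep}\,ds$ along $\ga(\ep)$, so that $\int_{\ga(\ep)}d\vp=-\int_{\ga(\ep)}\om_{12}+\int_{\ga(\ep)}\ka_g^{\ep}\,ds+\sum_i\al_i(\ep)$, the exterior angles at the corners of $\ga(\ep)$ being included so that the left side is the genuine total turning of the unit tangent $\bg_\ep'$. Now $S^2\setminus\ga(\ep)$ consists of the two discs $S_\pm(\ep)$, with $\ga(\ep)=\partial S_+(\ep)$ in the positive sense. Gauss--Bonnet on the disc $S_+(\ep)$ — a smooth piece of $S^2$, the enclosed poles being singularities only of the frame $\{\ee_1,\ee_2\}$, not of the surface — with $K\equiv1$ and $\chi=1$ gives $\int_{\ga(\ep)}\ka_g^{\ep}\,ds+\sum_i\al_i(\ep)=2\pi-\mathrm{Area}(S_+(\ep))$. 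For the remaining term, apply Stokes on $S_+(\ep)$ with a small disc excised around each enclosed pole: since $d\om_{12}=-dA$ and the integral of $\om_{12}=-\cos\be\,d\theta$ around a small positively oriented circle about \emph{either} pole equals $2\pi$ (each pole being an index-$+1$ singularity of $\{\ee_1,\ee_2\}$), one obtains $\int_{\ga(\ep)}\om_{12}=-\mathrm{Area}(S_+(\ep))+2\pi I_+$. Adding the two contributions, the areas cancel and $\int_{\ga(\ep)}d\vp=2\pi-2\pi I_+$, as wanted.

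\emph{Where the work is.} The ideas here are standard; the labor is bookkeeping — fixing once and for all which of $S_\pm(\ep)$ is the \emph{left} region so that the Gauss--Bonnet and Umlaufsatz signs come out positive, orienting the excised inner circles correctly (they appear with the pole on their right), keeping the sign in $d\om_{12}=-dA$ consistent with the chosen orientation of $S^2$, and carrying the corners of the piecewise-smooth $\ga(\ep)$ (which arise where the cut-off switches between $\be$ and a constant) through as exterior angles. One convenient shortcut: when $(I_+,I_-)\in\{(0,2),(2,0)\}$ one of $S_\pm(\ep)$ is a pole-free disc, on which $\{\ee_1,\ee_2\}$ is homotopic to a constant frame, so Hopf's Umlaufsatz alone delivers total turning $\pm2\pi$ with no Gauss--Bonnet needed; the excision computation is genuinely required only in the pole-separating case $(I_+,I_-)=(1,1)$, where it gives $0$ — equivalently, there $\ga(\ep)$ is isotopic to the equator, along which $\bg_\ep'=\ee_1$ and the turning vanishes.
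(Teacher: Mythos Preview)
Your argument is correct. The paper, however, does not really \emph{prove} the identity $\int_{\ga(\ep)}d\vp=-\pi(I_+-I_-)$ at all: it simply lists the three cases $(I_+,I_-)\in\{(0,2),(1,1),(2,0)\}$ as evident, then observes stability in $\ep$ to pass to the limit. Your proposal supplies what the paper omits, and in two different flavours. Your main route --- rewriting $\int d\vp=-\int\om_{12}+\int\ka_g^\ep\,ds+\sum\al_i$, then evaluating the second piece by Gauss--Bonnet on $S_+(\ep)$ and the first by Stokes with small discs excised around the enclosed poles --- is valid but somewhat roundabout: the excision step is precisely the direct computation of $\int_{\ga(\ep)}\om_{21}=A_+(\ep)-2\pi I_+$ that the paper carries out separately in \appref{app:alt-proof} as an alternative proof of the Main Theorem, and the Gauss--Bonnet step is exactly \lemref{lem:GB}. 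So in effect you are deducing \lemref{lem:vp} from \lemref{lem:GB} together with the appendix version of \thmref{thm:main}, reversing the paper's logical flow. Your closing ``shortcut'' --- Hopf's Umlaufsatz on the pole-free side for $(I_+,I_-)\in\{(0,2),(2,0)\}$, and isotopy to the equator (where $\bg_\ep'\equiv\ee_1$, hence $\vp\equiv0$) for $(1,1)$ --- is almost certainly what the paper has in mind behind its bare assertion, and is the cleaner justification of the three displayed values.
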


\ms 

Next, let us consider the second and third terms 
on the right hand side in 
\eqref{eq:De-twoint}\,. 
Denote the areas of $S_\pm(\ep)$ by $A_\pm(\ep)$\,, respectively, 
and set $S_\pm(0)=S_\pm\,, A_\pm(0)=A_\pm$\,. 
Since the area of two-sphere is $4\pi$\,, it holds that 
\begin{align}
A_+(\ep)+A_-(\ep)=4\pi. 
\end{align}
With these notations, applying the {\it Gauss-Bonnet theorem} for 
$S_+(\ep)$\,, we obtain 
\begin{align}
\int_{S_+(\ep)} K dA
+ \int_{\partial S_+(\ep)} \ka_g ds
+\sum_{i=1}^{n(\ep)} \al_i(\ep) 
=2\pi\chi(S_+(\ep))\,,  
\end{align}
where $K$ is the {\it Gauss curvature}, 
$dA=\sin\be\,  d\theta \wedge d\be$
is the volume form of $S^2$\,,  
and $\chi(S_+(\ep))$ is the {\it Euler characteristic} 
of $S_+(\ep)$ which is topologically 
isomorphic to a closed unit disc. 
Since $K=1$ for two-sphere,
the first term on the left hand side is equal to  the area of 
$S_+(\ep)$\,, 
\begin{align}
\int_{S_+(\ep)} K dA
=\int_{S_+(\ep)} \sin\be\, d\theta \wedge d\be
=A_+(\ep) \,. 
\end{align}
Taking into account $\chi(S_+(\ep))=1$\,, 
we have 
\begin{align}
-\int_{\partial S_+(\ep)} \ka_g ds
-\sum_{i=1}^{n(\ep)} \al_i(\ep) 
=A_+(\ep)-2\pi\,. 
\end{align}
By sending $\ep\to0$\,, we obtain 
\begin{lemma}
\label{lem:GB}
\begin{align}
-\int_{\ga} \ka_g ds  -\sum_{i=1}^n \al_i 
= A_+-2\pi=-A_- + 2\pi 
=\frac{A_+-A_-}{2} \,. 
\end{align}
\end{lemma}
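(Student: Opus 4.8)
The plan is to derive Lemma~\ref{lem:GB} by taking the $\ep\to0$ limit of the Gauss--Bonnet identity obtained just above the statement, and then to read off the remaining two equalities from $A_++A_-=4\pi$ by pure algebra.

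For each fixed $\ep\in(0,\pi/8)$ the regularized curve $\ga(\ep)$ avoids both poles and, being simple, is a piecewise-$C^2$ Jordan curve on $S^2$ bounding the disc-like region $S_+(\ep)$, so $\chi(S_+(\ep))=1$; with $K\equiv1$ this is exactly the setting in which the displayed identity $-\int_{\ga(\ep)}\ka_g^{\ep}\,ds-\sum_{i=1}^{n(\ep)}\al_i(\ep)=A_+(\ep)-2\pi$ was written. I would then let $\ep\to0$ term by term. On the right, $|\be_\ep(t)-\be(t)|\le\ep$ forces $S_+(\ep)$ to differ from $S_+$ only inside the two spherical caps of angular radius $\ep$ about the poles, of total area $O(\ep^2)$, so $A_+(\ep)\to A_+(0)=A_+$. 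On the left, the combination $\int_{\ga(\ep)}\ka_g^{\ep}\,ds+\sum_{i=1}^{n(\ep)}\al_i(\ep)$ is precisely the quantity whose limit was already identified in Theorem~\ref{thm:De-twoint} as $\int_{\ga}\ka_g\,ds+\sum_{i=1}^{n}\al_i$; alternatively one re-derives this by writing $\ka_g^{\ep}\,ds=d\vp-\cos\be_\ep\,d\theta$ (Proposition~\ref{prop:oneform}), using Proposition~\ref{prop:gp-reg} for the $\cos\be_\ep\,d\theta$ part and Lemma~\ref{lem:vp} for the $d\vp$ part. Passing to the limit thus yields $-\int_{\ga}\ka_g\,ds-\sum_{i=1}^{n}\al_i=A_+-2\pi$.

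The remaining equalities are immediate: from $A_++A_-=4\pi$ one has $A_+-2\pi=(4\pi-A_-)-2\pi=-A_-+2\pi$, and since these two expressions are equal each of them equals their average $\tfrac12\big[(A_+-2\pi)+(-A_-+2\pi)\big]=\tfrac12(A_+-A_-)$. The one genuine difficulty is the convergence of the boundary term on the left: inside each $\ep$-cap the latitude arcs of $\ga(\ep)$ have geodesic curvature of order $\cot\ep$ on a piece of length of order $\ep$, and extra cusps of $\ga(\ep)$ sit on the cap boundaries, so one must verify that these near-pole contributions organize exactly into the topological $d\vp$-rotation term controlled by Lemma~\ref{lem:vp} while the rest tends to $\int_{\ga}\ka_g\,ds+\sum_i\al_i$; this is the same bookkeeping already performed in establishing Theorem~\ref{thm:De-twoint}, and I would invoke it rather than redo it.
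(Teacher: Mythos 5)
Your proposal follows essentially the same route as the paper: apply the Gauss--Bonnet theorem to $S_+(\ep)$ with $K=1$ and $\chi(S_+(\ep))=1$ to get $-\int_{\ga(\ep)}\ka_g^{\ep}\,ds-\sum_i\al_i(\ep)=A_+(\ep)-2\pi$, send $\ep\to0$, and obtain the remaining equalities from $A_++A_-=4\pi$. Your treatment of the limit (the $O(\ep^2)$ cap areas and the near-pole curvature bookkeeping delegated to Theorem~\ref{thm:De-twoint}) is in fact more explicit than the paper's, which simply states ``by sending $\ep\to0$'', but it is the same argument.
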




%
%

We are now ready to state our main theorem, 
which solves the Fundamental Problem. 
\begin{thm}[Main Theorem]
\label{thm:main}
Suppose that the oriented closed curve $\ga$ defined in 
\eqref{eq:ga} is simple.
Then, for the motion \eqref{eq:motion}\,, 
the geometric phase is given by 
\begin{align}
\De_g
&=A_+-2\pi I_+
=-A_-+2\pi I_- 
=\frac{A_+-A_-}{2}-\pi(I_+-I_-) \,. 
\label{eq:mainthm}
\end{align}
\end{thm}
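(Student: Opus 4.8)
The plan is to derive \thmref{thm:main} as an immediate consequence of \thmref{thm:De-twoint} together with the two evaluations already recorded in \lemref{lem:vp} and \lemref{lem:GB}; essentially nothing new is needed beyond bookkeeping. The starting point is the identity of \thmref{thm:De-twoint},
\begin{align}
\De_g=\lim_{\ep\to0}\int_{\ga(\ep)}d\vp+\Bigl(-\int_{\ga}\ka_g\,ds-\sum_{i=1}^{n}\al_i\Bigr)\,,
\end{align}
in which the first summand is precisely the left-hand side of \lemref{lem:vp} and the parenthesized second summand is precisely the left-hand side of \lemref{lem:GB}. So the first step is simply to substitute: replacing the first term by $-2\pi I_++2\pi$ and the second by $A_+-2\pi$ gives
\begin{align}
\De_g=(-2\pi I_++2\pi)+(A_+-2\pi)=A_+-2\pi I_+\,.
\end{align}

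The second step is to produce the remaining two forms. Using instead the other normalization in each of the two lemmas — the first term as $2\pi I_--2\pi$, the second as $-A_-+2\pi$ — yields $\De_g=-A_-+2\pi I_-$, and averaging the two expressions for $\De_g$ then gives $\De_g=\frac{A_+-A_-}{2}-\pi(I_+-I_-)$. Here one must explicitly invoke the closure identities $I_++I_-=2$ and $A_++A_-=4\pi$ to confirm that all three displayed expressions agree; this is a one-line verification, and it completes the proof.

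The genuine content, and hence the step I would scrutinize most carefully, is not this final substitution but the hypothesis-dependent setup underpinning \lemref{lem:vp} and \lemref{lem:GB}: one needs that for all sufficiently small $\ep$ the regularized curve $\ga(\ep)$ is a simple closed piecewise-smooth curve missing both poles, so that it splits $S^2$ into two disc-like regions $S_\pm(\ep)$ with well-defined areas $A_\pm(\ep)$ and well-defined pole-counts $I_\pm(\ep)\in\{0,1,2\}$, and that these quantities are locally constant in $\ep$ (so the limits $A_\pm=A_\pm(0)$ and $I_\pm=I_\pm(0)$ exist and the Gauss--Bonnet theorem may be applied region by region). The simplicity assumption on $\ga$ enters exactly at this point. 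Since this stability was already argued when \lemref{lem:vp} and \lemref{lem:GB} were established, for the present theorem I would only recall it and then present the two-line computation above; a self-contained alternative route is also available in \appref{app:alt-proof}.
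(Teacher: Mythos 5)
Your proposal is correct and coincides with the paper's own argument: the proof of \thmref{thm:main} is exactly the substitution of \lemref{lem:vp} and \lemref{lem:GB} into \thmref{thm:De-twoint}, giving $\De_g=(-2\pi I_++2\pi)+(A_+-2\pi)=A_+-2\pi I_+$, with the remaining two expressions following from $A_++A_-=4\pi$ and $I_++I_-=2$. Your added remark that the simplicity of $\ga$ and the $\ep$-stability of $A_\pm(\ep)$, $I_\pm(\ep)$ are where the real content lies matches where the paper places that burden (in the lemmas themselves), so nothing further is needed.
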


\begin{figure}[h]
\centering
\includegraphics[width=15cm]{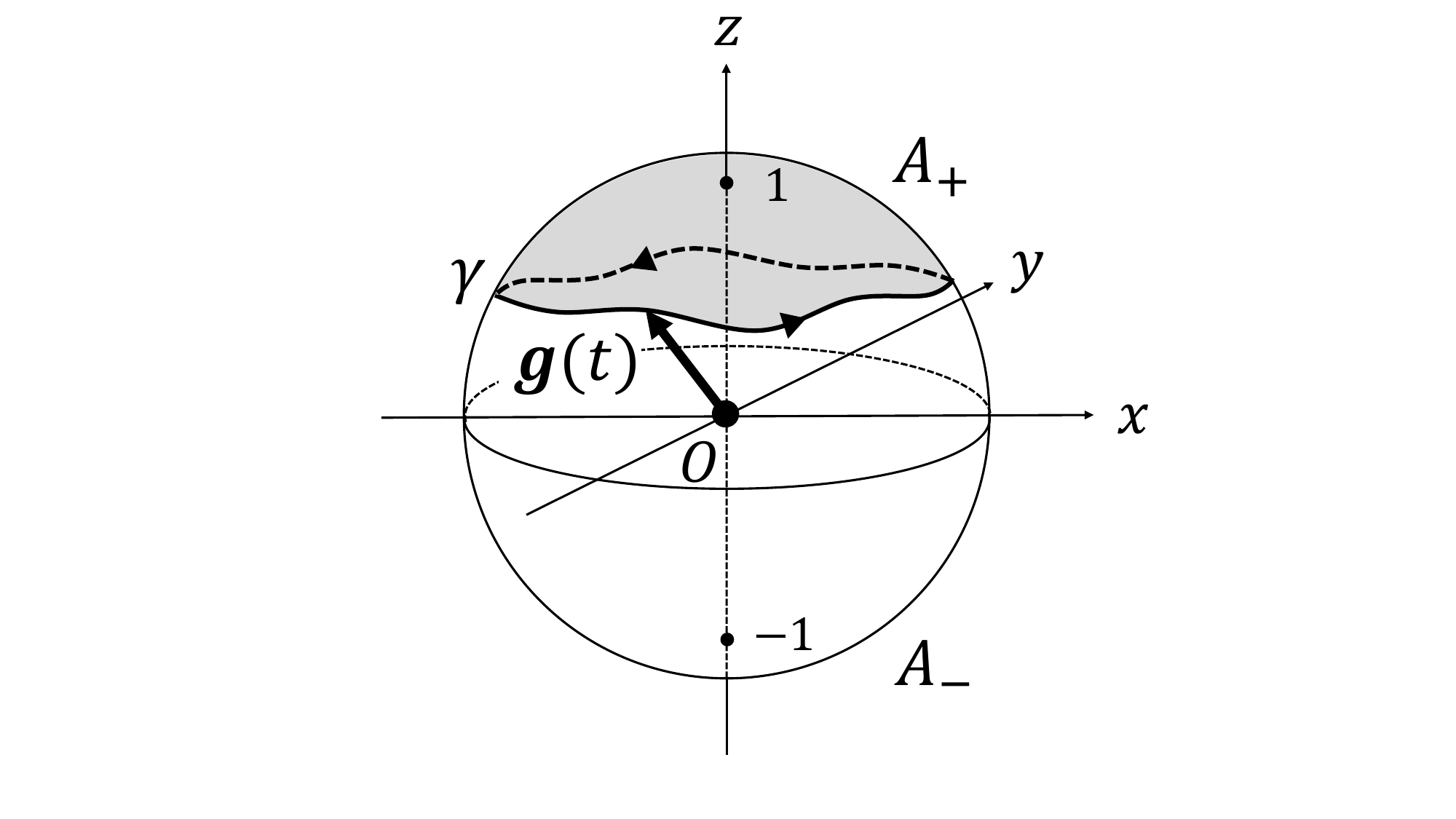}
\caption{The curve $\ga$ in $S^2$ appears as a trajectory of 
the Gauss vector $\bg(t)$ whose orientation is induced 
by the parameter $t$\,. 
$A_+$ is the area enclosed by $\ga$ on the left side, while 
$A_-$ is on the right. 
The topological index $I_+ (I_-)$ is the number of poles 
$(0,0,\pm1)$ 
contained in $\ga$ on the left (right, respectively). 
In the above picture, $I_+=I_-=1$\,. 
}
\label{fig:mainthm}
\end{figure}%

\begin{proof}	
Applying \lemref{lem:vp} and \lemref{lem:GB} for 
\thmref{thm:De-twoint}\,, 
we immediately get 
\begin{align}
\De_g
=(-2\pi I_++2\pi)+(A_+-2\pi) =A_+-2\pi I_+\,. 
\end{align}
The other expressions easily follow from 
$A_++A_-=4\pi$ and $I_++I_-=2$\,. 
\end{proof}

\ms

Our answer to the question is given below. 

\ms 

\begin{itembox}[l]{\bf Answer for the Question}
The rotation angle $\De$ of the rotating disc B in the motion 
\eqref{eq:motion} with the topological condition \eqref{eq:top}
is given by the sum of the dynamical phase $\De_d$
and the geometric phase $\De_g$ as follows, 
\begin{align}
\De=\De_d+\De_g  \quad \text{with}\quad 
\De_d=\frac{2\pi n a}{b} 
\,,\quad 
\De_g=A_+-2\pi I_+\,.
\label{eq:ans}
\end{align}
Here, $A_+$ is the area of the region surrounded by the curve 
$\ga$ in \eqref{eq:ga} which is supposed to be simple, 
and $I_+\in \{0,1,2\}$ is a number of the poles 
enclosed by the regularized curve $\ga(\ep)$ in \eqref{eq:ga-reg}
on the left side. 
\end{itembox}
\ms 

\begin{remark}
If $\ga$ is not simple, $\De_g$ is obtained by decomposing
it into the sum of simple closed curves and repeatedly applying the formula \eqref{eq:ans} for each segment. 
\end{remark}

\section{Applications for some examples}
\setcounter{equation}{0}
\label{sec:ex}

In this section, we apply our solution for the typical motions. 
After collecting the results we have obtained so far in \ssecref{subsec:results}\,, 
we will apply them to six examples in 
\ssecref{subsec:ex}\,.

\subsection{Collection of the results}
\label{subsec:results}

The total rotation angle $\De$ consists of 
the dynamical phase $\De_d$ and 
the geometric phase $\De_g$\,, 
that is $\De=\De_d+\De_g$ 
in \eqref{eq:dandg}\,. 
The dynamical phase is given by 
\begin{align}
\De_d=\frac{a\theta(1)}{b}=\frac{2\pi na}{b}\,, 
\tag{\dpropref{prop:dyn}}
\end{align}
and the geometric phase is 
\begin{align}
\De_g=A_+-2\pi I_+\,, 
\tag{\thmref{thm:main} (Main theorem)}
\end{align}
where $A_+$ is the area surrounded by the curve 
$\ga\subset S^2$ on the left side, 
and $I_+\in \{0,1,2\}$ is the number of poles enclosed by the 
regularized curve $\ga(\ep)\subset S^2-\{(0,0,\pm1)\}$\,. 
Their explicit relations are the followings, 
\begin{align}
A_+&=2\pi -\int_\ga \ka_g(s)ds -\sum^n_{i=1}\al_i \,,
\tag{Lem.\,\ref{lem:GB}}
\\
2\pi I_+&=2\pi - \lim_{\ep\to0}\int_{\ga(\ep)}d\vp \,. 
\tag{Lem.\,\ref{lem:vp}}
\end{align}
We have gained \thmref{thm:main} 
by rewriting 
\begin{align}
\De_g
=\int_{0}^{1} \cos\be(t) \frac{d\theta(t)}{dt} dt 
=\lim_{\ep\to0}\int_{\ga(\ep)} \cos\be_\ep\, d\theta
\tag{\dthmref{thm:line-int} and Prop.\,\ref{prop:gp-reg}}
\end{align}
according to \propref{prop:oneform} and  
plugging the above expressions in it.

\subsection{Examples}
\label{subsec:ex}

We apply the above results to some typical examples. 
Here, we discuss the six examples. The first four examples, 
(i), (ii), (iii), and (iv), treat the cases in which $\be$ is constant. 
While, the last two examples, (v) and (vi), focus 
on the cases in which $\be$ varies. 
All data for the rotation angles are listed in Tab.\,\ref{tab:ex}\,. 


\begin{table}[htbp]
\centering
\setlength{\tabcolsep}{5pt}
\caption{All data to find out the rotation angles 
$\De$ for the six examples in \secref{subsec:ex}\,.  
Note that $\De=\De_d+\De_g$ and $\De_g=A_+-2\pi I_+$\,. 
}
\begin{tabular}{|c|c|c|c|c|c|c|} \hline
& \dis{\De_d} & \dis{A_+} & \dis{2\pi I_+}  
& \dis{\De_g}
& \dis{\De}
& Figures
\\ \hline\hline 
\phantom{\Bigg|}(i)\phantom{\Bigg|}  
&\dis{\frac{2\pi a}{b}} 
& $4\pi$ 
& $2\pi$ 
& $2\pi$ 
& \dis{\frac{2\pi a}{b}+2\pi} 
& \figref{fig:exam-zero} \\ \hline
\phantom{\Bigg|}(ii)\phantom{\Bigg|} 
& \dis{\frac{2\pi a}{b}} 
& $2\pi$ 
& $2\pi$ 
& $0$ 
& \dis{\frac{2\pi a}{b}} 
&\figref{fig:exam-pihalf} \\ \hline
\phantom{\Bigg|}(iii)\phantom{\Bigg|} 
&\dis{\frac{2\pi a}{b}} 
& $0$ 
& $2\pi$ 
& $-2\pi$ 
& \dis{\frac{2\pi a}{b}-2\pi} 
&\figref{fig:exam-pi} \\ \hline
\phantom{\Bigg|}(iv)\phantom{\Bigg|} 
&\dis{\frac{2\pi a}{b}} 
& $2\pi(1+\cos\be_0)$ 
&$2\pi$ 
& $2\pi \cos\be_0$ 
&\dis{\frac{2\pi a}{b}+2\pi \cos\be_0} 
&\figref{fig:exam-beta} \\ \hline
\phantom{\Bigg|}(v)\phantom{\Bigg|} 
&\dis{0} 
& \dis{\frac{\pi}{2}}
&\dis{0}
&\dis{\frac{\pi}{2}}
&\dis{\frac{\pi}{2}} 
&\figref{fig:exam-triang-minus}\,, 
\ref{fig:exam-triang}\,,
\ref{fig:exam_v}
\\ \hline
\phantom{\Bigg|}(vi)\phantom{\Bigg|} 
&\dis{-\frac{2\pi a}{b}} 
& \dis{\frac{\pi}{2}} 
&\dis{2\pi} 
&\dis{-\frac{3\pi}{2} } 
&\dis{- \frac{2\pi a}{b}-\frac{3\pi}{2}}
&\figref{fig:exam-triang-plus}\,,
\ref{fig:exam-triang}\,,
\ref{fig:exam_vi}
\\ \hline
\end{tabular}
\label{tab:ex}
\end{table}%

\ms 

The first four examples consider cases where $\be$ is constant. 
In the examples (i), (ii), and (iii), we shall revisit our observations in 
\ssecref{subsubsec:obs}\,. 
Example (iv) interpolates the three examples.  

\ms 
\begin{itemize}
\item[(i)] 
The first case is the motion \eqref{eq:motion} defined by 
\begin{align}
\theta(t)=2\pi t\,, \qquad \be(t)=0\,. 
\end{align}
In this case, the dynamical phase reads \dis{\De_d=2\pi a/b\,.}
The curve $\ga$ is just a point at the {\it south pole} $(0,0,-1)$\,, 
and the regularized curve  $\ga(\ep)$ is a small circle enclosing it on the right side. 
See Fig.\,\ref{fig:exam-zero}. 
Hence, $A_+$ is the area of $S^2$\,, namely $4\pi$\,, and $I_+=1$\,. 
This yields 
\begin{align}
\De_g=A_+-2\pi I_+=4\pi-2\pi =2\pi\,. 
\end{align}
Therefore, the total rotation angle is given by 
\begin{align}
\De=\De_d+\De_g=\frac{2\pi a}{b}	+2\pi\,. 
\end{align}

\item[(ii)]
The second case is defined by 
\begin{align}
\theta(t)=2\pi t\,, \qquad \be(t)=\frac{\pi}{2}\,. 
\end{align}
In this case, the dynamical phase reads \dis{\De_d=2\pi a/b\,.}
The curve $\ga$ is the grand circle in $z=0$ plane, 
that is the {\it equator} oriented from west to east.  
The regularized curve  $\ga(\ep)$ is same to $\ga$\,. 
Hence, $A_+$ is the half area of $S^2$\,, namely $2\pi$\,, and $I_+=1$\,. 
See Fig.\,\ref{fig:exam-pihalf}. 
This yields the vanishing of the geometric phase, 
\begin{align}
\De_g=A_+-2\pi I_+=2\pi-2\pi =0\,. 
\end{align}
Then, the total rotation angle is given by 
\begin{align}
\De=\De_d+\De_g=\frac{2\pi a}{b}	+0=\frac{2\pi a}{b}\,. 
\end{align}

\item[(iii)]
The third case is the motion defined by 
\begin{align}
\theta(t)=2\pi t\,, \qquad \be(t)=\pi\,. 
\end{align}
In this case, the dynamical phase is \dis{\De_d=2\pi a/b}\,, too.
The curve $\ga$ is just a point at the {\it north pole} $(0,0,1)$\,, 
and the regularized curve  $\ga(\ep)$ is a small circle enclosing it on the left side. 
Hence, $A_+$ is zero, and $I_+=1$\,. 
See Fig.\,\ref{fig:exam-pi}. 
This gives  
\begin{align}
\De_g=A_+-2\pi I_+=0-2\pi =-2\pi\,. 
\end{align}
Therefore, the total rotation angle is given by 
\begin{align}
\De=\De_d+\De_g=\frac{2\pi a}{b}	-2\pi\,. 
\end{align}
\end{itemize}

\ms 

The above three examples reproduce our observations 
discussed in \secref{subsubsec:obs}\,. 
Next example interpolates these three cases. 

\ms 

\begin{figure}[h]
\centering
\begin{minipage}{0.45\columnwidth}
\centering
\includegraphics[width=1.7\columnwidth]{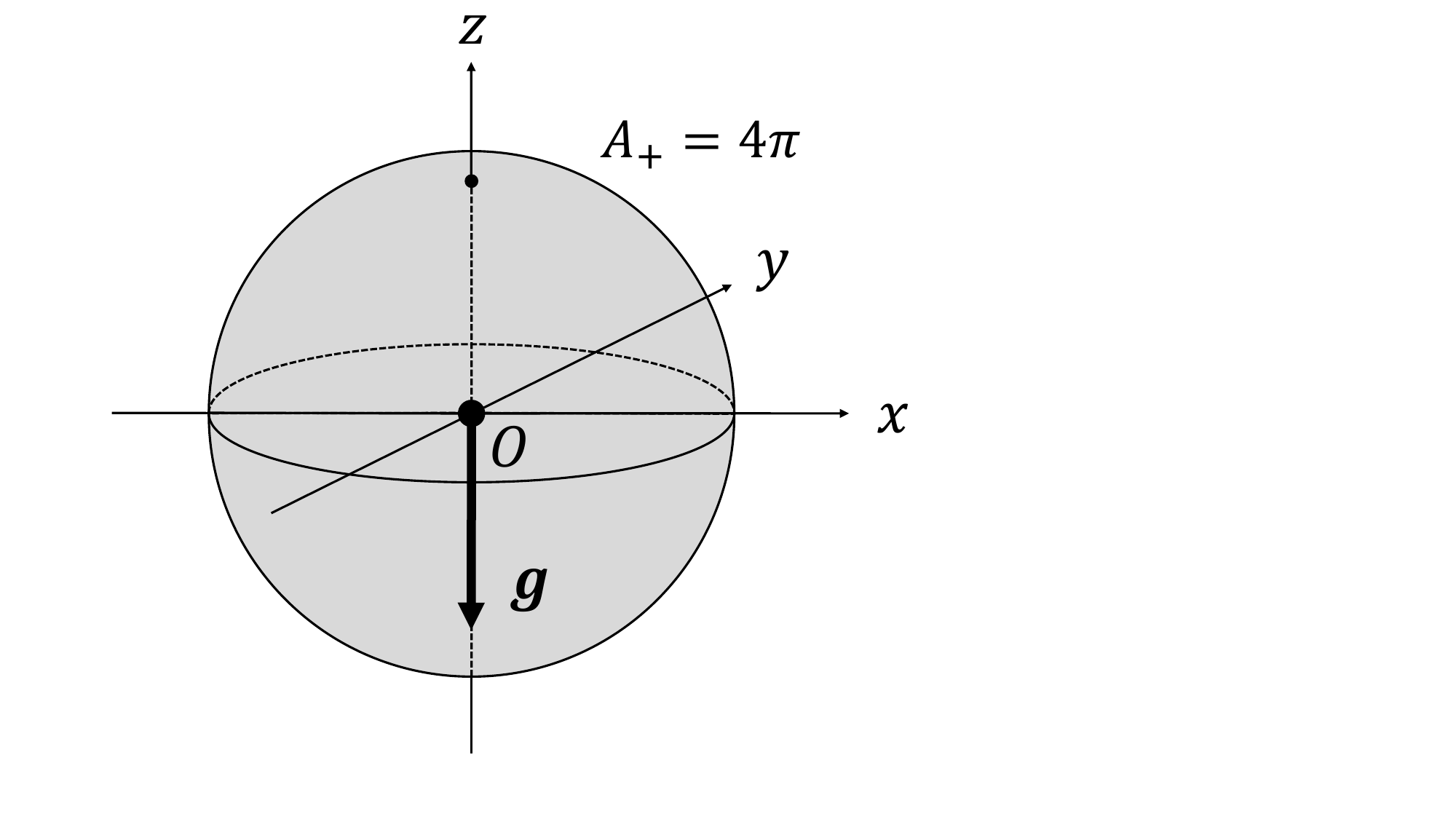}
\caption{Area, for example (i), is $4\pi$\,.
The curve $\ga$ a point at $(0,0,-1)$\,.  
}
\label{fig:exam-zero}
\end{minipage}
\hspace{3mm}
\begin{minipage}{0.45\columnwidth}
\centering
\includegraphics[width=1.7\columnwidth]{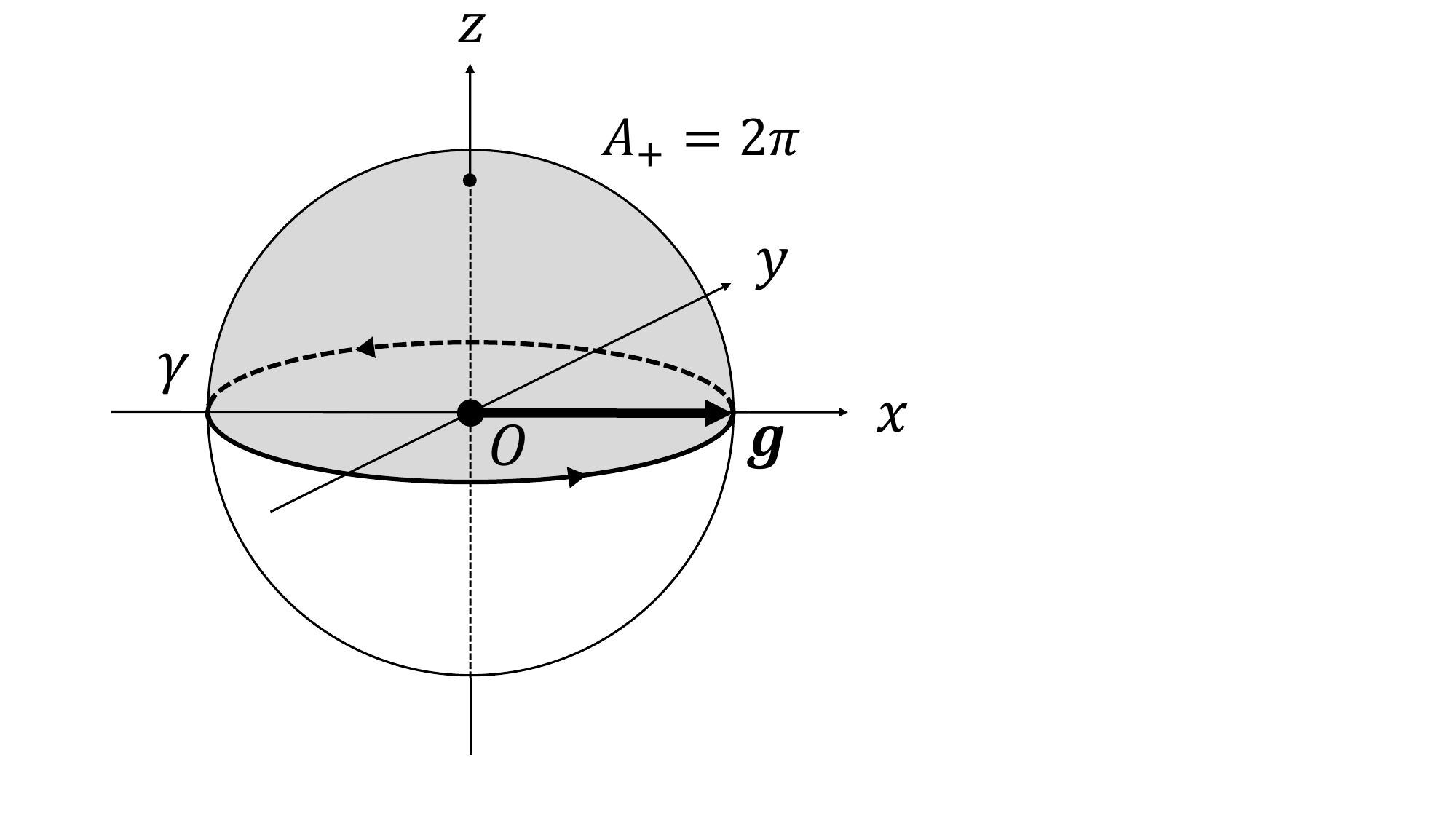}
\caption{Area, for example (ii), is $2\pi$\,.
The curve $\ga$ is the equator.   
}
\label{fig:exam-pihalf}
\end{minipage}
\end{figure}
\begin{figure}[h]
\centering
\begin{minipage}{0.45\columnwidth}
\centering
\includegraphics[width=1.7\columnwidth]{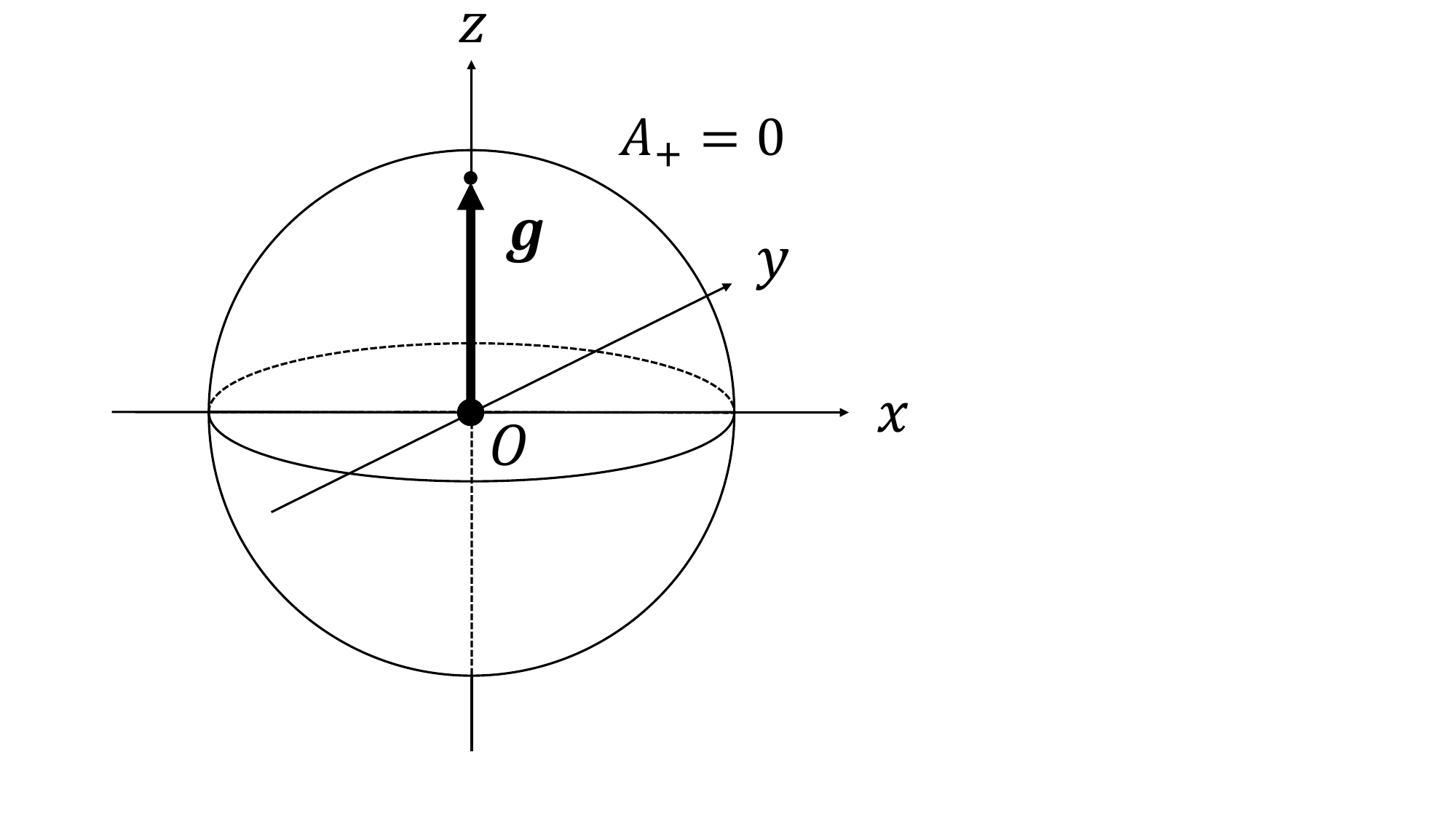}
\caption{Area, for example (iii), is zero.
The curve $\ga$ shrinks to $(0,0,1)$\,.   
\phantom{Area, for example (iii), is zero.}
}
\label{fig:exam-pi}
\end{minipage}
\hspace{3mm}
\begin{minipage}{0.45\columnwidth}
\centering
\includegraphics[width=1.7\columnwidth]{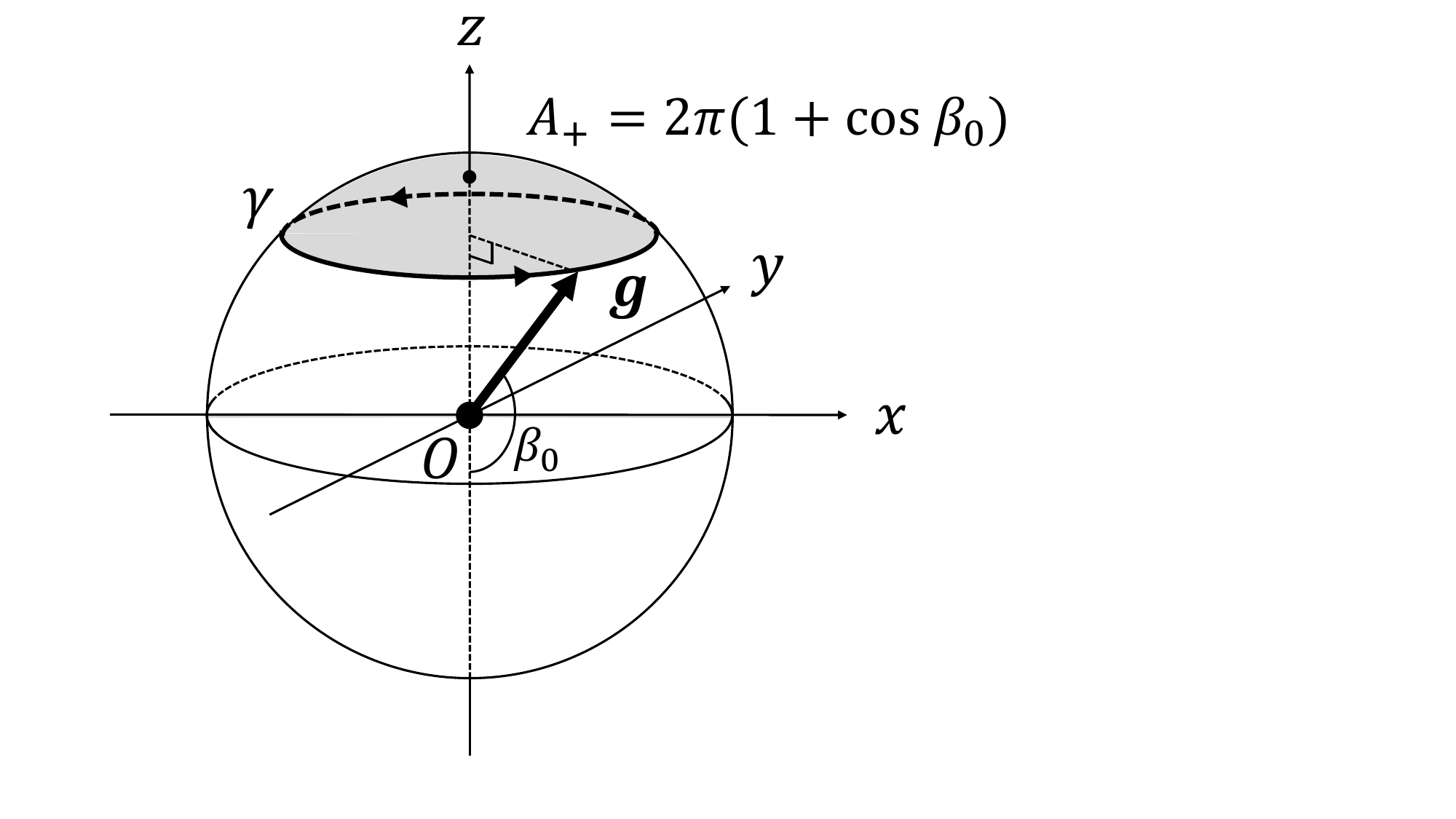}
\caption{Area, for example (iv), is \dis{2\pi(1+\cos\be_0)}\,.
The curve $\ga$ is a circle of latitude.    
}
\label{fig:exam-beta}
\end{minipage}
\end{figure}

\begin{itemize}
\item[(iv)]
The fourth case is the motion defined by 
\begin{align}
\theta(t)=2\pi t\,, \qquad \be(t)=\be_0\,,
\end{align}
where $\be_0\in [0, \pi]$ is constant. 
It is noticed that $\be_0=0, \pi/2, \pi$ correspond to 
the cases (i), (ii), and (iii), respectively. 
In this case, the dynamical phase is also \dis{\De_d=2\pi a/b}\,.
The curve $\ga$ is just a small circle with a constant {\it latitude}. 
The regularized curve  $\ga(\ep)$ is necessary if either $\be_0=0$ or $\pi$. 
See Fig.\,\ref{fig:exam-beta}. 
In this case, $A_+$ is the area of the spherical cap covering the north pole, 
which is calculated as 
\begin{align}
A_+=2\pi(1+\cos\be_0)\,,  
\end{align}
and $I_+=1$\,. 
Hence, we obtain 
\begin{align}
\De_g=A_+-2\pi I_+=2\pi(1+\cos\be_0)-2\pi =2\pi\cos\be_0\,. 
\label{eq:exam-iv-geo}
\end{align}
We will see how this geometric phase is relating to the {\it Foucault's pendulum}
in \secref{subsec:fou}\,. 
Therefore, the total rotation angle is given by 
\begin{align}
\De=\De_d+\De_g=\frac{2\pi a}{b}	+2\pi\cos\be_0\,. 
\end{align}
\end{itemize}

\ms 
The following two examples address cases where $\be$ changes. 
It is noted that they have the same closed curve $\ga$\,,
but the regularized curves $\ga(\ep)$ are different, 
which give rise to the different indices $I_+$\,.  
This is a concrete example of the situation mentioned in 
\remref{rem:ga12}\,. 

\ms 

\begin{itemize}
\item[(v)] 
The fifth case is the motion defined by 
\begin{align}
\label{eq:exam-v}
\theta(t)=\begin{cases}
2\pi t & (0\leq t\leq 1/4) \\
\pi/2 & (1/4\leq t\leq 1/2) \\
2\pi(3/4 -t) &  (1/2\leq t\leq 3/4) \\
0 &  (3/4\leq t\leq 1) 
\end{cases}\,, 
\quad 
\be(t)=\begin{cases}
0 & (0\leq t\leq 1/4) \\
2\pi (t-1/4) & (1/4\leq t\leq 1/2) \\
\pi /2 &  (1/2\leq t\leq 3/4) \\
2\pi(1-t) &  (3/4\leq t\leq 1) 
\end{cases}\,. 
\end{align}
In this case, it is noticed that the dynamical phase is zero, 
\begin{align}
\De_d=\frac{a\theta(1)}{b}=0\,. 
\end{align}
The curve $\ga$ is a curved triangle on $S^2$ 
with one vertex at the south pole $(0,0,-1)$ and the other two on the equator.  
The orientation of $\ga$ on the equator is from east to west.
Every external angle at the corners is $\pi/2$, and 
all edges are parts of the grand circle\,, {\it i.e.}\,, the geodesic lines. 
The regularized curve  $\ga(\ep)$ is a curved triangle 
having a chip at the corner of $(0,0,-1)$\,. 
This gives rise to $I_+=0$ because the south pole {\it is not} 
included in the region $S_+(\ep)$\,.  
See \figref{fig:exam-triang-minus} 
and \figref{fig:exam-triang}\,, 
where we have flipped the direction of $z$-axis for the explanation.  
On the other hand, $A_+$ in 
\figref{fig:exam-triang}
is the $1/8$ area of $S^2$\,, that is,
$4\pi/8=\pi/2$\,. 
Consequently, the geometric phase turns out to be  
\begin{align}
\De_g=A_+-2\pi I_+=\frac{\pi}{2}-0 =\frac{\pi}{2}\,. 
\label{eq:exam-v-gp}
\end{align}
Therefore, the total rotation angle is given by 
\begin{align}
\De=\De_d+\De_g=0+\frac{\pi}{2}=\frac{\pi}{2}\,. 
\end{align}
In this example, it is interesting that $\De$ does not depend 
on the radii of discs $a, b$ due to the vanishing 
of the dynamical phase $\De_d=0$\,,  
and is equal to the geometric phase $\De_g$ itself.

\item[(vi)] 
The sixth case is the motion defined by 
\begin{align}
\label{eq:exam-vi}
\theta(t)=\begin{cases}
-6\pi t & (0\leq t\leq 1/4) \\
-3\pi/2 & (1/4\leq t\leq 1/2) \\
-2\pi(t+1/4) &  (1/2\leq t\leq 3/4) \\
-2\pi &  (3/4\leq t\leq 1) 
\end{cases}\,, 
\quad 
\be(t)=\begin{cases}
0 & (0\leq t\leq 1/4) \\
2\pi (t-1/4) & (1/4\leq t\leq 1/2) \\
\pi /2 &  (1/2\leq t\leq 3/4) \\
2\pi(1-t) &  (3/4\leq t\leq 1) 
\end{cases}\,. 
\end{align}
In comparison with the previous example (v), 
$\be(t)$ is same but $\theta(t)$ is different. 
In this case, the dynamical phase reads 
\begin{align}
\De_d=\frac{a\theta(1)}{b}=-\frac{2\pi a}{b}\,. 
\end{align}
It is noted that 
the curve $\ga$ is exactly the same as the example (v).
However, the regularized curve  $\ga(\ep)$ is a curved triangle 
with a protrusion covering the pole $(0,0,-1)$\,. 
See Fig.\,\ref{fig:exam-triang-plus} and Fig.\,\ref{fig:exam-triang}. 
This yields $I_+=1$ because the south pole {\it is} 
contained in the region $S_+(\ep)$\,.  
Taking into account $A_+=\pi/2$\,, we gain the geometric phase as  
\begin{align}
\De_g=A_+-2\pi I_+=\frac{\pi}{2}-2\pi =-\frac{3\pi}{2}\,. 
\label{eq:exam-vi-gp}
\end{align}
Therefore, the total rotation angle is given by 
\begin{align}
\De=\De_d+\De_g=-\frac{2\pi a}{b}-\frac{3\pi}{2}\,. 
\end{align}
\end{itemize}

\begin{figure}[htbp]
\centering
\begin{minipage}{0.45\columnwidth}
\centering
\includegraphics[width=1.7\columnwidth]{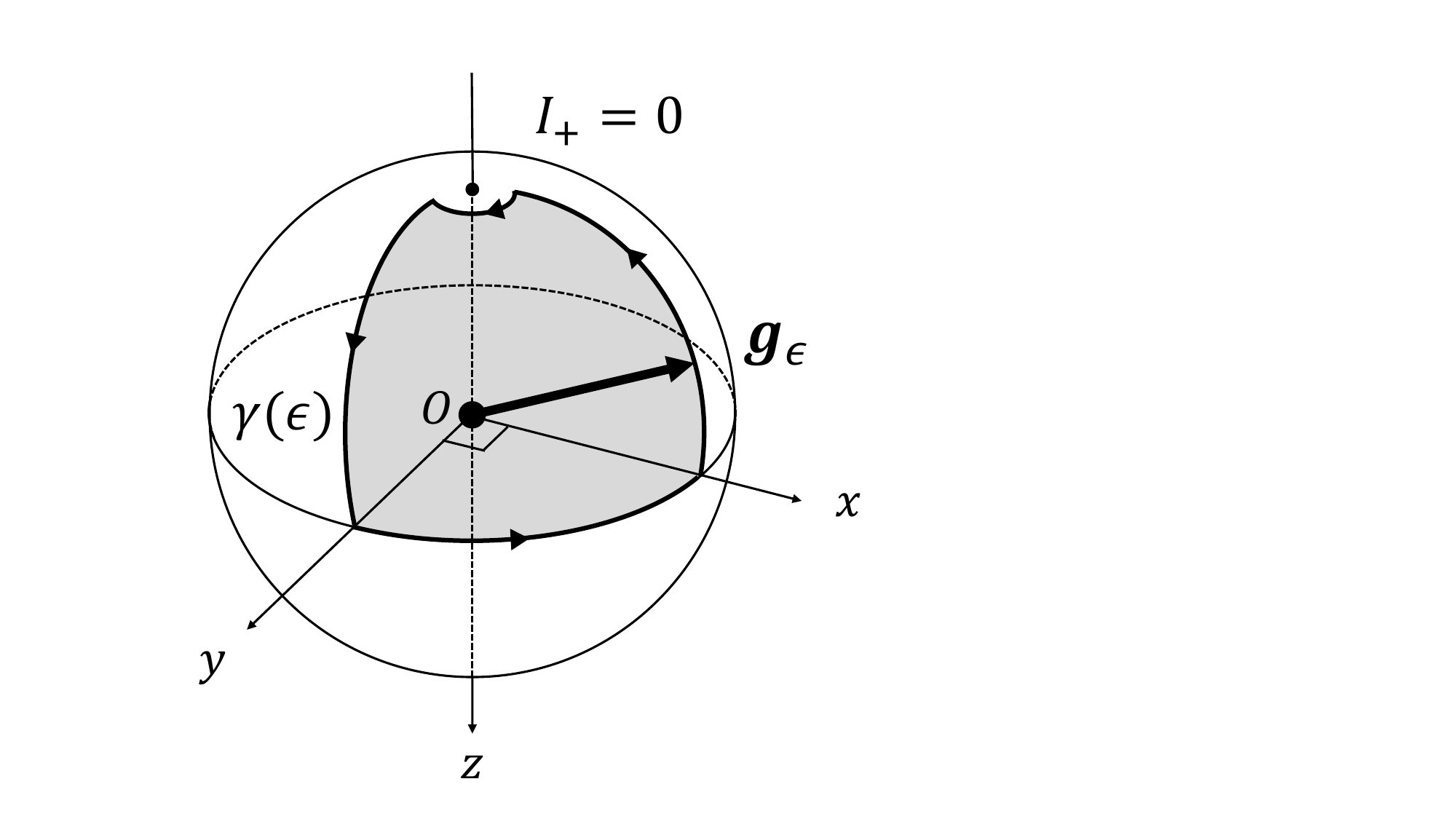}
\caption{The regularized curve $\ga(\ep)$ for example (v). 
The south pole {\it is not} contained in the grayed 
region $S_+(\ep)$\,, 
and hence $I_+=0$.  
}
\label{fig:exam-triang-minus}
\end{minipage}
\hspace{3mm}
\begin{minipage}{0.45\columnwidth}
\centering
\includegraphics[width=1.7\columnwidth]{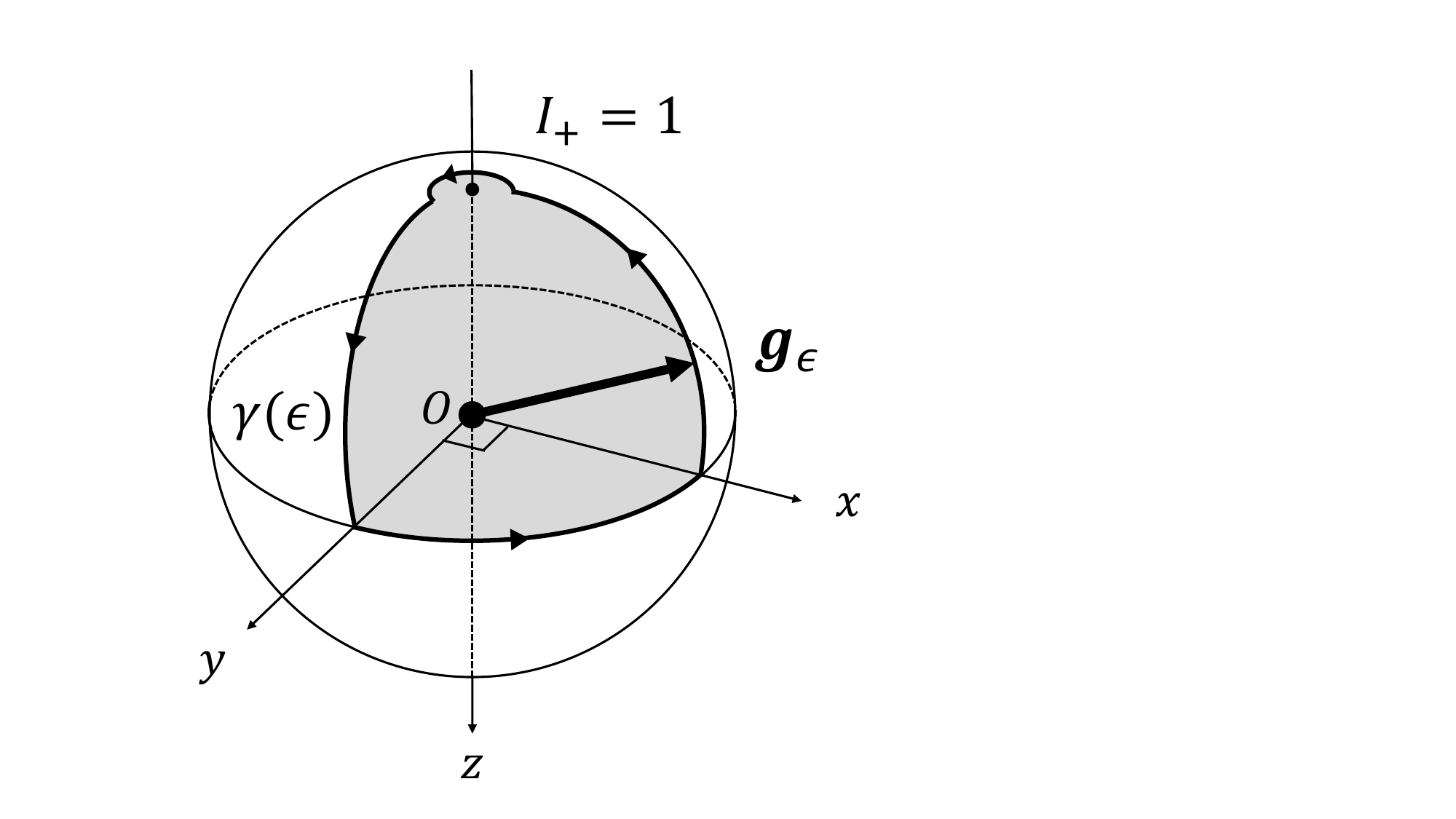}
\caption{The regularized curve $\ga(\ep)$ for example (vi). 
The south pole {\it is} contained in the grayed region 
$S_+(\ep)$\,, and hence $I_+=1$.  
}
\label{fig:exam-triang-plus}
\end{minipage}
\\
\vspace{3mm}
\centering
\begin{minipage}{0.45\columnwidth}
\centering
\includegraphics[width=1.7\columnwidth]{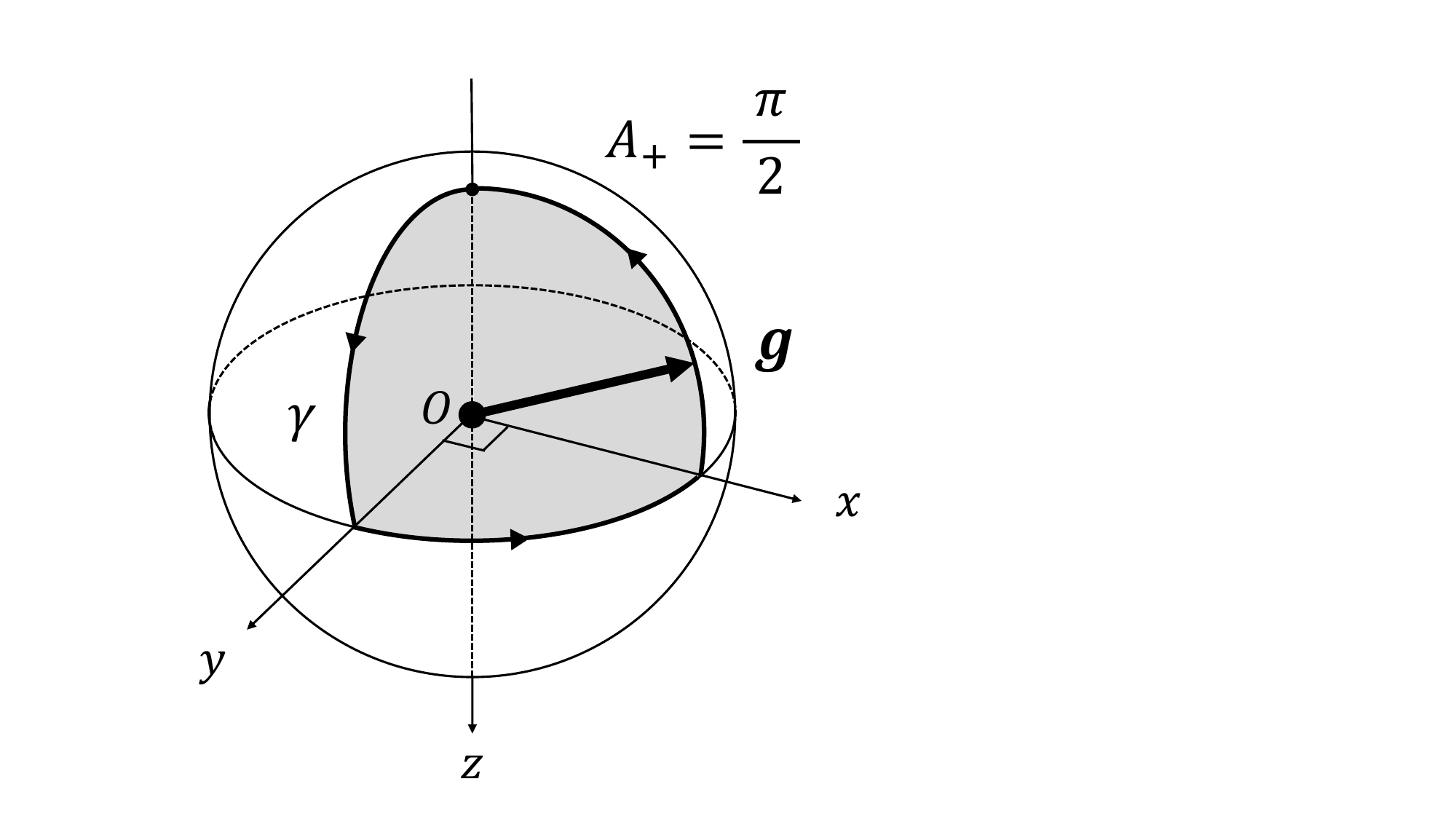}
\caption{Areas, for both examples (v) and (vi), are $\pi/2$\,. 
The curve $\ga$ is a triangle whose edges are the geodesic lines. 
}
\label{fig:exam-triang}
\end{minipage}
\hspace{3mm}
\begin{minipage}{0.45\columnwidth}
\centering
\end{minipage}
\end{figure}

\ms 

\begin{remark}
It is worth directly evaluating  the geometric phases 
for the above examples  (v) and (iv).  
To calculate the integrals given in \dthmref{thm:line-int}\,,
it is sufficient to consider a mesh, 
\begin{align}
0=t_0
<t_1
<t_2
<t_3
<t_4=1\,,
\qquad 
t_1=\frac{1}{4}\,,\quad 
t_2=\frac{1}{2}\,,\quad 
t_3=\frac{3}{4}\,.
\label{eq:vvi-mesh}
\end{align}

\ms 

\begin{itemize}
\item[(v)] 
For the motion defined by \eqref{eq:exam-v}\,, 
it is calculated as 
\begin{align}
\De_g&=\int_{0}^{1} \cos\be(t) \frac{d\theta(t)}{dt} dt 
=\left[\int_{0}^{1/4}+\int_{1/4}^{1/2}
+\int_{1/2}^{3/4}+\int_{3/4}^{1}\right] 
\cos\be(t)  \frac{d\theta(t)}{dt} dt 
\nln
&=
1\times \frac{\pi}{2}
+\int_{1/4}^{1/2}\cos\be(t) \times 0\, dt 
+0\times \left(-\frac{\pi}{2}\right)
+\int_{3/4}^{1} \cos\be(t) \times 0\, dt 
\nln
&= \frac{\pi}{2}\,. 
\end{align}
This indeed coincides with \eqref{eq:exam-v-gp}\,. 
See Fig.\,\ref{fig:exam_v}\,.

\item[(vi)]
For the motion defined by \eqref{eq:exam-vi}\,,
it is similarly computed as 
\begin{align}
\De_g&
=\left[\int_{0}^{1/4}+\int_{1/4}^{1/2}
+\int_{1/2}^{3/4}+\int_{3/4}^{1}\right] 
\cos\be(t)  \frac{d\theta(t)}{dt} dt 
\nln
&=
1\times \left(-\frac{3\pi}{2}\right)
+\int_{1/4}^{1/2}\cos\be(t) \times 0\, dt 
+0\times \left(-\frac{\pi}{2}\right)
+\int_{3/4}^{1} \cos\be(t) \times 0\, dt 
\nln
&= -\frac{3\pi}{2}\,. 
\end{align}
This coincides with \eqref{eq:exam-vi-gp}\,. 
See Fig.\,\ref{fig:exam_vi}\,. 
\end{itemize}

\begin{figure}
\centering
\begin{subfigure}{0.45\columnwidth}
\centering
\includegraphics[width=1.3\columnwidth]{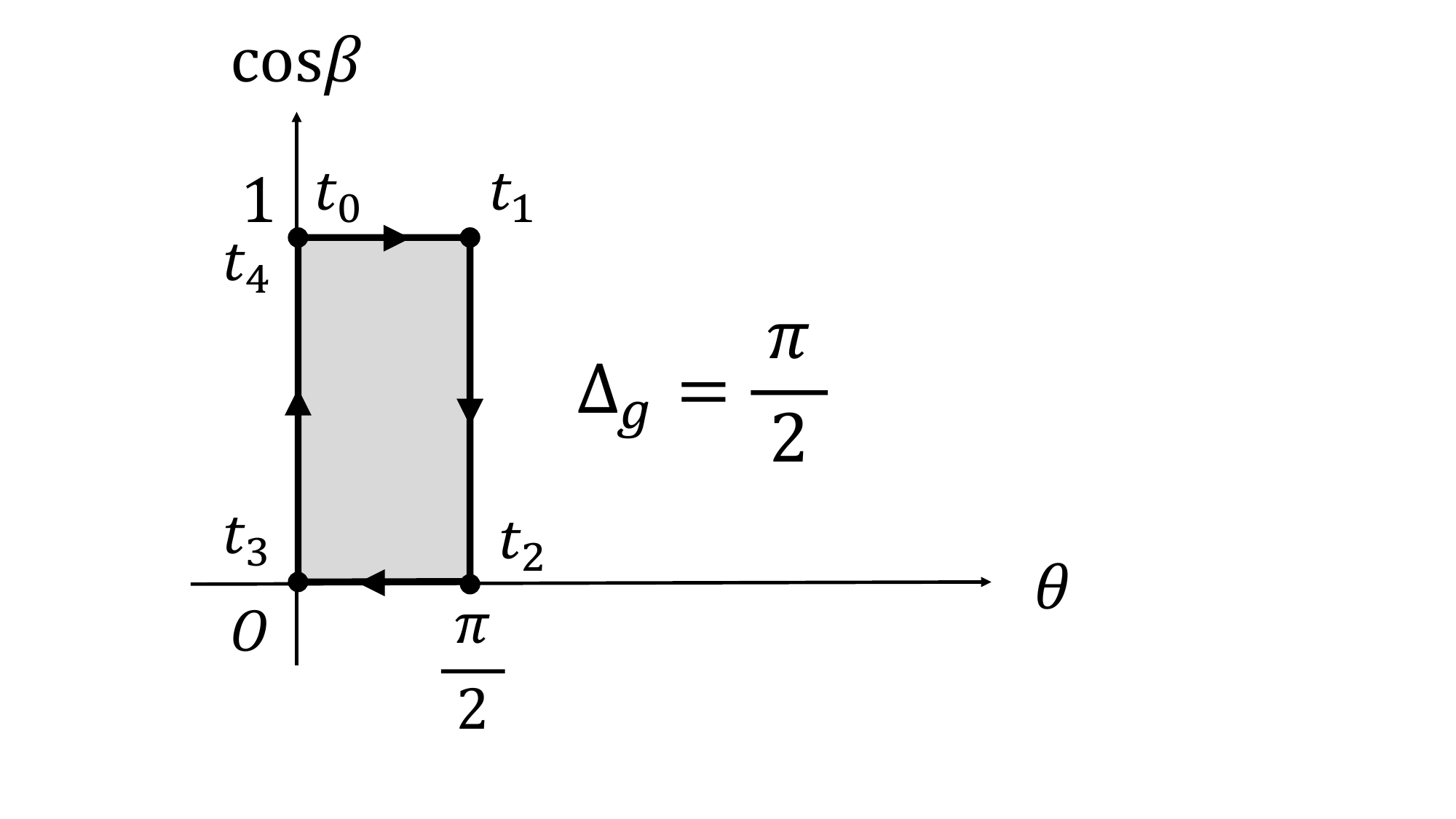}
\caption{The line integral for example (v)}
\label{fig:exam_v}
\end{subfigure}
\begin{subfigure}{0.45\columnwidth}
\centering
\includegraphics[width=1.3\columnwidth]{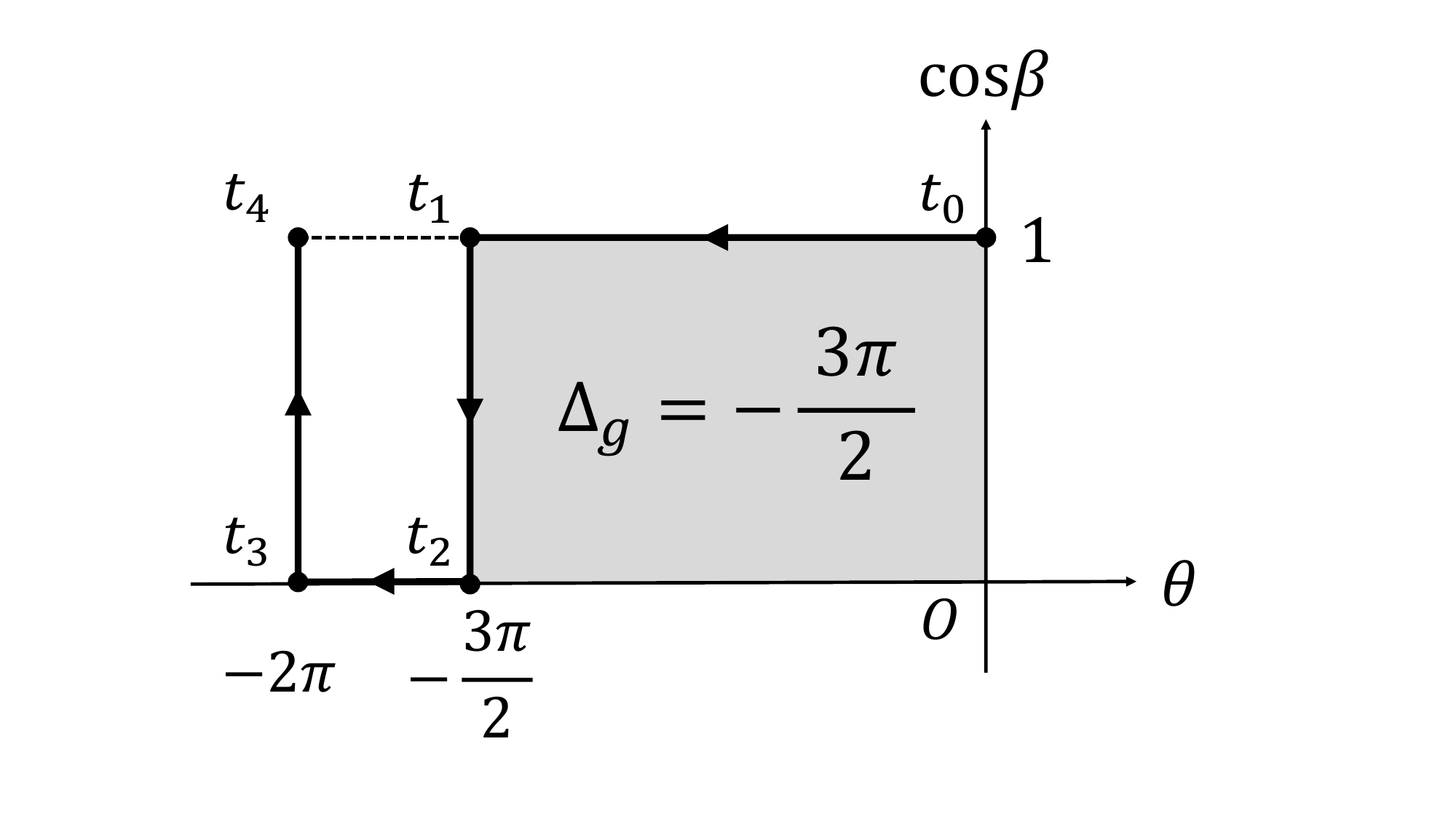}
\caption{The line integral for example (vi)}
\label{fig:exam_vi}
\end{subfigure}
\caption{The signed areas indicated in gray 
give the geometric phases $\De_g$\,. 
The time evolution is from $t_0=1$ to $t_4=1$
as  in \eqref{eq:vvi-mesh}\,. }
\end{figure}
\end{remark}

\section{Some models sharing the common structure}
\label{sec:com}
\setcounter{equation}{0}

In this section, we shall elucidate that the several models 
exhibiting the geometric phases share the common structure
with our model, which include
the {\it Foucault's pendulum} (\ssecref{subsec:fou}), 
{\it Dirac's monopole potentials} (\ssecref{subsec:mono}),
and {\it  Berry phases} (\ssecref{subsec:berry}).   
In other words, our model extracts the underlying mathematical 
structures for these models.

\subsection{Foucault's pendulum and its generalization}
\label{subsec:fou}

Consider a sufficiently long and heavy pendulum suspended 
from the high ceiling so that the oscillation would not be 
damped by the effect of friction. 
If we set the pendulum at the north pole, 
the plane of oscillation spontaneously 
rotates clockwise by $2\pi$ in $24$ hours, 
because the earth itself rotates counterclockwise by $2\pi$ in a day.  
At the south pole, the pendulum rotates counterclockwise
by $2\pi$ in $24$ hours, 
while, on the equator, it remains in the same plane.  

\ms 

This is a famous experiment to demonstrate the earth's rotation,
and often referred to as the {\it Foucault's pendulum} \cite{Fou}. 
In general, the oscillation plane of the Foucault's pendulum 
at the latitude 
$\la\in [-\pi/2, \pi/2]$ rotates clockwise by 
\begin{align}
\De_{\text{Fou}}=2\pi \sin \la \,,  
\label{eq:fou-sin}
\end{align}
in $24$ hours, which is known as the {\it Foucault's sine law}. 

\ms

Our model can be mapped to Foucault's pendulum
through the following steps. 
\begin{enumerate}
\item By parallel transportation in $\RR^3$\,, 
we move disc B so that its center is at 
$\bg(\theta, \be)\in S^2$ and it is tangent to $S^2$
at the point. 

\item Assume $S^2$ as the earth and 
ignore the dynamical phase $\De_d$\,. 
We can regard disc B as the local frame of an observer 
at $\bg(\theta, \be)\in S^2$\,,  
namely a {\it compass} beneath the pendulum. 

\item On the disc B, set a Foucault's pendulum. 
Then, the plane of the pendulum oscillation defines 
the local inertial frame on it. 
\end{enumerate}


By the above procedure, 
our geometric phase $\De_g$ can be identified with 
Foucault's rotation angle $\De_{\rm Fou}$\,. 
Precisely, $\De_g$ expresses how disc B has been forced to 
rotate counterclockwise observed from $-\bg$ (\figref{fig:geo}). 
On the other hand, 
$\De_{\rm Fou}$ means how the oscillation plane,
or equivalently the local inertial frame, on disc B 
rotates clockwise measured from $+\bg$ (\figref{fig:fou}). 
Thus, $\De_g$ and $\De_{\rm Fou}$ measure the same angle 
in the opposite directions as illustrated in \figref{fig:geofou}\,.  
The explicit relations of other quantities between our model 
and Foucault's pendulum
are listed in \tabref{tab:fou}\,.

\begin{figure}
\centering
\begin{subfigure}{0.45\columnwidth}
\centering
\includegraphics[width=\columnwidth]{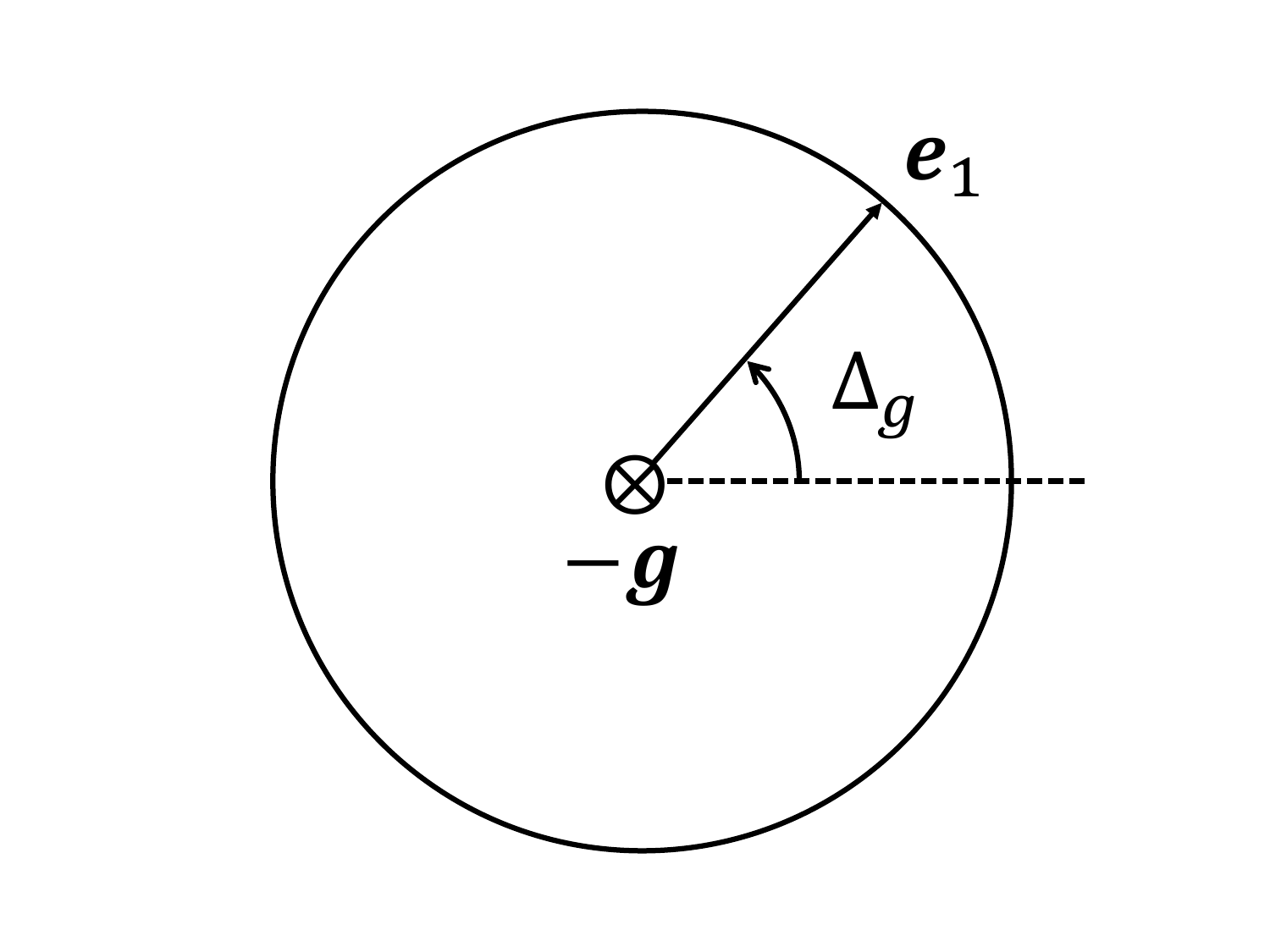}
\caption{The geometric phase, $\De_g$\,. }
\label{fig:geo}
\end{subfigure}
\begin{subfigure}{0.45\columnwidth}
\centering
\includegraphics[width=\columnwidth]{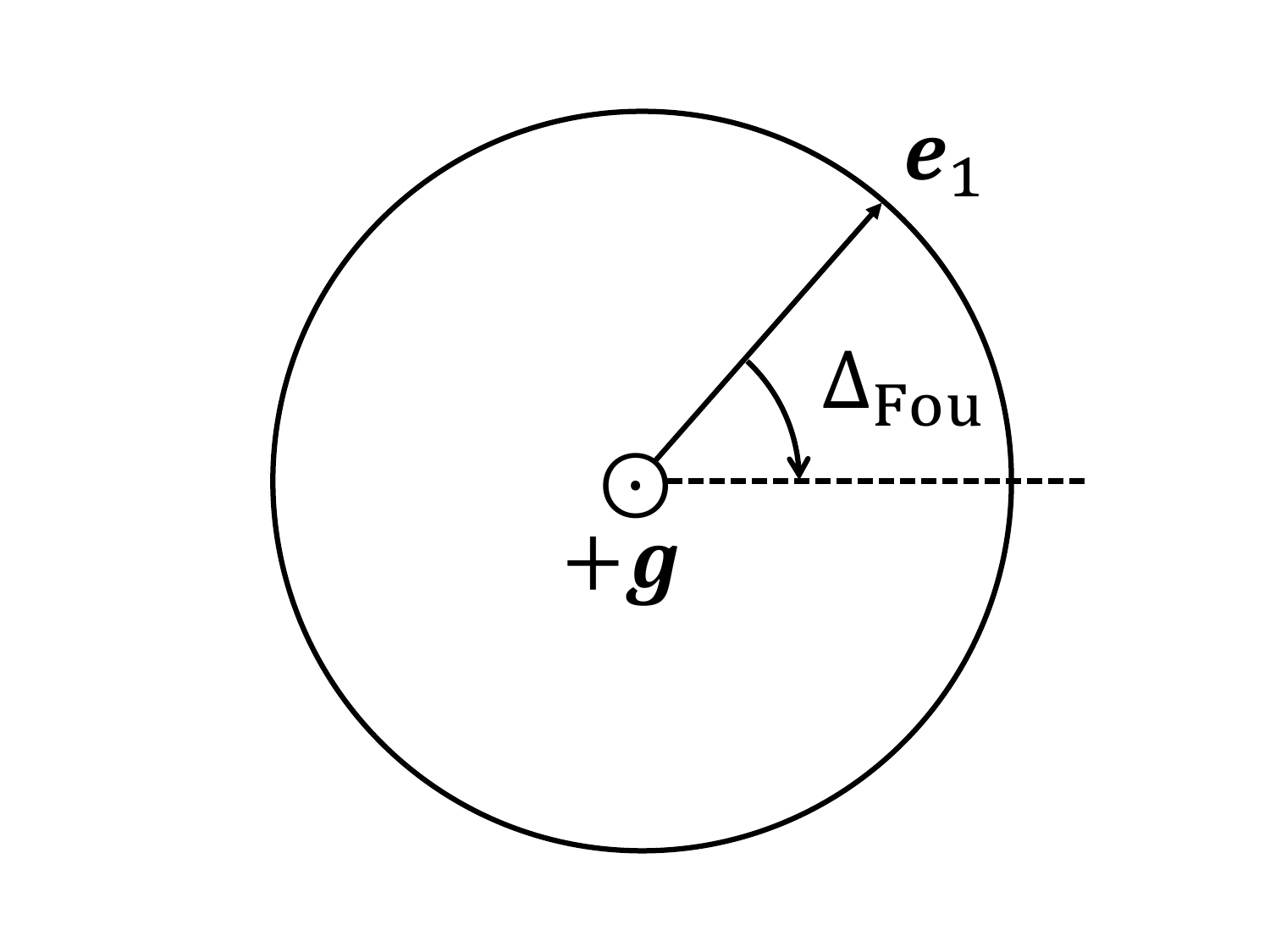}
\caption{The shift of the oscillation plane, $\De_{\rm Fou}$\,. }
\label{fig:fou}
\end{subfigure}
\caption{The relation between $\De_g$ and $\De_{\rm Fou}$
on disc B. The dotted line means the local inertial system, 
in which Foucault's pendulum remains.}
\label{fig:geofou}
\end{figure}
\begin{prop}
\label{prop:fou}
Let $\De_{\rm Fou}$ be the rotation angle of the Foucault's
pendulum. Then, it is related to the geometric phase as 
\begin{align}
\De_{\rm Fou}=-\De_g\, . 
\end{align}
\end{prop}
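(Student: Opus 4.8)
The plan is to identify $\De_{\rm Fou}$ with the holonomy of the Levi-Civita connection on $S^2$ along the closed curve $\ga$ traced by the Gauss vector, and then recognize that this holonomy is precisely $-\De_g$ by the computations already carried out in the excerpt. Concretely, I would proceed as follows. First, I would make precise the claim, implicit in the three-step mapping described above, that the oscillation plane of Foucault's pendulum is parallel-transported along $\ga$ in the sense of the induced connection on $TS^2$: this is the physical content of the statement that the pendulum stays in the local inertial frame while the compass (disc B) is dragged along. Thus the angle $\De_{\rm Fou}$ by which the oscillation plane has rotated \emph{clockwise} relative to disc B after one circuit equals the angle by which a parallel-transported tangent vector fails to return to itself, measured against the frame carried by disc B.

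Second, I would relate the frame rigidly attached to disc B to the local orthonormal frame $\{\ee_1,\ee_2\}$ of \eqref{eq:e123}. By construction disc B is tangent to $S^2$ at $\bg(\theta,\be)$, and its in-plane orientation is governed by exactly the same rolling-without-slipping constraint that produced the Baumkuchen angle; hence the infinitesimal rotation of disc B's frame relative to $\{\ee_1,\ee_2\}$ is $\om_{21}=\cos\be\, d\theta$, the one-form appearing in \dthmref{thm:line-int} and \propref{prop:oneform}. Parallel transport of a tangent vector along $\ga$, on the other hand, is by definition the motion with vanishing covariant derivative, i.e. with zero rotation relative to the connection; measured in the frame $\{\ee_1,\ee_2\}$ a parallel vector rotates by $-\om_{12}=-\cos\be\,d\theta$ per unit (since $\om_{12}=-\om_{21}$). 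Integrating, the net rotation of disc B's frame relative to a parallel frame over one period is $\int_0^1\cos\be(t)\,\theta'(t)\,dt=\De_g$.

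Third, I would carefully track signs and orientations: $\De_g$ measures the forced counterclockwise rotation of disc B as seen from $-\bg$ (the side facing the fixed disc), while $\De_{\rm Fou}$ measures the clockwise rotation of the oscillation plane as seen from $+\bg$ (the observer on the earth's surface). Viewing the same plane from opposite sides reverses the sense of rotation, and "counterclockwise from below'' equals "clockwise from above,'' so the two angles carry opposite signs; combined with the fact that they have equal magnitude (both being the holonomy angle of the same loop $\ga$), this gives $\De_{\rm Fou}=-\De_g$. Finally, as a consistency check I would note that specializing to the circle of latitude $\be(t)=\be_0$ reproduces \eqref{eq:exam-iv-geo}, $\De_g=2\pi\cos\be_0$, so $\De_{\rm Fou}=-2\pi\cos\be_0=-2\pi\sin\la$ with $\la=\be_0-\pi/2$ the geographic latitude, which is Foucault's sine law \eqref{eq:fou-sin} up to the clockwise/counterclockwise convention.

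The main obstacle is not any calculation — all the relevant integrals are already done — but the conceptual bookkeeping of orientations: one must be scrupulous about which side of disc B each angle is viewed from, which direction counts as positive for $\theta$, $\vp$, and the external angles, and that the pendulum's inertial plane genuinely corresponds to the Levi-Civita parallel transport rather than to some other connection. Once the dictionary in \tabref{tab:fou} is taken as given and the sign conventions are pinned down, the identity $\De_{\rm Fou}=-\De_g$ is immediate from \dthmref{thm:line-int}.
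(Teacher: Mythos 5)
Your route is the same one the paper takes: the paper's own justification of \propref{prop:fou} consists exactly of the three-step dictionary (disc B $\leftrightarrow$ the observer's compass frame, oscillation plane $\leftrightarrow$ local inertial frame), the observation that $\De_g$ and $\De_{\rm Fou}$ count one and the same relative rotation with opposite orientation conventions (counterclockwise seen from $-\bg$ versus clockwise seen from $+\bg$, cf.\ \figref{fig:geofou} and \tabref{tab:fou}), and the sine-law check via \eqref{eq:lola}. Your proposal merely makes the phrase ``local inertial frame'' precise as Levi-Civita parallel transport and invokes $\om_{21}$ from \propref{prop:oneform}, which is entirely within the paper's toolkit.

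However, your second step does not hold together as written, and since the whole content of this proposition is a sign, that is a real gap. First, \eqref{eq:cmat} gives $\om_{12}=-\cos\be\,d\theta$, so $-\om_{12}=+\cos\be\,d\theta$, not $-\cos\be\,d\theta$; the correct statement is that a parallel-transported direction with angle $\psi$ measured from $\ee_1$ toward $\ee_2$ obeys $d\psi=\om_{21}=+\cos\be\,d\theta$, i.e.\ it turns counterclockwise (seen from $+\bg$) relative to the east/north frame. Second, your claim that disc B's frame rotates relative to $\{\ee_1,\ee_2\}$ by $\om_{21}$ conflicts with the dictionary of \tabref{tab:fou}: once $\De_d$ is discarded, disc B \emph{is} the compass, i.e.\ essentially the frame $\{\ee_1,\ee_2\}$ itself (for $\be\equiv0$ the non-rolling part of disc B turns exactly with $\{\ee_1,\ee_2\}$, whereas $\int\om_{21}=2\pi$). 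Taking your two premises literally, the rotation of disc B relative to the parallel frame comes out as $0$ or as $2\De_g$ (depending on which sign you keep), not $\De_g$, so the chain as stated would yield $\De_{\rm Fou}=0$ or $-2\De_g$. The repaired argument is shorter: relative to the compass the parallel-transported oscillation direction turns counterclockwise (from $+\bg$) by $\int_\ga\om_{21}=\De_g$ by \dthmref{thm:line-int}, hence clockwise by $-\De_g$, which is $\De_{\rm Fou}$ by its defining convention. Your final latitude check is also a symptom of the unresolved sign: with $\la=\be_0-\pi/2$ one has $\sin\la=-\cos\be_0$, so $\De_{\rm Fou}=-2\pi\cos\be_0=+2\pi\sin\la$, i.e.\ Foucault's law on the nose, with no ``up to the clockwise/counterclockwise convention'' adjustment needed.
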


\ms 

To restore the Foucault's sine law, 
we map the local coordinates $\theta$ and $\be$ to 
the longitude $\la\in [0, 2\pi]$ and the latitude 
$\phi\in [-\pi/2, \pi/2]$\,, respectively, 
as in \tabref{tab:fou},  
\begin{align}
\theta=\la+2\pi t\,,\qquad \be=\phi+\frac{\pi}{2}\,,   
\label{eq:lola}
\end{align}
where time parameter $t$ counts days. 
The factor $2\pi t$ means that the earth rotates by $2\pi$
in a day. 
Note that the motion realizing Foucault's pendulum corresponds to 
the example (iv) in \secref{subsec:ex} and \figref{fig:exam-beta}. 
Since $t$ changes from $0$ to $1$ in a day, 
and $\la, \phi$ are constant, we have 
\begin{align}
\De_{\rm Fou}=-\De_g
=-\int^{1}_0 \cos\left(\phi+\frac{\pi}{2}\right) 
\frac{d(\la+2\pi t)}{dt} dt
=2\pi \sin \phi\,.  
\end{align}
This coincides with Foucault's sine law \eqref{eq:fou-sin}\,. 
\begin{table}[htbp]
\centering
\setlength{\tabcolsep}{5pt}
\caption{The correspondence between our model 
and the Foucault's pendulum}
\begin{tabular}{|c|c|c|c|c|c|} \hline
Our model & Foucault's pendulum \\ \hline\hline 
$S^2$ & the earth \\ \hline
$\De_d$ & $0$ \\ \hline
disc B & 
\begin{tabular}{c} 
local frame of observer  \\ 
(a compass below the pendulum)
\end{tabular}
\\ \hline
$t$ & $t$ (day)\\ \hline
$\theta-2\pi t$ & $\la$ (longitude)  \\ \hline
$\beta-\pi/2$ & $\phi$ (latitude)   \\ \hline
$-\De_g$ & $\De_{\text{Fou}}$ \\ \hline
\end{tabular}
\label{tab:fou}
\end{table}%

\subsection*{Generalized Foucault's pendulum}
\propref{prop:fou} implies the generalization of Foucault's pendulum. 
For instance, assume that a pendulum is carried on a big ship, 
which is still sufficient in the Pacific Ocean.  
The ship is sailing in the north and south directions and back to the 
original position $24$ hours later.
In this case, the sea route looks like a wavy line $\ga$ 
as \figref{fig:mainthm}. 
Taking into account \thmref{thm:main}, 
we gain the rotation angle of the pendulum
against a compass on the ship as 
\begin{align}
\De_{\rm Fou}=-\De_g=2\pi-A_+=A_--2\pi 
=\frac{A_--A_+}{2}\,, 
\end{align}
where $A_+$ is the area of the north side enclosed by 
the sea route $\ga$, 
while $A_-$ is that of the south side. 
This naturally explains why the oscillation plane of the pendulum 
on the equator does not change.  
When the sea route $\ga$ coincides with the equator, 
$A_+$ and $A_-$ are the areas of the 
northern and southern hemispheres, respectively.  
Thus, due to $A_+=A_-=2\pi$\,, we get 
\begin{align}
\De_{\rm Fou}=\frac{A_--A_+}{2}=\frac{2\pi-2\pi}{2}=0\,. 
\end{align}
It is emphasized that this is valid even if 
$\ga$ does {\it not} coincide with the equator. 

\ms 

We believe that our model has potential applications for 
the gyroscopes of airplanes or ships
because \propref{prop:fou} serves the theoretical prediction
of the rotation angles of gyroscopes for a given route $\ga$
referred to a compass (see, for instance, \cite{GR,B}). 
Explicitly, plugging the relations \eqref{eq:lola} with
\propref{prop:fou}, we propose the following 
{\it generalized Foucault's sine law}. 
\begin{prop}
Set Foucault's pendulum on a ship floating 
at the longitude $\la(t)\in [0,2\pi]$ and 
the latitude $\phi(t)\in [-\pi/2, \pi/2]$\,, 
where $t$ is a time parameter counting days. 
Suppose that the ship navigates 
quasistatically from $t=0$ to $t=T$\,. 
During $T$ days, the plane of the pendulum oscillation 
rotates clockwise referred to a compass by 
\begin{align}
\De_{\rm Fou}&=\int^{T}_{0} \sin\phi(t) 
\left(\frac{d\la(t)}{dt}+2\pi\right) dt\,. 
\end{align}
\end{prop}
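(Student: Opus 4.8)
The plan is to recognize this "generalized Foucault's sine law'' as a direct corollary of \dthmref{thm:line-int} together with \propref{prop:fou}, so no new geometric input is needed --- only the change of variables \eqref{eq:lola} and a rescaling of the time interval. First I would invoke \propref{prop:fou}, which gives $\De_{\rm Fou}=-\De_g$, so that it suffices to compute $\De_g$ for the motion describing the ship. Next I would substitute the relations \eqref{eq:lola}, namely $\theta(t)=\la(t)+2\pi t$ and $\be(t)=\phi(t)+\pi/2$, into the line-integral formula $\De_g=\int_0^1\cos\be(t)\,\theta'(t)\,dt$ from \dthmref{thm:line-int}; here the only subtlety is that the proposition runs over $t\in[0,T]$ rather than $[0,1]$, which I would handle by rescaling the auxiliary parameter (or simply by noting that \dthmref{thm:line-int} and its proof via the Baumkuchen partition go through verbatim for any compact time interval, since the factor $2\pi$ in front of $t$ already encodes ``one rotation per day'' and the physical condition is that $t$ counts days).

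The computation itself is then routine: using $\cos\be(t)=\cos(\phi(t)+\pi/2)=-\sin\phi(t)$ and $\theta'(t)=\la'(t)+2\pi$, one gets
\begin{align}
\De_g=\int_0^T\cos\be(t)\,\frac{d\theta(t)}{dt}\,dt
=-\int_0^T\sin\phi(t)\left(\frac{d\la(t)}{dt}+2\pi\right)dt\,,
\end{align}
and hence $\De_{\rm Fou}=-\De_g=\int_0^T\sin\phi(t)\bigl(\la'(t)+2\pi\bigr)\,dt$, which is the asserted formula. As a consistency check I would specialize to $\la,\phi$ constant and $T=1$, recovering $\De_{\rm Fou}=2\pi\sin\phi$, the classical sine law \eqref{eq:fou-sin}; and to a route along the equator, $\phi\equiv0$, recovering $\De_{\rm Fou}=0$.

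The only genuine point requiring care --- the ``hard part,'' such as it is --- is justifying that \dthmref{thm:line-int} applies on $[0,T]$ and that the identification $\De_{\rm Fou}=-\De_g$ of \propref{prop:fou} remains valid under the quasistatic navigation hypothesis; both amount to observing that the derivations given earlier never used $[0,1]$ in an essential way beyond providing a compact parameter interval on which $\theta,\phi,\la$ satisfy the Lipschitz and piecewise-differentiability conditions of \remref{rem:lip}. Once that is noted, the statement follows by the substitution above with no further work.
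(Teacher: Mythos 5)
Your proposal is correct and follows essentially the same route as the paper, which obtains the generalized sine law precisely by combining \propref{prop:fou} ($\De_{\rm Fou}=-\De_g$) with the substitution \eqref{eq:lola} into the line-integral formula of \dthmref{thm:line-int}. Your extra remarks on extending the parameter interval from $[0,1]$ to $[0,T]$ and the consistency checks (constant $\la,\phi$ recovering $2\pi\sin\phi$, and $\phi\equiv0$ giving zero) are in line with the paper's discussion and add no divergence in method.
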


\ms 

\begin{remark}
H.~Kamerlingh Onnes also investigated Foucault's pendulums, 
and the result is collected in 
his Ph.D. dissertation \cite{On}, see also \cite{Sch}\,. 
The geometric aspects of Foucault's pendulum are argued   
in \cite{MOWZ, MPM}\,, where the relation to Hopf fibration 
is also mentioned.  
The geometric approach for mechanics 
is addressed in \cite{Mar, MR, BB,Reeb}\,. 
\end{remark}

\subsection{Dirac's monopole potential}
\label{subsec:mono}

The geometric phase $\De_g$ in our model can also be realized by the holonomy of 
the $U(1)$ principal bundle, which is so-called the {\it Dirac's monopole potential}. 

\ms 

We use the following local orthonormal frame given in \eqref{eq:e123},  
\begin{align}
\ee_1
=\begin{pmatrix} -\sin \theta \\ \cos \theta \\ 0  \end{pmatrix}\,, 
\quad 
\ee_2
=\begin{pmatrix} 
\cos\be \cos \theta \\ \cos\be \sin \theta \\ \sin\be  \end{pmatrix}\,, 
\quad 
\ee_3
=\begin{pmatrix}\sin\be \cos\theta \\ \sin\be \sin\theta \\ -\cos\be \end{pmatrix}\,.  
\end{align}
Note that $\ee_3=\bg(\theta, \be)$\,. 
The relations between the polar coordinates \dis{(r, \be, \theta)}
and the cartesian coordinates \dis{(x,y,z)} read 
\begin{align}
x&= r \sin\be \cos\theta \,, \notag \\
y&= r \sin\be \sin\theta \,, \notag \\
z&= -r \cos\be \,. 
\label{eq:polar}
\end{align}

\ms 

\begin{definition}[Dirac's monopole potentials \cite{D, EGH}]
The {\it Dirac's monopole potentials} are vector fields defined by 
\begin{align}
\bA_\pm=\frac{\pm1+\cos\be}{r\sin\be}\ee_1 
=\frac{1}{r(r\pm z)}\begin{pmatrix} -y \\ x \\ 0  \end{pmatrix}\,, 
\label{eq:mono}
\end{align}
where $\bA_+$ is well-defined for $\be\neq 0$ and $z\neq -r$\,, while 
$\bA_-$ is for $\be\neq \pi$ and $z\neq r$\,. 
\end{definition}

\ms 

Here, we have set the magnetic charge to one. 
By definition, $\bA_+$ and $\bA_-$ have stringy singularities 
on $z\leq 0$ and $z\geq 0$ of $z$-axis, respectively. 
The following proposition tells that those potentials describe the magnetic fields
generated by the magnetic monopole at the origin. 
\begin{prop}
\label{prop:vp}
For the vector fields $\bA_\pm$\,, it holds that 
\begin{align}
\nabla\times \bA_+=\frac{1}{r^2}\ee_3 \quad (\be\neq 0)\,, 
\notag \\
\nabla\times \bA_-=\frac{1}{r^2}\ee_3 \quad  (\be\neq \pi)\,. 
\end{align}
\end{prop}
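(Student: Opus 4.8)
The goal is to show $\nabla\times\bA_\pm = r^{-2}\ee_3$ off the respective Dirac strings, and since $\ee_3=\bg=r^{-1}(x,y,z)$ is the outward radial unit vector this is just the Coulomb-type field $r^{-3}(x,y,z)$. My plan is a direct component computation of the curl in Cartesian coordinates, starting from the rational form $\bA_+=\frac{1}{r(r+z)}(-y,x,0)$ with $r=\sqrt{x^2+y^2+z^2}$. I would first cut the work in half: on the overlap of the two charts one has $\bA_+-\bA_-=\nabla(2\theta)$ for the azimuthal angle $\theta$ (since $\ee_1\cdot d\bg=\sin\be\,d\theta$, the difference is pure gauge), so $\nabla\times\bA_+=\nabla\times\bA_-$ there and it suffices to treat $\bA_+$; in any case the $\bA_-$ computation is the verbatim analogue with $r+z$ replaced by $r-z$.

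\textbf{The computation.}
Writing $\bA_+=(-fy,fx,0)$ with $f=[r(r+z)]^{-1}$, the three curl components are
\begin{align*}
(\nabla\times\bA_+)_x &= -x\,\partial_z f\,, \qquad (\nabla\times\bA_+)_y = -y\,\partial_z f\,, \\
(\nabla\times\bA_+)_z &= 2f + x\,\partial_x f + y\,\partial_y f\,.
\end{align*}
The one identity that does the work is $\partial_z\big[r(r+z)\big]=\tfrac{z}{r}(r+z)+r\big(1+\tfrac{z}{r}\big)=\tfrac{(r+z)^2}{r}$, whence $\partial_z f=-r^{-3}$; this already delivers the $x$- and $y$-components as $x/r^3$ and $y/r^3$. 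For the $z$-component I would use $\partial_x r=x/r$ (and the $y$-analogue) together with $x^2+y^2=r^2-z^2=(r-z)(r+z)$ to combine $2f+x\,\partial_x f+y\,\partial_y f$ into a single fraction whose numerator collapses to $z(r+z)$ over $r^3(r+z)$, i.e. $z/r^3$. Collecting the three components gives $\nabla\times\bA_+=r^{-3}(x,y,z)=r^{-2}\ee_3$, valid wherever $r>0$ and $z\neq-r$, i.e. $\be\neq0$; the $\bA_-$ case comes out identically for $\be\neq\pi$.

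\textbf{A cross-check, and the only delicate point.}
As an independent verification I would also apply the standard spherical-coordinate curl formula, taking the honest polar angle to be $\pi-\be$ since $z=-r\cos\be$: because $\bA_+=A\,\ee_1$ is purely azimuthal with $A=\frac{1+\cos\be}{r\sin\be}$ and $rA$ independent of $r$, the $\ee_2$-component of $\nabla\times\bA_+$ vanishes, while its radial component reduces to $\frac{1}{r\sin\be}\,\partial_\be(\sin\be\,A)\,\ee_3$, which after fixing the sign by the orientation convention equals $r^{-2}\ee_3$ --- the same answer. (This also matches the one-form picture: $\bA_+$ restricts on $S^2$ to the connection $(1+\cos\be)\,d\theta$ whose curvature is the area form $\sin\be\,d\theta\wedge d\be$.) There is no genuine obstacle here --- it is a routine differentiation --- but the one thing requiring care, and the reason either of the two elementary routes can slip a sign, is the bookkeeping of domains and orientations: the identity fails on the Dirac strings where $\bA_\pm$ blows up, so the restrictions $\be\neq0$ (for $\bA_+$) and $\be\neq\pi$ (for $\bA_-$) must be kept throughout, and one must stay consistent with the conventions of \eqref{eq:e123}--\eqref{eq:polar}, in particular $z=-r\cos\be$, under which $\ee_3$ is genuinely outward and $\oint_{S^2}(\nabla\times\bA_\pm)\cdot d\vec S=4\pi$, recording a positive unit magnetic charge.
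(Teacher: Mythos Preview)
Your computation is correct: the Cartesian curl of $\bA_+=(-fy,fx,0)$ with $f=[r(r+z)]^{-1}$ reduces cleanly to $r^{-3}(x,y,z)=r^{-2}\ee_3$, and the key identity $\partial_z[r(r+z)]=(r+z)^2/r$ is exactly what makes the first two components immediate. The $z$-component algebra also checks out, and your gauge remark that $\bA_+-\bA_-=2\nabla\theta$ is a legitimate way to halve the work.

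By way of comparison: the paper states this proposition \emph{without proof}, treating it as a standard fact about the Dirac monopole. The closest the paper comes to an argument is a later Remark (``A simple derivation of the Dirac's monopole potential'') which runs the logic in reverse --- starting from the desired field $r^{-2}\ee_3$, applying Stokes' theorem to a spherical cap, and solving for the ansatz $\bA=f(\theta,\be)\ee_1$ to recover \eqref{eq:mono}. That heuristic confirms consistency but is not a proof of the proposition as stated; your direct differentiation is the honest verification the paper omits. Your spherical-coordinate cross-check and the attention to the domain restrictions $\be\neq0,\pi$ are appropriate, and the orientation bookkeeping (with $z=-r\cos\be$ so that $\ee_3$ is the outward normal) is consistent with the paper's conventions in \eqref{eq:e123}--\eqref{eq:polar}.
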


\ms 

Next, let us consider how our model is relating to the monopole 
potentials. 
Recalling the connection matrix \eqref{eq:cmat}, we can write  
\begin{align}
d\theta
=\frac{\om_{31}}{\sin \be} 
= \frac{1}{\sin \be}\, \ee_1\cdot d\bg\,.  
\end{align}
Applying this expression for Prop.\,\ref{prop:gp-reg}, 
we get 
\begin{align}
\De_g
=\lim_{\ep\to0}\int_{\ga(\ep)} \cos\be_\ep\, d\theta
=\lim_{\ep\to0}\int_{\ga(\ep)} \frac{\cos\be_\ep}{\sin\be_\ep}\, 
\ee_1 \cdot d\bg_\ep\,. 
\end{align}
It is noted that $\sin\be_\ep\neq0$ for $\ga(\ep)$\,. 
By definition \eqref{eq:mono}, it holds that  
\begin{align}
\frac{\bA_++\bA_-}{2}
&=\frac{\cos\be}{r\sin\be}\ee_1\,, 
& 
\frac{\bA_+-\bA_-}{2}
&=\frac{1}{r\sin\be}\ee_1 
=\nabla \theta\,.  
\end{align}
Here, the second relation means the the {\it gauge transformation}
between $\bA_+$ and $\bA_-$ on the strip covering the equator,  
which follows from 
the gradient operator in the polar coordinates 
\begin{align}
\nabla=\ee_1\frac{1}{r\sin\be}\frac{\partial }{\partial \theta} 
+\ee_2\frac{1}{r}\frac{\partial }{\partial \be} 
+\ee_3\frac{\partial }{\partial r} \,. 
\end{align}
Using these relations with $r=1$, we obtain 
\begin{align}
\De_g
&=\lim_{\ep\to0}\int_{\ga(\ep)} 
\frac{\bA_++\bA_-}{2} 
\cdot d\bg_\ep
=\lim_{\ep\to0}\int_{\ga(\ep)} 
(\bA_\pm\mp\nabla \theta) \cdot d\bg_\ep\,. 
\end{align}

\ms 

We see that the term stemming from the gauge transformation 
yields the topological number \eqref{eq:top} because of 
\begin{align}
\lim_{\ep\to0}\int_{\ga(\ep)} 
\nabla \theta \cdot d\bg_\ep
=
\lim_{\ep\to0}\int_{\ga(\ep)} d\theta
=\theta(1)-\theta(0)
=2\pi n\,. 
\end{align}

\ms 

In summary, we arrive at the following proposition. 
\begin{prop}
The geometric phase is obtained by the monopole potentials 
defined in \eqref{eq:mono} as 
\begin{align}
\De_g
&=\lim_{\ep\to0}\int_{\ga(\ep)} 
\frac{\bA_++\bA_-}{2} 
\cdot d\bg_\ep
\nln
&=\lim_{\ep\to0}\int_{\ga(\ep)} 
\bA_+\cdot d\bg_\ep-2\pi n  
\nln
&=\lim_{\ep\to0}\int_{\ga(\ep)} 
\bA_- \cdot d\bg_\ep+2\pi n\,,   
\label{eq:Dg-mono}
\end{align}
where $n$ is the topological number given in \eqref{eq:top}\,,
$\bg_\ep$ is the deformed Gauss vector in \eqref{eq:gauss-ep}\,,
and $\ga(\ep)\in S^2-\{(0,0,\pm1)\}$ is the regularized curve
in \eqref{eq:ga-reg}\,. 
\end{prop}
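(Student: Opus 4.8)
The plan is to assemble the identities already derived in this subsection and to pass throughout to the regularized curve $\ga(\ep)$, on which all the vector fields in play are smooth. First I would start from \propref{prop:gp-reg}, which gives $\De_g=\lim_{\ep\to0}\int_{\ga(\ep)}\cos\be_\ep\,d\theta$, and substitute the relation $d\theta=\frac{1}{\sin\be}\,\ee_1\cdot d\bg$, so that along $\ga(\ep)$ the integrand becomes $\frac{\cos\be_\ep}{\sin\be_\ep}\,\ee_1\cdot d\bg_\ep$. Since $\ga(\ep)\subset S^2-\{(0,0,\pm1)\}$ forces $\be_\ep\in[\ep,\pi-\ep]$, we have $\sin\be_\ep\neq0$ all along the path, and the combination $\frac{\cos\be_\ep}{\sin\be_\ep}\,\ee_1$ coincides, at $r=1$, with $\frac{\bA_++\bA_-}{2}$ evaluated along $\ga(\ep)$. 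This already gives the first equality in \eqref{eq:Dg-mono}.

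For the other two equalities I would use the elementary decompositions $\frac{\bA_++\bA_-}{2}=\bA_+-\frac{\bA_+-\bA_-}{2}$ and $\frac{\bA_++\bA_-}{2}=\bA_-+\frac{\bA_+-\bA_-}{2}$, together with the gauge-transformation identity $\frac{\bA_+-\bA_-}{2}=\nabla\theta$ established above. Plugging these into the first equality and using linearity of the line integral reduces the problem to computing $\lim_{\ep\to0}\int_{\ga(\ep)}\nabla\theta\cdot d\bg_\ep$. Since $\nabla\theta\cdot d\bg_\ep=d\theta$ along the curve and the function $\theta(t)$ is untouched by the regularization, this integral equals $\int_0^1\theta'(t)\,dt=\theta(1)-\theta(0)=2\pi n$ for every $\ep$, hence its limit is $2\pi n$. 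Combining the pieces produces the second and third lines of \eqref{eq:Dg-mono}.

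The only place that requires care — and the closest thing to an obstacle here — is ensuring that every substitution is performed on $\ga(\ep)$ rather than on $\ga$: the potentials $\bA_+$ and $\bA_-$ are singular along the negative and positive $z$-axes respectively, so the manipulations above are legitimate precisely because the regularized curve dodges both poles. No additional convergence argument is needed, since the existence of $\lim_{\ep\to0}\int_{\ga(\ep)}\cos\be_\ep\,d\theta$ is already guaranteed by \propref{prop:gp-reg} and \dthmref{thm:line-int}; everything else is linear algebra in the orthonormal frame $\{\ee_1,\ee_2,\ee_3\}$ together with the observation that the $\nabla\theta$ term integrates to the topological number $n$.
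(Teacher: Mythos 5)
Your proposal is correct and follows essentially the same route as the paper: rewrite $\cos\be_\ep\,d\theta$ as $\frac{\cos\be_\ep}{\sin\be_\ep}\,\ee_1\cdot d\bg_\ep$ on the regularized curve, identify this with $\frac{\bA_++\bA_-}{2}$ at $r=1$, and use the gauge relation $\frac{\bA_+-\bA_-}{2}=\nabla\theta$ whose line integral gives the topological number $2\pi n$. No substantive differences from the paper's argument.
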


\ms 

Concluding this subsection, let us confirm the consistency 
of the above proposition with Thm.\,\ref{thm:main}. 
Suppose that $\ga$ is a simple closed curve enclosing
$(0,0,1)$ on the left side as Fig.\,\ref{fig:mainthm}. 
In this case, noting $\ga(\ep)=\ga$ and 
$\partial S_+=-\partial S_-=\ga$\,, we have  
\begin{align}
\De_g
=\lim_{\ep\to0}\int_{\ga(\ep)} 
\frac{\bA_++\bA_-}{2} 
\cdot d\bg_\ep
=\frac{1}{2}\left[\int_{\partial S_+} \bA_+ \cdot d\bg
-\int_{\partial S_-}\bA_- \cdot d\bg \right]\,. 
\end{align}
Since  these terms are equal to the areas $S_\pm$ by 
Stokes' theorem and Prop.\,\ref{prop:vp} with $r=1$\,, 
\begin{align}
\int_{\partial S_\pm} \bA_\pm \cdot d\bg
=\int_{S_\pm} (\nabla\times \bA_\pm) \cdot \ee_3\, dS_\pm
=A_\pm\,, 
\end{align}
we obtain the expected result 
\begin{align}
\De_g=\frac{A_+-A_-}{2}\,.   
\end{align}
This coincides to \eqref{eq:mainthm}
with $(I_+,I_-)=(1,1)$\,. 

\ms 

As another example, 
consider the case $(I_+,I_-)=(0,2)$\,, 
in which the curve $\ga$ does not contain any poles
$(0,0,\pm1)$ on the left side. 
In this case, the topological number vanishes, $n=0$\,. 
Hence, from the second relation in \eqref{eq:Dg-mono}\,,
we get 
\begin{align}
\De_g=\int_{\ga=\partial S_+} \bA_+\cdot d\bg_\ep-0 
=A_+=4\pi-A_-\,. 
\end{align}
Again, this is consistent with \eqref{eq:mainthm}\,. 

\ms 

\begin{remark}[A simple derivation of the Dirac' monopole potential]
Though the Dirac's monopole potentials \eqref{eq:mono}
are well-known in theoretical physics and the theory of 
fiber bundles, it is worth addressing  
the heuristic derivation in our context.  
Let us seek a solution for the equation, 
\begin{align}
\nabla\times \bA=\frac{1}{r^2}\ee_3 
\end{align}
with the following ansatz 
\begin{align}
\bA=f(\theta,\be) \ee_1\,,  
\label{eq:ansatz}
\end{align}
where $f(\theta,\be)$ is a scalar function to be determined. 
Applying Stokes's theorem for the spherical cap $S_+$ covering the north
pole of a sphere with radius $r$, 
we see that  
\begin{align}
\int_{\partial S_+} \bA \cdot \ee_1 ds 
=\int_{S_+} (\nabla\times \bA) \cdot \ee_3\, dS_+\,. 
\end{align}
For $r=1$\,, see Fig.\,\ref{fig:exam-beta}. 
By the ansatz \eqref{eq:ansatz}\,, both hand sides 
are simplified as follows, 
\begin{align}
\text{(LHS)}
&
=f(\theta,\be) \int_{\partial S_+} ds 
=f(\theta,\be)\times \text{(length of the edge of the cap)}\,, 
\nln 
\text{(RHS)}
&
=\frac{1}{r^2} \int_{S_+} dS_+
=\frac{1}{r^2}\times \text{(area of the cap)}\,.  
\end{align}
Equating these, we find the scalar factor as
\begin{align}
f(\theta,\be) 
=\frac{\text{(area of the cap)} }
{r^2\times  \text{(length of the edge of the cap)}}
=\frac{2\pi r^2(1+\cos\be)}{r^2\times 2\pi r\sin\be}\,. 
\end{align}
This restores 
\begin{align}
\bA=\frac{1+\cos\be}{r \sin\be}\ee_1=\bA_+\,. 
\end{align}
The parallel argument for the spherical cap covering the south pole
will give us $\bA_-$\,. 
\end{remark}

\subsection{Berry phases}
\label{subsec:berry}

Let us clarify the relation of our model with 
the {\it Berry phase} \cite{Pan,Lon,han,Berry1,Berry2}, 
which is a typical geometric phase 
in quantum mechanics. 

\ms 

Consider the two-level system with the Hamiltonian, 
\begin{align}
H=\frac{1}{r}\begin{bmatrix} z & x-iy \\ x+iy & -z 
\end{bmatrix}
=\begin{bmatrix} -\cos\be & e^{-i\theta}\sin\be \\ 
e^{i\theta}\sin\be  &  \cos\be
\end{bmatrix}\,, 
\end{align}
where we have used the polar coordinates defined by 
\eqref{eq:polar}\,. 
The eigenvalues of $H$ are $\pm1$\,. 
The eigenvectors with eigenvalue $1$ are given by 
\begin{align}
\ket{\psi_+}&=\frac{1}{\sqrt{2r(r+z)}} 
\begin{bmatrix} r+z \\ x+iy \end{bmatrix}
=\begin{bmatrix}
\sin\frac{\be}{2} \\  e^{i\theta}\cos\frac{\be}{2} 
\end{bmatrix}\qquad 
(z\neq -r\,, \be\neq 0)\,, 
\notag \\
\ket{\psi_-}&=\frac{1}{\sqrt{2r(r-z)}} 
\begin{bmatrix} x-iy \\ r-z \end{bmatrix}
=\begin{bmatrix}
e^{-i\theta}\sin\frac{\be}{2} \\ 
\cos\frac{\be}{2} 
\end{bmatrix}\qquad 
(z\neq r\,, \be\neq \pi)\,. 
\end{align}
These two vectors are only different by the phase factor, 
\begin{align}
\ket{\psi_+}=e^{i\theta}\ket{\psi_-}\,. 
\end{align}
Since the vector $\ket{\psi_+}$ is singular at $z=-r$ or 
$\be=0$\,, 
it is well-defined on $S^2-\{(0,0,-1)\}$\,. 
While, the vector $\ket{\psi_-}$ is singular at $z=r$ or 
$\be=\pi$\,, 
and defined on $S^2-\{(0,0,1)\}$\,. 

\ms 

The {\it Berry connection} is then calculated as 
\begin{align}
\label{eq:berry}
\vev{\psi_+|d|\psi_+}&=\frac{i}{2}(+1+\cos\be)d\theta\qquad 
(\be\neq 0)\,, 
\notag \\
\vev{\psi_-|d|\psi_-}&=\frac{i}{2}(-1+\cos\be)d\theta\qquad 
(\be\neq \pi)\,,  
\end{align}
where the hermitian inner product is defined by 
\begin{align}
\vev{a|b}=\overline{a}_1b_1+\overline{a}_2b_2 
\quad\text{for}\quad 
\ket{a}=\begin{bmatrix} a_1 \\ a_2  \end{bmatrix}, ~
\ket{b}=\begin{bmatrix} b_1 \\ b_2  \end{bmatrix}
\in \CC^2\,.  
\end{align}
These coincide with the Dirac's monopole potentials given in \eqref{eq:mono}\,, 
\begin{align}
\vev{\psi_\pm|d|\psi_\pm}=\frac{i}{2}\bA_\pm\cdot d\bg\,. 
\end{align}

\ms 

Hence, for $\be\neq 0, \pi$\,, it allows us to express 
one-form $\cos\be d\theta$ in terms of the Berry connections, 
\begin{align}
\cos\be d\theta
&= -i \vev{\psi_+|d|\psi_+}-i\vev{\psi_-|d|\psi_-}
\nln
&=-2i \vev{\psi_+|d|\psi_+}-d\theta 
\nln
&=-2i \vev{\psi_-|d|\psi_-}+d\theta\,. 
\end{align}
Noting the topological number stemming from the integral, 
\begin{align}
\lim_{\ep\to 0}\int_{\ga(\ep)}
d\theta =\theta(1)=2\pi n\,,  
\end{align}
we have the following proposition. 

\begin{prop}
The geometric phase is obtained by the Berry connections 
given in \eqref{eq:berry} as 
\begin{align}
\De_g
&=-i \lim_{\ep\to0}\int_{\ga(\ep)} 
\left(\vev{\psi_+|d|\psi_+}+\vev{\psi_-|d|\psi_-}\right)
\nln
&=-2i \lim_{\ep\to0}\int_{\ga(\ep)} 
\vev{\psi_+|d|\psi_+}-2\pi n  
\nln
&=-2i \lim_{\ep\to0}\int_{\ga(\ep)} 
\vev{\psi_-|d|\psi_-}+2\pi n  \,,   
\label{eq:Dg-berry}
\end{align}
where $n$ is the topological number given in \eqref{eq:top}\,,
and $\ga(\ep)\in S^2-\{(0,0,\pm1)\}$ is the regularized curve
in \eqref{eq:ga-reg}\,. 
\end{prop}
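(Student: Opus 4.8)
The plan is to reduce the statement to the line-integral formula of \propref{prop:gp-reg}, $\De_g=\lim_{\ep\to0}\int_{\ga(\ep)}\cos\be_\ep\,d\theta$, and then substitute for the one-form $\cos\be\,d\theta$ the Berry-connection expressions already assembled in the lines immediately preceding the proposition. First I would observe that along the regularized curve $\ga(\ep)$ one has $\be_\ep(t)\in[\ep,\pi-\ep]$, so $\be_\ep\neq0$ and $\be_\ep\neq\pi$ simultaneously; hence both $\ket{\psi_+}$ (singular only at the south pole) and $\ket{\psi_-}$ (singular only at the north pole) are well defined along $\ga(\ep)$, and the identity obtained by adding the two lines of \eqref{eq:berry},
\[
\cos\be_\ep\,d\theta=-i\,\vev{\psi_+|d|\psi_+}-i\,\vev{\psi_-|d|\psi_-},
\]
holds pointwise on $\ga(\ep)$. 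Plugging this into \propref{prop:gp-reg} and passing to the limit yields the first equality of the proposition.

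For the second and third equalities I would use the phase relation $\ket{\psi_+}=e^{i\theta}\ket{\psi_-}$. Differentiating and using $\vev{\psi_-|\psi_-}=1$ gives $\vev{\psi_+|d|\psi_+}=\vev{\psi_-|d|\psi_-}+i\,d\theta$, equivalently the two rewritings $\cos\be\,d\theta=-2i\,\vev{\psi_+|d|\psi_+}-d\theta=-2i\,\vev{\psi_-|d|\psi_-}+d\theta$ displayed just before the proposition. Integrating these along $\ga(\ep)$ and sending $\ep\to0$, the only extra contribution is $\pm\lim_{\ep\to0}\int_{\ga(\ep)}d\theta$, which by the topological condition \eqref{eq:top} together with $\theta(0)=0$ equals $\pm\bigl(\theta(1)-\theta(0)\bigr)=\pm2\pi n$. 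Collecting terms produces the remaining two expressions.

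The computation is essentially bookkeeping, and I expect no genuine obstacle. The one point that deserves a sentence of care is that $\vev{\psi_+|d|\psi_+}$ and $\vev{\psi_-|d|\psi_-}$ are \emph{individually} singular at the poles $(0,0,\mp1)$, so the identities above are only valid away from them; this is precisely why the regularization $\ga(\ep)$, which by construction dodges both poles (cf.\ \defref{def:ga-reg}), is indispensable. It keeps every term finite, and \propref{prop:gp-reg} already guarantees that the $\ep\to0$ limit of the left-hand side exists and equals $\De_g$, so no further analytic subtlety enters.
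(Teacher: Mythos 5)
Your proposal is correct and follows essentially the same route as the paper: reduce to \propref{prop:gp-reg}, substitute the relations $\cos\be\,d\theta=-i\vev{\psi_+|d|\psi_+}-i\vev{\psi_-|d|\psi_-}=-2i\vev{\psi_\pm|d|\psi_\pm}\mp d\theta$ along the regularized curve, and evaluate $\lim_{\ep\to0}\int_{\ga(\ep)}d\theta=\theta(1)-\theta(0)=2\pi n$. Deriving the $\pm d\theta$ shift from the phase relation $\ket{\psi_+}=e^{i\theta}\ket{\psi_-}$ rather than from the explicit formulas \eqref{eq:berry} is an equivalent, equally valid bookkeeping step, and your remark on why $\ga(\ep)$ keeps both connections nonsingular matches the paper's reasoning.
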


Hence, the geometric phase $\De_g$ of our model 
could be identified with the Berry phase for the two-level
system. More precisely, the Gauss vector $\bg$ in 
$S^2-\{(0,0,\pm1)\}$ corresponds to the eigenstates
$\ket{\psi_\pm}$ in the {\it Bloch sphere}, 
and one-form $\cos\be d\theta$ can be interpreted 
as the Berry connection $\vev{\psi_\pm |d|\psi_\pm}$\,. 
See \tabref{tab:berry}.

\begin{table}[htbp]
\centering
\setlength{\tabcolsep}{5pt}
\caption{The correspondence between our model 
and the Berry phase}
\begin{tabular}{|c|c|} \hline
Our model & Berry phase \\ \hline\hline 
$S^2$ & Bloch sphere \\ \hline
$\bg \in 
S^2-{(0,0,\pm1)}$ 
& $\ket{\psi_\pm}\in \CC^2$ \\ \hline
$\frac{i}{2}(\pm1+\cos\be) d\theta$ & 
$\vev{\psi_\pm |d|\psi_\pm}$
\\ \hline
\end{tabular}
\label{tab:berry}
\end{table}%

\section{Concluding remarks}
\label{sec:concl}
\setcounter{equation}{0}

In this article, we have focused on a simple kinematic
toy model exhibiting the geometric phases. 
That is, a rotating disc around a fixed disc without
slipping on the edge. 
The question is how much the moving disc has rotated. 
The total rotation angle consists of the dynamical phase $\De_d$ 
and the geometric phase $\De_g$\,. 
The latter is given by \thmref{thm:main}, 
and the answer to the question is given in \eqref{eq:ans}\,.  
To verify these claims, 
the Baumkuchen lemma\,\lemref{lem:bk} has played a crucial role. 
Furthermore, with the use of the Gauss vector \eqref{eq:gauss_a} 
and the regularized trajectory \eqref{eq:ga-reg}\,, 
we have obtained the main theorem \thmref{thm:main}. 
In \secref{sec:ex}\,, we have demonstrated how our theory 
fits some concrete cases. 

\ms 

In \secref{sec:com}\,, we have discussed the important models 
sharing the common structure, which include
Foucault's pendulum, Dirac's monopole potentials,
and Berry phases.   
This fact implies that the underlying mathematical structure
is the {\it Hopf fibration,}
\begin{align}
S^1 \hookrightarrow S^3 \xrightarrow{\pi} S^2\,, 
\end{align}
which is the principal fiber bundle equipped with  
$U(1)$ fiber embedded in the total space $S^3$
and the projection called the {\it Hopf map}
$\pi: S^3 \to S^2$\,.  
In this picture, the geometric phase $\De_g$ will be given by 
the holonomy of Hopf fibration. 
We shall report this aspect in the near future.

\section*{Acknowledgment}

We appreciate 
the Osaka Central Advanced Mathematical Institute, Osaka Metropolitan University, and
the Graduate School of Mathematics, Nagoya University 
as the hosts of 
the 32nd Japan Mathematics Contest and 
the 25th Japan Junior Mathematics Contest 2022, in which 
we have proposed our model as one of the problems \cite{JMC}.
We are grateful to our colleagues, 
Professors
Yoshiyuki Koga, 
Mitsutaka Kumakura, 
Yuki Sato,  
and 
Hiroshi Wakui
for their communications with the subject of this article. 
TM also thanks 
Professor Yuji Satoh
for his careful reading of 
the manuscript and valuable comments. 
The work of TM was supported by JSPS KAKENHI Grant Number JP19K03421 and JP24K06665. 

\appendix 
\renewcommand{\theequation}{\Alph{section}.\arabic{equation}}

\section{Alternative proof of Theorem \ref{thm:main}}
\label{app:alt-proof}
\setcounter{equation}{0}

Theorem \ref{thm:main} can also be proved by directly 
evaluating the right hand side of 
the line integral \eqref{eq:gp-reg}\,,  
\begin{align}
\De_g
=\lim_{\ep\to0}\int_{\ga(\ep)} \cos\be_\ep d\theta\,. 
\end{align}
Here, we only consider the case that $\ga$ is simple and 
surrounding the north pole $(0,0,1)$ on the left side, that is,
\begin{align}
(I_+, I_-)=(1,1)\,. 
\end{align}
The proof for the other cases is almost parallel.  
In this case, for sufficiently small $\ep>0$\,, 
we may assume that $\ga(\ep)=\ga$\,. 
Thus, it is enough to evaluate 
\begin{align}
\De_g
=\int_{\ga} \cos\be d\theta\,. 
\end{align}

\ms 

Because the local coordinate $(\theta, \be)$
is degenerated at $\be=0, \pi$\,, 
we need to remove these poles to apply 
the Stokes' theorem.  
For this purpose, we consider the sufficiently small disc 
$\de$ covering the north pole and decompose the integral
contour as follows (see Fig.\,\ref{fig:stokes}), 
\begin{align}
\int_{\ga} \cos\be d\theta
=\int_{\ga-\partial \de} \cos\be\, d\theta
+\int_{\partial \de} \cos\be\, d\theta\,. 
\label{eq:delta}
\end{align}
\begin{figure}[htbp]
\centering
\includegraphics[width=8cm]{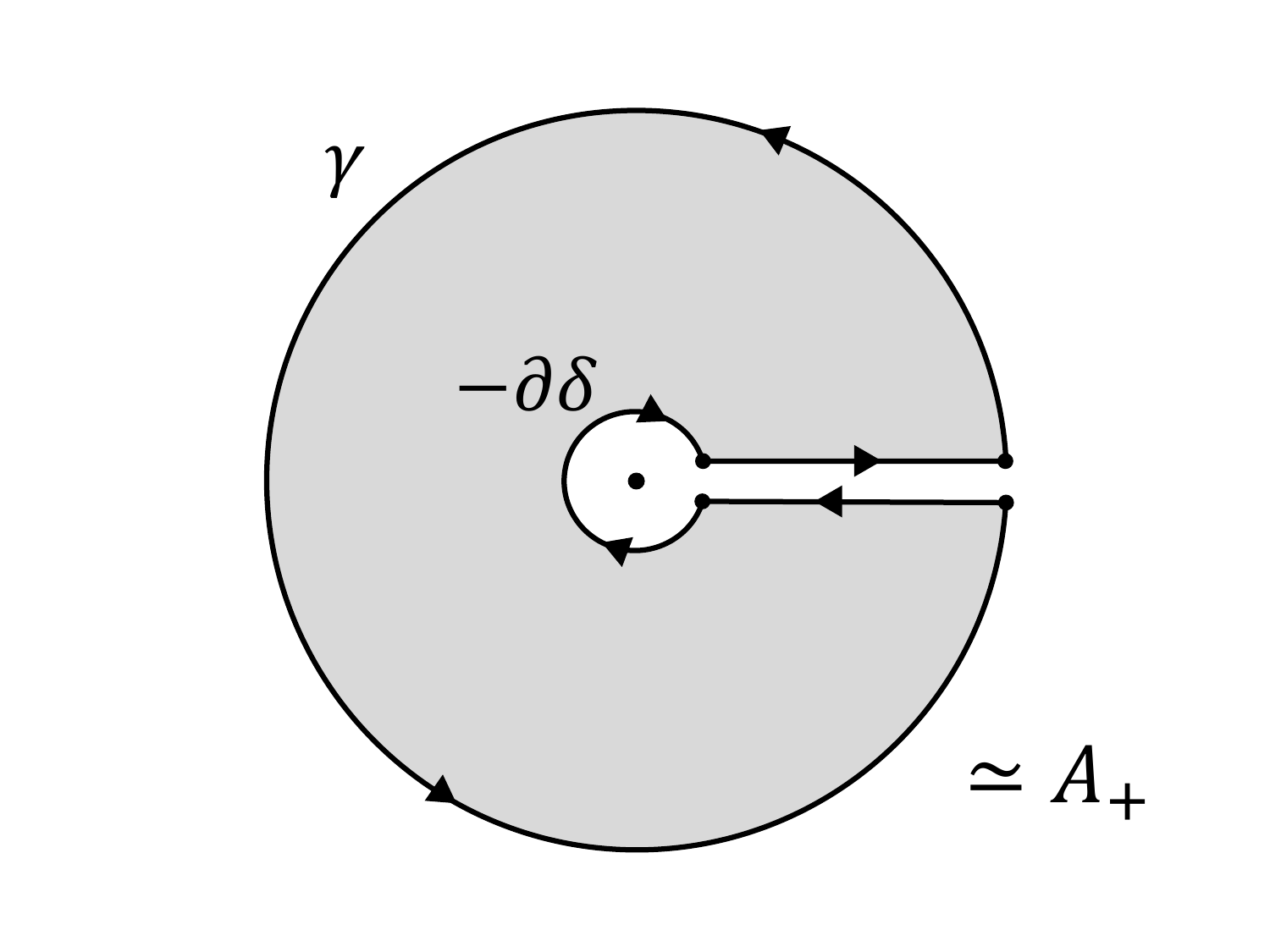}
\caption{The contour $\ga-\partial \de$\,. The area in gray 
approaches to $A_+$ as $\de$ shrinks to the north pole. }
\label{fig:stokes}
\end{figure}%

\ms 

The first terms on the right hand side 
can be evaluated by the Stokes' theorem as 
\begin{align}
\int_{\ga-\partial \de} \cos\be\, d\theta
=\int_{S_+-\de}d(\cos\be\, d\theta) 
=\int_{S_+-\de} \sin\be \,  d\theta\wedge d\be\,. 
\end{align}
This gives the area of $S_+-\de$ 
and reduces $A_+$ 
as $\de$ retracts to $(0,0,1)$\,. Hence, it holds that 
\begin{align}
\int_{\ga-\partial \de} \cos\be\, d\theta
\simeq A_+ \qquad \text{as} \qquad \de\to (0,0,1)\,. 
\end{align}
%
%
On the other hand, the second term is calculated as 
\begin{align}
\int_{\partial \de} \cos\be\, d\theta
\simeq - \int_{\partial \de}  d\theta=-2\pi\,,  
\end{align}
because $\be\simeq \pi$ when $\de$ is sufficiently small 
region around $(0,0,1)$\,.

\ms 

Since the decomposition \eqref{eq:delta} is independent of 
the size $\de$\,, 
it turns out to be 
\begin{align}
\int_{\ga} \cos\be d\theta
=A_+-2\pi\,. 
\end{align}
This is the desired result for the present case, 
\begin{align}
\De_g=A_+-2\pi I_+
\qquad \text{with}\qquad 
I_+=1\,. 
\end{align}

\section{Baumkuchen angle and the geodesic curvature}
\label{app:baum}
\setcounter{equation}{0}


It is worth clarifying the relation between Baumkuchen lemma \ref{lem:bk} and the geodesic curvature in \defref{def:curv-def}\,. 
For this purpose, we consider a ribbon-like neighborhood of the curve $\ga$. 
First, let us introduce a curve $\gamma_q$, which is locally parallel to $\gamma$ but separated by a small distance $q$ in the normal direction. 
Namely, we define
\begin{equation}
\gamma_q=\{\, g(s)-q\nu(s)\in\mathbb{R}^3 
\mid s\in[0,L] \,\}\subset\mathbb{R}^3,
\label{eq:ga-q}
\end{equation}
where $L$ is the length of $\gamma$. 
Here $q$ represents a small displacement parameter in the normal direction. 
Note that $\gamma_q$ is regarded as a curve in a ribbon-like neighborhood of $\gamma$, not necessarily lying on $S^2$. 
In the case of $S^2$, this expression corresponds to the first-order approximation 
of the exact spherical formula $(\cos q)g(s)-(\sin q)\nu(s)$.
We thank the referee for pointing this out.
Then, the length of the curve $\ga_q$\,, 
which we denote by $L_q$\,,  is given by 
\begin{align}
L_q
= \int_0^L |\bg'(s)-q\bnu'(s)| ds \,. 
\label{eq:lq}
\end{align}
Note that $L_0=L$ since $\ga_0=\ga$\,.

\begin{prop}
\label{prop:baum-geo}
Suppose that the curve $\ga$ is smooth. Then, it holds that 
\begin{align}
\lim_{q\to 0} \frac{L_q-L_0}{q} 
= \int_0^L \ka_g(s) ds \,. 
\label{eq:baum-geo}
\end{align}
\end{prop}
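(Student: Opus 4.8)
The plan is to prove \eqref{eq:baum-geo} by expanding the integrand of \eqref{eq:lq} to first order in $q$ about $q=0$, the control of the error term being supplied by the smoothness of $\ga$ together with the compactness of $[0,L]$. I use throughout that the normal vector admits the global expression $\bnu(s)=\bg(s)\times\bg'(s)$ (this follows from \eqref{eq:g-prime} and \eqref{eq:nu}), so that $\bnu$ is as smooth as $\bg$; in particular $\ka_g(s)=-\bg'(s)\cdot\bnu'(s)$ (see \eqref{eq:curv-def}) and $|\bnu'(s)|$ are continuous, hence bounded on the compact interval $[0,L]$, say by a constant $M$.

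First I would square the integrand of \eqref{eq:lq}. Using $|\bg'(s)|=1$,
\begin{align}
|\bg'(s)-q\,\bnu'(s)|^{2}
=1-2q\,\bg'(s)\cdot\bnu'(s)+q^{2}|\bnu'(s)|^{2}
=1+2q\,\ka_g(s)+q^{2}|\bnu'(s)|^{2}.
\end{align}
There is a $q_{0}>0$ such that, for all $|q|\le q_{0}$ and all $s\in[0,L]$, the right-hand side lies in $[\tfrac12,\tfrac32]$; applying Taylor's theorem to the smooth function $u\mapsto\sqrt{1+u}$ on a fixed neighbourhood of $0$ then produces a decomposition
\begin{align}
|\bg'(s)-q\,\bnu'(s)|=1+q\,\ka_g(s)+q^{2}R(s,q),
\qquad
\sup_{s\in[0,L],\ |q|\le q_{0}}|R(s,q)|\le C,
\end{align}
with $C$ depending only on $M$. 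Integrating over $s\in[0,L]$ and recalling $L_{0}=L$ gives
\begin{align}
L_{q}=L+q\int_{0}^{L}\ka_g(s)\,ds+q^{2}\int_{0}^{L}R(s,q)\,ds,
\end{align}
whence
\begin{align}
\frac{L_{q}-L_{0}}{q}=\int_{0}^{L}\ka_g(s)\,ds+q\int_{0}^{L}R(s,q)\,ds,
\end{align}
and the last term is bounded in absolute value by $CL\,|q|$, which tends to $0$ as $q\to0$. This is exactly \eqref{eq:baum-geo}.

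The point needing care --- the main obstacle --- is the uniformity in $s$ of the remainder $R(s,q)$, and this is precisely where the smoothness of $\ga$ is used: it guarantees that $\bg''$, hence $\bnu'$ and $\ka_g$, are bounded on the compact interval $[0,L]$. Equivalently one may phrase the argument as differentiation under the integral sign: for $|q|\le q_{0}$ the integrand of \eqref{eq:lq} has $q$-derivative $-(\bg'(s)-q\,\bnu'(s))\cdot\bnu'(s)\big/|\bg'(s)-q\,\bnu'(s)|$, which is jointly continuous and bounded on $[0,L]\times[-q_{0},q_{0}]$, so $q\mapsto L_{q}$ is differentiable at $q=0$ with $\frac{d}{dq}\Big|_{q=0}L_{q}=\int_{0}^{L}\bigl(-\bg'(s)\cdot\bnu'(s)\bigr)\,ds=\int_{0}^{L}\ka_g(s)\,ds$, again yielding \eqref{eq:baum-geo}.
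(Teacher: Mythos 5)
Your proof is correct and follows essentially the same route as the paper: a first-order expansion in $q$ of the integrand of \eqref{eq:lq} (equivalently, differentiation under the integral sign at $q=0$), followed by the identity $\ka_g(s)=-\bg'(s)\cdot\bnu'(s)$. The only difference is that you spell out the uniform control of the remainder on the compact interval $[0,L]$, a justification the paper's computation leaves implicit.
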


\ms 

\begin{proof}
Substituting \eqref{eq:lq} into the left hand side, it is calculated as 
\begin{align}
\lim_{q\to 0} \frac{L_q-L_0}{q}\Big|_{q=0}
&= \frac{d}{dq} L_q \big|_{q=0} 
\notag \\
&= \frac{d}{dq} \int_0^L |\bg'(s)-q\bnu'(s)| ds \Big|_{q=0} 
\notag \\
&= \int_0^L \frac{\partial }{\partial q} |\bg'(s)-q\bnu'(s)| \Big|_{q=0} ds  
\notag \\
&= \int_0^L \frac{-\bg'(s)\cdot \bnu'(s)+q \bnu'(s)\cdot \bnu'(s)}
{|\bg'(s)+q\bnu'(s)|}  \Big|_{q=0} ds  
\notag \\
&= - \int_0^L \bg'(s)\cdot \bnu'(s)  ds  \,. 
\end{align}
where we have used \dis{|\bg'(s)|=1} in the last equality. 
Due to the relation \eqref{eq:gvgv} and the definition of $\ka_g$ 
in \eqref{eq:curv-def}\,, we obtain the desired relation  
\begin{align}
\lim_{q\to 0} \frac{L_q-L_0}{q}\Big|_{q=0}
=\int_0^L \bg''(s)\cdot \bnu(s)  ds  
=\int_0^L \ka_g(s)  ds   \,. 
\end{align}
This proves the proposition. 
\end{proof}

\ms 

\begin{remark}
The relation \eqref{eq:baum-geo} could be regarded as 
the generalization of Baumkuhen lemma \ref{lem:bk}\,.  
To compare with \lemref{lem:bk}\,, suppose that 
the curve $\ga$ is a part of a circle, namely an arc, 
and the distance between two arcs $q$ 
is smaller than the radius of the circle, $r$\,. 
Then, \eqref{eq:baum-geo} becomes 
\begin{align}
\frac{L_q-L_0}{q}=\ka_g L  
\label{eq:baum-geo-ex}
\end{align}
as $\ka_g$ is constant. 
Notice that, by Baumkuchen lemma \ref{lem:bk}\,, the left hand side is
equal to the angle $\theta_B$ of 
a piece of  Baumkuchen, {\it i.e.}\,,
\begin{align}
\frac{L_q-L_0}{q}=\frac{(r-q)\theta_B-r \theta_B}{q}
=-\theta_B\,. 
\end{align}
As $L=L_0 =r \theta_B$ on the right hand side\,, 
the equation \eqref{eq:baum-geo-ex} turns out to be  
\begin{align}
\ka_g=-\frac{1}{r}\,. 
\end{align}
This is indeed the curvature of a circle with radius $r$ 
up to the sign. This discrepancy is just owing to our conventions for  
the Baumkuchen angle $\theta_B$ and the geodesic curvature 
$\ka_g$\,.  
Hence, \propref{prop:baum-geo} is the natural generalization
of Baumkuchen lemma \ref{lem:bk}\,. 
\end{remark}

\ms 

\begin{remark}
The relation \eqref{eq:baum-geo} holds not only  for $\ga\subset S^2$ 
but also for an arbitrary smooth curve 
in a two-dimensional smooth surface in $\RR^3$\,. 
\end{remark}


\ms 

At the end of this section, let us apply \propref{prop:baum-geo}
to \thmref{thm:De-twoint}\,. 

\begin{cor}
\label{cor:fou}
Suppose that $\ga$ is a simple closed curve and smooth, and 
winding the north pole $(0,0,1)$ on the left side, {\it i.e.}\,, 
$(I_+, I_-)=(1,1)$\,. Then, the geometric phase is given by  
\begin{align}
\De_g=\lim_{q\to 0} \frac{L_0-L_q}{q} \,. 
\end{align}
Furthermore, when $\be$ is constant, it reduces to 
\begin{align}
\De_g
=2\pi \cos\be\,. 
\label{eq:Dg-cos}
\end{align}
\end{cor}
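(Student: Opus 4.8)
The plan is to combine \thmref{thm:De-twoint} with \propref{prop:baum-geo}, exploiting the hypothesis $(I_+,I_-)=(1,1)$ to kill the $d\vp$ term. First I would observe that since $\ga$ is simple and smooth and winds the north pole on the left with $(I_+,I_-)=(1,1)$, for all sufficiently small $\ep$ the regularized curve $\ga(\ep)$ coincides with $\ga$ itself; it has no cusps, so $n(\ep)=0$ and all external angles $\al_i$ vanish. By \lemref{lem:vp}, the first term becomes
\begin{align}
\lim_{\ep\to0}\int_{\ga(\ep)} d\vp=-\pi(I_+-I_-)=0\,.
\end{align}
Hence \eqref{eq:De-twoint} collapses to $\De_g=-\int_{\ga}\ka_g\, ds$. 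Now I would invoke \propref{prop:baum-geo}, which states $\int_0^L \ka_g(s)\,ds=\lim_{q\to0}(L_q-L_0)/q$, so that
\begin{align}
\De_g=-\int_{\ga}\ka_g\,ds=-\lim_{q\to0}\frac{L_q-L_0}{q}=\lim_{q\to0}\frac{L_0-L_q}{q}\,,
\end{align}
which is the first assertion.

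For the second assertion, I would specialize to $\be(t)\equiv\be$ constant (with the motion $\theta(t)=2\pi t$, so the Gauss curve is the circle of latitude, as in example (iv)). Two routes are available. The direct one: compute $\De_g$ from \dthmref{thm:line-int}, $\De_g=\int_0^1\cos\be\,(d\theta/dt)\,dt=2\pi\cos\be$, which is immediate. The route that keeps the corollary self-contained via its own statement: the curve $\ga$ is a circle of latitude, a parallel at polar angle $\be$; its Euclidean radius in $\RR^3$ is $\sin\be$ and its length is $L_0=2\pi\sin\be$. The parallel curve $\ga_q$ obtained by moving inward by geodesic distance $q$ along $S^2$ is the circle of latitude at polar angle $\be-q$ (for the orientation in question), of length $L_q=2\pi\sin(\be-q)$. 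Then
\begin{align}
\lim_{q\to0}\frac{L_0-L_q}{q}
=\lim_{q\to0}\frac{2\pi\sin\be-2\pi\sin(\be-q)}{q}
=2\pi\cos\be\,,
\end{align}
which is \eqref{eq:Dg-cos}.

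The only genuinely delicate point is the first step — verifying that $\ga(\ep)=\ga$ with no cusps and no enclosed pole discrepancy, so that \lemref{lem:vp} really does give $0$ and \thmref{thm:De-twoint} reduces cleanly to the geodesic-curvature integral. The smoothness and simplicity hypotheses on $\ga$, together with $(I_+,I_-)=(1,1)$, are exactly what guarantee this; the rest is bookkeeping. A secondary subtlety in the constant-$\be$ computation is getting the sign/orientation of the inward normal $\bnu$ right so that $\ga_q$ is the latitude circle at $\be-q$ rather than $\be+q$; but since we only need the derivative at $q=0$ and $\frac{d}{dq}\sin(\be\mp q)|_{q=0}=\mp\cos\be$, consistency with $\De_g=+2\pi\cos\be$ (already known from \dthmref{thm:line-int}) fixes the convention, matching the sign remark already made after \propref{prop:baum-geo}.
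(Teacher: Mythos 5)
Your proposal is correct and follows essentially the same route as the paper: kill the $\int d\vp$ and cusp terms using smoothness, simplicity and $(I_+,I_-)=(1,1)$ via \lemref{lem:vp}, then convert $-\int_\ga \ka_g\,ds$ into $\lim_{q\to0}(L_0-L_q)/q$ by \propref{prop:baum-geo}. The only cosmetic difference is in the constant-$\be$ case, where the paper computes $L_q=2\pi(\sin\be-q\cos\be)$ directly from the definition \eqref{eq:ga-q}--\eqref{eq:lq}, while you use the geodesic parallel $L_q=2\pi\sin(\be-q)$; the two agree to first order in $q$, so the limit $2\pi\cos\be$ is the same.
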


\begin{proof}
Plugging \eqref{eq:baum-geo} with \eqref{eq:De-twoint}\,, 
we have 
\begin{align}
\De_g
=\lim_{\ep\to0} \int_{\ga(\ep)} d\vp 
- \lim_{q\to 0} \frac{L_q-L_0}{q} 
- \sum_{i=1}^n \al_i \,. 
\end{align}
When $\ga$ is a simple closed curve and smooth, and $(I_+, I_-)=(1,1)$\,,
the first and third terms vanish. 
Then, it becomes  
\begin{align}
\De_g=\lim_{q\to 0} \frac{L_0-L_q}{q} \,. 
\end{align}
In addition, when $\be$ is constant, $\ga$ is a circle 
of radius $\sin\be-q\cos\be$\,. 
In this case, we get 
\begin{align}
L_q=2\pi (\sin\be-q\cos\be)\,. 
\end{align}
Substituting this into the above, we get 
\begin{align}
\De_g
=\lim_{q\to 0} \frac{L_0-L_q}{q} 
=2\pi \cos\be\,. 
\end{align}
\end{proof}

\ms 

It is noted that the expression \eqref{eq:Dg-cos}
coincides with \eqref{eq:exam-iv-geo} of the example (iv) 
that we have discussed in \secref{sec:ex}\,. 
Since the example (iv) corresponds to Foucault's pendulum 
discussed in \secref{subsec:fou}\,, 
\corref{cor:fou} also provides a simple proof 
of the Foucault's sine law \eqref{eq:fou-sin}
with the appropriate identifications in \tabref{tab:fou}\,.


\end{document}